\documentclass[11pt]{article}
\usepackage[T1]{fontenc}
\usepackage[utf8]{inputenc}
\usepackage[
letterpaper,
margin=1in
]{geometry}
\usepackage{faktor}
\usepackage{mathtools}
\usepackage{graphicx} 
\usepackage{orcidlink}
\usepackage{amsfonts}
\usepackage{amsmath}
\usepackage{braket}
\usepackage{cleveref}
\usepackage{amsthm}
\usepackage{thmtools}
\usepackage{soul,xcolor}
\usepackage{mathtools}
\usepackage{amssymb}
\usepackage{dsfont}
\usepackage{relsize}

\newcommand{\kg}[1]{}
\newcommand{\knote}[1]{}
\newcommand{\joon}[1]{}
\newcommand{\aer}[1]{}

\usepackage{tikz}
\usetikzlibrary{positioning,fit,decorations.pathreplacing,shapes.geometric, shadows}
\usetikzlibrary{arrows.meta}

\newtheorem{theorem}{Theorem}[section]
\newtheorem*{theorem*}{Theorem}
\newtheorem{definition}[theorem]{Definition}

\newtheorem{proposition}[theorem]{Proposition}
\newtheorem{lemma}[theorem]{Lemma}

\newtheorem{corollary}[theorem]{Corollary}
\newtheorem*{corollary*}{Corollary}

\DeclareMathOperator{\corank}{corank}

\newcommand\PP{\mathbb{P}}
\newcommand\EE{\mathbb{E}}
\newcommand\FF{\mathbb{F}}

\DeclareMathOperator{\id}{id}
\DeclareMathOperator{\GL}{GL}
\newcommand\bc[1]{\left({#1}\right)}
\newcommand\brk[1]{\left[{#1}\right]}

\newcommand\cbc[1]{\left\{#1\right\}}

\newcommand\fs{Q}

\makeatletter
\newcommand*{\centerfloat}{%
  \parindent \z@
  \leftskip \z@ \@plus 1fil \@minus \textwidth
  \rightskip\leftskip
  \parfillskip \z@skip}
\makeatother

\author{
Kenneth Goodenough\thanks{Naturwissenschaftlich-Technische Fakult\"{a}t, Universit\"{a}t Siegen, Walter-Flex-Stra\ss e 3, 57068 Siegen, Germany} \,\orcidlink{0000-0002-1761-0038}
\and
Andrew Landahl\thanks{
             Microsystems Engineering, Science and Applications,
             Sandia National Laboratories,
             Albuquerque, NM, 87185, USA}\, 
             \thanks{QNM-I, Center for Quantum Information and Control, 
             Department of Physics and Astronomy,
             University of New Mexico,
             Albuquerque, NM, 87131, USA}\, \orcidlink{0009-0000-8413-0854}
\and 
Joon Lee\thanks{LIACS, Leiden University} \,\orcidlink{0000-0001-8628-9922}
\and
Antonio Russo\thanks{Center for Computing Research,
             Sandia National Laboratories,
             Albuquerque, NM, 87185, USA}\, \orcidlink{0000-0003-3742-343X}
\and 
Kevin Thompson\footnotemark[5] \,\orcidlink{0000-0001-5669-2200}
}

\title{Optimal Fusion Strategies for Quantum Computation}

\begin{document}
\maketitle
\abstract{Logical fusions are important for a number of tasks in quantum information, such as quantum error correction and quantum repeaters. In the photonic setting one must contend with the fact that physical fusions can fail, i.e. individual qubits are measured in a user-controlled product state basis with some probability, which can lead to a failure on the logical level. The choice of failure basis of each qubit is known as a fusion strategy, and finding good fusion strategies is important for optimizing performance of fusion-based quantum computation. Here we provide a complete characterization when $k=1$ qubits are encoded, and in particular characterize those codes and fusion strategies such that all but one physical fusion can fail, i.e.~\emph{perfect fusion strategies}. In doing so, we recover previously known perfect fusion strategies, and find perfect fusion strategies for quantum parity-check codes, answering an open question. We furthermore show that perfect fusion strategies are generic: random $[[n, 1, d]]$ graph codes admit a perfect fusion strategy with probability exponentially close to $1$.  Additionally we motivate the study of a new graph parameter, namely the maximum degree of a graph at a given vertex taken over all LC-equivalent graphs, by giving a new operationally meaningful interpretation of it.  }
\section{Introduction}
Fusions \cite{browne2005resource, bartolucci2023fusion} form a key component in a variety of contexts: teleportation-based quantum error correction\cite{ewert2016ultrafast}, quantum repeaters~\cite{azuma2015all}, and fusion-based quantum computation~\cite{bartolucci2023fusion} (FBQC).  Fusions can be roughly described as quantum measurements which either ``succeed’’ and implement a Bell state measurement on the underlying qubits, or ``fail’’ and implement a product state measurement on the underlying qubits, each with equal probability.  Using linear optics the “failure basis” can be controlled, i.e.~we can choose for each fusion whether the two photons which are fused are measured in the $X$, $Y$ or $Z$ basis. The failure probability can however be made arbitrarily small~\cite{grice2011arbitrarily, ewert20143, zaidi2013beating} with “boosting”, but the resource requirements for the boosting procedure often lead to poor tradeoffs in overall resource requirements for quantum applications~\cite{li2015resource}.  The failure probability without boosting is generally $1/2$.

Logical fusions have been proposed as a solution to achieve success probability close to $1$ without boosting \cite{bartolucci2023fusion}.  In this context qubits are encoded into some fixed quantum error correcting code (QECC) and physical fusions are implemented between corresponding qubits with the hope of generating an overall logical fusion between the encoded qubits (see \Cref{fig:teleport}).  Each physical fusion either succeeds or fails and the user gets access to measurements of different observables depending on the success/fail outcomes.  A logical fusion is successful if one can infer encoded observables $\bar{X}\bar{X}$ and $\bar{Z}\bar{Z}$ from measurement data.  There is an important design choice here in that one has to choose both the QECC and the set of failure bases $\fs_i$ for each qubit $i$.  For a fixed code we will refer to the choice of failure observables, $\otimes_i \fs_i$ as the fusion strategy. A judicious choice of code and fusion strategy can simplify implementation while improving resilience to errors and fusion failures.  In this paper we will focus on the noiseless setting where fusion failure is the only source of error. 

There is a great deal of freedom in designing how physical fusions fail and what we do with the failure information.  In addition to being able to specify the failure basis, one could also implement adaptive fusion strategies where physical fusions are implemented one at a time and the fusion strategy is altered as a result of the success/failure outcomes \cite{reiss2026optimal} (so-called "adaptive" fusion strategies).  We will consider only fusion strategies where both qubits in a physical fusion are measured in the $X$, $Y$ or $Z$ basis and where the fusion strategy is set before any fusions are implemented, i.e. ``non-adaptive'' strategies.  Adaptive strategies can be strictly better than their adaptive counterparts \cite{reiss2026optimal}, but non-adaptive strategies offer numerous practical advantages over adaptive counterparts including reduced implementation complexity, reduced latency requirements for control/switching, and reduced decoherence/loss.  Indeed, photons must idle in adaptive strategies while classical control adjusts the failure bases, which can be a significant source of error.

There is a very natural question in this setting: what is the minimum number of physical failures that can be tolerated before logical fusion failure? Interestingly, it is well known \cite{reiss2026optimal} that for specific choices of codes one requires only a single successful fusion to guarantee a successful logical fusion.  This seems somewhat surprising when we consider an exemplar problem (\Cref{fig:teleport}). Imagine that we have a single encoded qubit $\ket{\psi}_L =\alpha \ket{\bar{0}} +\beta \ket{\bar{1}}$.  Our goal is to use an encoded Bell state $\ket{\Phi}_{L_A, L_B}\propto  \ket{\bar{0}}_{L_A} \ket{\bar{0}}_{L_B} +\ket{\bar{1}}_{L_A} \ket{\bar{1}}_{L_B}$ to teleport the encoded qubit from the subsystem $L$ to the subsystem $L_B$.  As in standard quantum teleportation, we can accomplish this by doing a logical fusion on the subsystems $L$ and $L_A$, which is implemented through fusions between corresponding qubits.  It is clear that if all fusions fail  then we have destroyed the logical information: in the event that all fusions failed, we have measured single qubit observables on all qubits in $L$. However for most codes there is a choice of fusion strategy where the logical qubit is teleported to $L_B$ if {\it any} fusion succeeds while the remaining qubits failed.  The logical qubit is ``routed'' through any successful fusion and it does not matter which one succeeds.  A fusion strategy in which only a single fusion needs to succeed will be referred to as a {\it perfect strategy} (with respect to $\mathcal{C}$), since it has optimal performance in the noiseless setting. 

\subsection{Prior Work}
Perfect strategies in the non-adaptive setting for specific QECs include the standard $5$-qubit code \cite{reiss2026optimal}, quantum repetition codes \cite{lee2015nearly}, and a small $5$-qubit surface code \cite{schmidt2019efficiencies}. In the adaptive setting there are many strategies known which achieve the optimal performance with respect to fusion failures \cite{reiss2026optimal, pettersson2025long, lee2019fundamental}, as well as bounds on the performance of codes in this setting \cite{lee2019fundamental, hilaire2023linear}.  Additionally, there are recently discovered sufficient conditions for the existence of perfect \emph{adaptive} strategies \cite{reiss2026optimal}, but to our knowledge there are no known simple conditions describing perfect non-adaptive strategies. A criterion which describes necessary and sufficient conditions that an optimal fusion strategy must satisfy would be useful for searching for fusion strategies and finding codes with good performance overall.

\paragraph{Our Results}

In this work we provide an exact characterization of the optimal fusion strategy (given the code) in the non-adaptive and noiseless (lossless) setting for stabilizer codes encoding a single qubit: the optimal $Q$ can always be expressed in terms of the largest non-trivial logical operator of $\mathcal{C}$ that does not contain a non-identity stabilizer as substring.  This yields a complete characterization of perfect strategies for stabilizer codes $\mathcal{C}$ encoding a single qubit: any perfect fusion strategy is a full-weight non-trivial logical operator of $\mathcal{C}$, not containing any stabilizer as a substring.  We confirm that known examples of perfect fusion strategies satisfy our criteria, and use the characterization to find many more perfect strategies. In fact, we show that quantum parity-check codes \cite{shor1995scheme} admit perfect fusion strategies, solving an open problem stated in \cite{schmidt2019efficiencies}.

We phrase the above characterization of perfect fusion strategies for $[[n, 1, d]]$ graph codes. These are codes described by a graph $G$ and a classical error-correcting code $C$ on $\mathbb{F}_2^n$ \cite{cross2008codeword}.  Given $(G, C)$ defining such a code, one can define the progenitor graph \cite{bell2023optimizing, schlingemann2001quantum} $G'$ on $n+1$ vertices: $n$ vertices correspond to the physical qubits, and one vertex of $G'$, the `encoding vertex', corresponds to the single encoded qubit. We show that the maximum degree of the encoding vertex in this graph taken over all graphs which are local-complementation (LC) equivalent, $\Delta_{LC}^{(enc)}(G')$, determines the smallest number of physical fusions that can fail, optimized over all fusion strategies.

We further demonstrate that in \Cref{thm:random_graph_body} that, for a uniformly chosen $G'$ (with $n+1$ vertices) $\Delta_{LC}^{(v)}(G')$ is equal to $n$ with high probability, independent of the vertex $v$. This then implies that random graph codes have perfect fusion strategies with high probability. Our results are congruous with our qualitative observations that many codes have perfect fusion strategies.  Indeed, we have observed that ``good'' (high distance) and ``bad'' (low distance) quantum codes seem to support perfect fusion strategies\footnote{The uniform random graph-code and stabilizer-code distributions induce different distributions on LC-classes, as is easily seen for \(n=2\). Thus, although \(k=1\) random stabilizer codes likely have perfect strategies, our results do not directly imply this.}.

The minimum degree of a graph taken over all LC equivalent graphs is a well-studied graph parameter \cite{javelle2012minimum, cattaneo2015minimum, claudet2024covering}, which has important connections to the complexity of producing graph states \cite{hoyer2006resources}.  In contrast, the \emph{maximum} LC degree is understudied \cite{oumazouz2025classification} and no operationally meaningful correspondence is known.  In this work, we provide the first such correspondence and characterize the maximum LC degree for uniform independent graphs.  These results provide new motivation for the study of this graph parameter over other families and distributions of graphs.  It seems likely that most $n+1$ vertex graphs have maximum  LC degree equal to $n$, so the important question here seems to be understanding when graphs or families of graphs fail to have this property.

While we focus on the exemplar problem of qubit teleportation, it is important to note that the results here also apply more generally to encoded Bell state measurements (BSMs), which are useful for encoded fusions inside a FBQC scheme and other applications.

\section{Notation}

\begin{itemize}
    \item  For a binary vector $v\in \mathbb{F}_2^n$ we define the Hamming weight $|v|=|\{i\in [n]: v_i=1\}|$.
    \item Bold face is used for random variables.  $\mathbf{1}\{A\}$ is the indicator random variable for the event $A$.  All the outcome spaces we are considering are finite.
    \item  As is standard for two events $A$ and $B$ we will use $\mathbb{P}[A, B]$ to denote $\mathbb{P}[A \cap B]$.
    \item  $\mathds{1}$ is the ``all-ones'' vector.  For a set $V$ and a subset $S$ of $V$, $\mathds{1}_S\in \mathbb{F}_2^{|V|}$ is the all ones vector supported on coordinates associated with elements of $S$.
    \item  Given a graph $G=(V, E)$ and a subset $S\subseteq V$ then $G[S]$ is the induced subgraph on $S$.  For a vertex $v\in V$ $N_G(v)$ is the set of neighbors of $v$ in the graph $G$.  If the graph is obvious from context we will omit the subscript.  
    \item  For a matrix $A\in \mathbb{F}_2^{n\times n}$ and a subset $I\subseteq [n]$ we will use $A[I]$ to denote the principal submatrix of $A$ corresponding to the indices $I$.  If the coordinates are associated to elements of a set $U$ then $A[S]$ for $S\subset U$ will denote the principal submatrix of $A$ defined by the coordinates corresponding to $S$.  For two subsets $S$ and $T$, $A[S, T]$ will be the submatrix of $A$ corresponding to rows from $S$ and columns from $T$.
    \item $n$ will generally be the number of physical qubits being considered in an error correcting code.  
    \item $\mathcal{P}_n$ is the group of Pauli operators on $n$ qubits.  $\overline{\mathcal{P}_n}:=\mathcal{P}_n/\langle i \mathbb{I} \rangle$
    \item  We will say that an operator $A\in \mathcal{P}_n$ is ``full weight'' if it is of the form $A=\sigma_1 \otimes \cdots  \otimes \sigma_n$ where none of the $\sigma_i$'s are the identity.  We define $A(i):= \sigma_i$ (for arbitrary $A\in \mathcal{P}_n$).
    \item Let $P$ be a Pauli string, and let $Q$ be any Pauli string that can be obtained from $P$ by setting any entries to $I$. We then say that $Q$ is a substring of $P$, and that $P$ is a superstring of $Q$.  Given a code $\mathcal{C}$ we say that $P$ has a stabilizer substring if it has a substring that is a stabilizer in $\mathcal{C}$.
    \item  For $P=\sigma_1 \otimes ... \otimes \sigma_n$ a Pauli string, $|P|$ is defined as the Pauli Hamming distance, or the number of $\sigma_i$ that are not equal to $\mathbb{I}$.
    \item For any subgroup $B$ of $\mathcal{P}_n$, the dual $B^\perp$ of $B$ is the group of all elements of $\mathcal{P}_n$ commuting with every element of $B$.
    \item  For an arbitrary $A=\sigma_1 \otimes ... \otimes \sigma_n \in \overline{\mathcal{P}_n}$, we define $\hat{A}\equiv \{\gamma_1 \otimes \gamma_2 \otimes ... \otimes \gamma_n\in \overline{\mathcal{P}_n}: \,\, \gamma_i \in \{\sigma_i, \mathbb{I}\}\}$, i.e.~the set of all Pauli strings that are substrings of $A$ up to a phase.  Note that, if $A$ is full weight, $\hat{A}$ contains a full rank stabilizer group corresponding to a product state where qubit $i$ is stabilized by $\sigma_i$. This notation is similar to \cite{bouchet1988graphic}. 
    \item Given a graph $G$, a graph state $\ket{G}$ is a stabilizer state with stabilizers generated by Pauli operators of the form $\Gamma_i:= X_i \prod_{j\in N_G(i)} Z_j$.  We will refer to these generators as the \emph{canonical generators}.  If $v\in \mathbb{F}_2^n$ then $\Gamma_v:=\prod_{j:v_j=1}\Gamma_j$.
    \item Given a graph $G$ (or the associated graph state $\ket{G}$), a local complementation at a vertex $v$ of $G$ replaces the induced subgraph on the neighborhood of $v$ by its complement. In other words, for every unordered pair of neighbors $a, b$ of $v$, remove the edge $ab$ if it exists, and otherwise add it~\cite{hein2006entanglement}.  An edge pivot along an edge $(u, v)\in E$ corresponds to a local complementation at $u$, followed by a local complementation at $v$ and then followed by a local complementation at $u$.  It is equivalent when switching $u$ and $v$.
\end{itemize}

\section{Necessary and sufficient condition for logical failures}
In this section we provide a necessary and sufficient condition for when a logical failure occurs. This criterion will depend on the code $\mathcal{C}$, choice of fusion strategy $Q$, and the subset $W\subseteq V$ of qubits whose associated fusions failed. We prove our characterization in terms of code deformations. Code deformations are normally used to transform one stabilizer code into another by performing measurements. 

As a warmup for the setting of logical fusions, we first state some known results on code deformations.

\subsection{Code deformations}
As before, let $\mathcal{C}$ be an $[[n, k, d]]$ stabilizer code. We will now study the effects different types of measurement will have on the information stored by the code $\mathcal{C}$; in the next subsection we use this to understand the effects of fusion failures.

Stabilizer measurements will always yield a $+1$ outcome if no error has occurred, independent of the encoded state. That is, the probability distribution of all measurement outcomes does not depend on the state that was encoded. Even in the case where Pauli errors did occur, the same syndrome would be observed, independent of which state was encoded. 

The probability distribution when measuring a non-trivial logical operator \emph{does} depend on the encoded state, however. Since information is gained about the encoded state when measuring a non-trivial logical operator, the encoded state loses its coherence.

The idea that measurements that do not reveal logical information preserve logical information is fundamental for code deformation, see~\cite{aasen2023measurement} for an excellent introduction in the context of Floquet codes. Measuring Pauli operators that are not non-trivial logical operators thus transforms the logical information encoded in $\mathcal{C}$ into some $[[n, k, d']]$ code $\mathcal{C}'$. Note that the $k$ parameter remains the same, implying that there is an isometry between the two codes; in other words, the logical information is preserved. The case of stabilizer measurements is an extreme example of this, where $\mathcal{C'}=\mathcal{C}$.

Consider now a state $\ket{\psi_L}$ encoded in $\mathcal{C}$, which gets measured according to some (not necessarily full-weight) Pauli string $P=P_1\otimes \cdots \otimes P_n$, i.e.~qubit $i$ gets measured in the basis of eigenvectors of $P_i$ (if $P_i=\mathbb{I}$ then the qubit is not measured). The probability distribution over all measurement outcomes does not depend on the encoded state, if and only if the probability distribution for each measurement in $\hat{P}$ does not depend on the encoded state. In other words, the encoded information is preserved iff $\hat{P}$ does not contain a non-trivial logical operator. In particular, this means that the post-measurement state is necessarily of the form

\begin{align}
\ket{\psi_L'}\otimes \textrm{tensor product of single qubit stabilizer states}\ ,
\end{align}
such that after an isometry the original state $\ket{\psi_L}$ can be recovered.

\subsection{Moving on to fusion failures}
With the concept of code deformations in mind, we move on to encoded fusions. Here, one takes an encoded state $\ket{\psi_L}$ in some code $\mathcal{C}$, and an encoded Bell pair $\sqrt{2}\ket{\Phi_{L_A,L_B}}=\ket{\bar{0}_{L_A}\bar{0}_{L_B}}+\ket{\bar{1}_{L_A}\bar{1}_{L_B}}$, where both sides of the Bell pair are encoded in the same code. Then, one performs a logical teleportation, teleporting the state $\ket{\psi_L}$ using $\ket{\Phi_{L_A,L_B}}$. In the case of no fusion failures, it is clear that the logical teleportation will have succeeded.

\begin{figure}
\centerfloat

\begin{tikzpicture}[
    scale=0.62,
    qubit/.style={
        circle,
        draw=brown!45!black,
        fill=orange!8!white,
        line width=0.85pt,
        minimum size=5mm,
        inner sep=0pt,
        drop shadow={
            shadow xshift=0.9pt,
            shadow yshift=-0.9pt,
            opacity=0.23
        }
    },
    grayqubit/.style={
        circle,
        draw=gray!55,
        fill=gray!30,
        line width=0.85pt,
        minimum size=5mm,
        inner sep=0pt,
        draw opacity=0.45,
        fill opacity=0.28,
        drop shadow={
            shadow xshift=0.7pt,
            shadow yshift=-0.7pt,
            opacity=0.12
        }
    },
    block/.style={
        draw=brown!50!black,
        rounded corners=5pt,
        line width=0.95pt,
        inner xsep=5pt,
        inner ysep=5pt,
        fill=orange!3!gray!4!white,
        drop shadow={
            shadow xshift=1.2pt,
            shadow yshift=-1.2pt,
            opacity=0.23
        }
    },
    bellblock/.style={
        draw=brown!50!black,
        rounded corners=5pt,
        line width=0.95pt,
        inner xsep=5pt,
        inner ysep=5pt,
        fill=orange!3!gray!4!white,
        drop shadow={
            shadow xshift=1.0pt,
            shadow yshift=-1.0pt,
            opacity=0.23
        }
    },
    logical/.style={
        font=\small,
        align=center
    },
    labelstyle/.style={
        font=\small
    },
    successfusion/.style={
        draw=blue!65!black,
        rounded corners=7pt,
        line width=1.0pt,
        inner xsep=3pt,
        inner ysep=0.9pt,
        fill=blue!35,
        fill opacity=0.18,
        draw opacity=0.75
    },
    failring/.style={
        draw=red!70!black,
        circle,
        line width=1.0pt,
        minimum size=6.5mm,
        inner sep=0pt,
        fill=red!45,
        fill opacity=0.16,
        draw opacity=0.75
    },
    paulilabel/.style={
        font=\scriptsize\bfseries,
        inner sep=0pt
    },
    panelarrow/.style={
        ->,
        >=stealth,
        line width=1.0pt,
        draw=black!70
    }
]

\def\dy{1.2}

\begin{scope}[xshift=0cm]

\foreach \i in {1,...,5} {
    \node[qubit] (pOnepsi\i) at (0,{(3-\i)*\dy}) {};
    \node[labelstyle, left=3pt of pOnepsi\i] {};
}

\foreach \i in {1,...,5} {
    \node[qubit] (pOneA\i) at (2.6,{(3-\i)*\dy}) {};
    \node[labelstyle, above left=-1pt and -1pt of pOneA\i] {};
}

\foreach \i in {1,...,5} {
    \node[qubit] (pOneB\i) at (5.2,{(3-\i)*\dy}) {};
}

\node[block, fit=(pOnepsi1)(pOnepsi5)] (pOnepsibox) {};
\node[bellblock, fit=(pOneA1)(pOneA5)(pOneB1)(pOneB5)] (pOnebellbox) {};

\foreach \i in {1,...,5} {
    \node[qubit] at (pOnepsi\i) {};
    \node[qubit] at (pOneA\i) {};
    \node[qubit] at (pOneB\i) {};
}

\node[logical, below=8pt of pOnepsibox] {$\ket{\psi}_{L}$};
\node[logical, below=8pt of pOnebellbox] {$\ket{\Phi}_{L_A L_B}$};

\end{scope}

\begin{scope}[xshift=9.4cm]

\foreach \i in {1,...,5} {
    \node[qubit] (pTwopsi\i) at (0,{(3-\i)*\dy}) {};
    \node[labelstyle, left=3pt of pTwopsi\i] {};
}

\foreach \i in {1,...,5} {
    \node[qubit] (pTwoA\i) at (2.6,{(3-\i)*\dy}) {};
    \node[labelstyle, above left=-1pt and -1pt of pTwoA\i] {};
}

\foreach \i in {1,...,5} {
    \node[qubit] (pTwoB\i) at (5.2,{(3-\i)*\dy}) {};
}

\node[block, fit=(pTwopsi1)(pTwopsi5)] (pTwopsibox) {};
\node[bellblock, fit=(pTwoA1)(pTwoA5)(pTwoB1)(pTwoB5)] (pTwobellbox) {};

\foreach \i in {1,3,5} {
    \node[successfusion, fit=(pTwopsi\i)(pTwoA\i)] {};
}

\node[failring] at (pTwopsi2) {};
\node[failring] at (pTwoA2) {};
\node[failring] at (pTwopsi4) {};
\node[failring] at (pTwoA4) {};

\foreach \i in {1,...,5} {
    \node[qubit] at (pTwopsi\i) {};
    \node[qubit] at (pTwoA\i) {};
    \node[qubit] at (pTwoB\i) {};
}

\node[failring] at (pTwopsi2) {};
\node[failring] at (pTwoA2) {};
\node[paulilabel] at (pTwopsi2) {$Z$};
\node[paulilabel] at (pTwoA2) {$Z$};

\node[failring] at (pTwopsi4) {};
\node[failring] at (pTwoA4) {};
\node[paulilabel] at (pTwopsi4) {$X$};
\node[paulilabel] at (pTwoA4) {$X$};

\end{scope}

\begin{scope}[xshift=18.8cm]

\foreach \i in {1,...,5} {
    \node[qubit] (pThreepsi\i) at (0,{(3-\i)*\dy}) {};
    \node[labelstyle, left=3pt of pThreepsi\i] {};
}

\foreach \i in {1,...,5} {
    \node[qubit] (pThreeA\i) at (2.6,{(3-\i)*\dy}) {};
    \node[labelstyle, above left=-1pt and -1pt of pThreeA\i] {};
}

\foreach \i in {1,...,5} {
    \node[qubit] (pThreeB\i) at (5.2,{(3-\i)*\dy}) {};
}

\node[block, fit=(pThreepsi1)(pThreepsi5)] (pThreepsibox) {};
\node[bellblock, fit=(pThreeA1)(pThreeA5)(pThreeB1)(pThreeB5)] (pThreebellbox) {};

\foreach \i in {1,3,5} {
    \node[successfusion, fit=(pThreepsi\i)(pThreeA\i)] {};
}

\foreach \i in {1,3,5} {
    \node[qubit] at (pThreepsi\i) {};
    \node[qubit] at (pThreeA\i) {};
    \node[qubit] at (pThreeB\i) {};
}

\foreach \i in {2,4} {
    \node[grayqubit] at (pThreepsi\i) {};
    \node[grayqubit] at (pThreeA\i) {};
    \node[qubit] at (pThreeB\i) {};
}

\node[logical, below=8pt of pThreepsibox] {$\ket{\phi}_{L'}$};
\node[logical, below=8pt of pThreebellbox] {$\ket{\Phi'}_{L_{A} L_B}'$};

\end{scope}

\draw[panelarrow]
    ([xshift=6mm]pOnebellbox.east) -- ([xshift=-6mm]pTwopsibox.west);

\draw[panelarrow]
    ([xshift=6mm]pTwobellbox.east) -- ([xshift=-6mm]pThreepsibox.west);

\end{tikzpicture}
\caption{An encoded single qubit quantum state undergoes $\sigma$-fusions for each pair of qubits in $(L, L_A)$.  The type of each fusion is labeled as a subscript of the blue edges on the leftmost figure.  The successful fusions implement a Bell state measurement on the corresponding qubits while a failure implements single qubit measurements in a basis determined by the type of the fusion.  If enough fusions succeed then the logical state on $L$ is teleported to $L_B$ (depicted on the right side of the figure). }\label{fig:teleport}
\end{figure}

In the general case some subset of the fusions $W$ will fail (see \Cref{fig:teleport}). A failed fusion on $i\in W$ results in the corresponding qubits in $L$ and $L_A$ being measured in the $Q_i$ basis, while the fusion measurements on the remaining qubits succeed. In what follows, it will be convenient to regard the measurements associated with the fusion failures as occurring first, followed by the successful fusion measurements. We are free to change the order of these measurements, since the measurements clearly commute.

Denote by $Q[W]$ the Pauli string that equals $Q$ on the indices $i\in W$, and the identity elsewhere. If a failure now occurs on a subset $W$ of qubits, both the state $\ket{\psi_L}$ and $\ket{\Phi_{L_A,L_B}}$ are measured according to $Q[W]$.
From the preceding section, if $Q[W]$ reveals no information about the encoded state, then after the failures the states are now of the form

\begin{align}
\ket{\psi_{L'}}\otimes \textrm{tensor product of single qubit stabilizer states, }
\end{align}
and
\begin{align}
\propto\left(\ket{\bar{0}_{L_A'}\bar{0}_{
L_B}}+\ket{\bar{1}_{L'_A}\bar{1}_{
L_B}}\right)\otimes \textrm{tensor product of single qubit stabilizer states}\ .
\end{align}
That is, the initial logical information $\ket{\psi_L}$ is now encoded as $\ket{\psi_{L'}}$ in some different code $\mathcal{C}'$, and similarly for the `left' side of $\left(\ket{\bar{0}_{L'_A}\bar{0}_{
L_B}}+\ket{\bar{1}_{L'_A}\bar{1}_{
L_B}}\right)$. Teleporting now effectively re-encodes the state $\ket{\psi_{L'}}$ into $\ket{\psi_{L_B}}$. Note that we might need to perform some corrections, corresponding to the outcomes of our successful fusions. These corrections are always logical Pauli operations, which can always be implemented fault-tolerantly with stabilizer codes.

This leads us to the following proposition.

\begin{proposition}\label{prop:failure_condition}
Let $\mathcal{C}$ be an $[[n, k, d]]$ stabilizer code, and let $Q$ be a fusion strategy. Then fusion failures on a subset $W$ of qubits lead to a logical failure if and only if $\hat{Q}[W]$ contains a non-trivial logical operator of $\mathcal{C}$. 
\end{proposition}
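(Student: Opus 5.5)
We prove the two directions separately, using the code-deformation discussion above, which states that measuring a Pauli string $P$ preserves the encoded information if and only if $\hat{P}$ contains no non-trivial logical operator. We apply it with $P=Q[W]$, noting that $\hat{Q}[W]=\widehat{Q[W]}$ is an abelian group of product-state observables. As in the text, we reorder the measurements so that the failed fusions act first: $Q[W]$ is measured on both $L$ and $L_A$, and then Bell measurements are made on the qubits in $V\setminus W$.

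\emph{Sufficiency of the condition for success.} Suppose $\hat{Q}[W]$ contains no non-trivial logical operator. Then, by the code-deformation discussion, measuring $Q[W]$ on $L$ leaves $\ket{\psi_{L'}}\otimes(\text{product state on }W)$, with $\ket{\psi_{L'}}$ in a deformed code $\mathcal{C}'$ supported on $V\setminus W$. The same measurement on $L_A$ turns the encoded Bell pair into a Bell pair between $\mathcal{C}'$ on $L_A$ and $\mathcal{C}$ on $L_B$, times a product state. The deformation is the same on both sides because the outcome-dependent signs are fixed stabilizers of the deformed code, and these are tracked classically.

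The remaining Bell measurements on $V\setminus W$ are transversal Bell measurements between two copies of $\mathcal{C}'$. To see that they reveal $\bar X\bar X$ and $\bar Z\bar Z$ of $\mathcal{C}'$, pick representatives $\bar X',\bar Z'$ supported on $V\setminus W$. The products $\bar X'\otimes\bar X'$ and $\bar Z'\otimes \bar Z'$ restricted to $L\cup L_A$ lie in the group generated by the pairwise $X_iX_i$ and $Z_iZ_i$ observables (up to sign, using $Y_iY_i=-X_iX_iZ_iZ_i$). The logical fusion therefore succeeds, and $\ket{\psi}$ is teleported to $L_B$ up to a known logical Pauli correction.

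\emph{Necessity of the condition for success.} Suppose $\hat{Q}[W]$ contains a non-trivial logical operator $\bar{P}$. Measuring $Q[W]$ on $L$ yields the outcome of $\bar P$, so the $L$-side measurements alone produce a distribution that depends on $\ket{\psi_L}$.

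\textbf{Key step (main obstacle).} We must show that this makes teleportation to $L_B$ impossible. We argue by contradiction via the no-cloning / no-signalling principle. Suppose the protocol left $\ket{\psi}$ recoverable on $L_B$ for every input. Then the classical record would carry information about $\bar P$ while the state is also recovered perfectly. Consider the reduced channel from the logical input to (classical record, $L_B$). It is a perfect quantum channel composed with a Pauli correction. A channel that preserves quantum information perfectly cannot leak to its environment (here, the classical record) any information that depends on the input, since the complementary channel of an isometry-like channel must be constant.

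We make this concrete by taking $\ket{\psi}$ to be eigenstates of $\bar P$ with opposite eigenvalues. The outcome distribution of $\bar P$ then differs between the two inputs. However, if recovery succeeded, the record's distribution would have to be input-independent: otherwise, applying the protocol to one half of a maximally entangled logical reference state would leave the reference entangled with the classical record, contradicting the maximal entanglement between the reference and $L_B$.

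Hence the logical fusion fails. The same argument applies for general $k$, since $\bar P$ is non-trivial on at least one logical qubit.
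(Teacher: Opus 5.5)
Your proposal is correct and follows essentially the paper's route. You reorder so the failed fusions act first, use the code-deformation fact to see that the logical state and the $L_A$ half of the encoded Bell pair survive in a deformed code on $V\setminus W$ when $\hat{Q}[W]$ contains no non-trivial logical operator, and then teleport through the remaining Bell measurements; conversely, a non-trivial logical operator in $\hat{Q}[W]$ gets measured, and the resulting input-dependent record destroys the logical information. Your explicit check that the transversal Bell measurements generate $\bar X'\otimes\bar X'$ and $\bar Z'\otimes\bar Z'$, and your reference-system (information--disturbance) argument for the converse, make rigorous steps the paper only sketches; one small wording fix is that a classical record is correlated with the reference, not entangled with it.
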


The smallest $|W|$ such that $\hat{Q}[W]$ contains a non-trivial logical operator we call the \emph{failure distance} of $\mathcal{C}$ with respect to $Q$. The \emph{fusion distance} of $\mathcal{C}$ is the \emph{failure distance} maximized over all fusion strategies.

The following two lemmas will be useful.

\begin{lemma}\label{lemma:dimension}
Let $\mathcal{C}$ be an $[[n, k, d]]$ stabilizer code with stabilizer group $\mathcal{S}$, and let $\mathcal{S}^\perp$ be its dual. Let $Q$ be a full-weight Pauli string on $n$ qubits. Then 

\begin{align}
\dim\left(\hat{Q}\cap \mathcal{S}^\perp\right) - \dim\left(\hat{Q}\cap \mathcal{S}\right) = k ,
\end{align}
where here we interpret $\hat{Q}\cap \mathcal{S}^\perp$ and $\hat{Q}\cap \mathcal{S}$ as subspaces of $\hat{Q}\cong\mathbb{F}_2^n$.
\end{lemma}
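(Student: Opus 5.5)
We work in the symplectic picture, identifying $\overline{\mathcal{P}_n}\cong\mathbb{F}_2^{2n}$ with the symplectic form $\omega$ (commutation). In this picture $\mathcal{S}$ is an isotropic subspace of dimension $n-k$, and $\mathcal{S}^\perp$ is its symplectic complement, of dimension $n+k$.

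Since $Q$ is full weight, $\hat{Q}$ is an $n$-dimensional subspace. Any two of its elements commute qubit by qubit, because on each qubit they are either $\mathbb{I}$ or $Q(i)$. Hence $\hat{Q}$ is a \emph{Lagrangian} subspace: isotropic of maximal dimension $n$, so $\hat{Q}^\perp=\hat{Q}$. This self-duality is the key structural fact we will use.

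The plan is a short linear-algebra computation with symplectic complements. First we use the standard identity for complements of intersections,
\begin{align}
(\hat{Q}\cap\mathcal{S})^\perp=\hat{Q}^\perp+\mathcal{S}^\perp=\hat{Q}+\mathcal{S}^\perp .
\end{align}
This holds because $\omega$ is nondegenerate, so $(A\cap B)^\perp=A^\perp+B^\perp$ and $\dim A^\perp=2n-\dim A$. Taking dimensions gives
\begin{align}
2n-\dim(\hat{Q}\cap\mathcal{S})=\dim(\hat{Q}+\mathcal{S}^\perp)=n+(n+k)-\dim(\hat{Q}\cap\mathcal{S}^\perp).
\end{align}
Rearranging yields $\dim(\hat{Q}\cap\mathcal{S}^\perp)-\dim(\hat{Q}\cap\mathcal{S})=k$, which is the claim.

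The only steps needing care are bookkeeping rather than genuine obstacles. First, we must justify passing to $\overline{\mathcal{P}_n}$: phases are irrelevant, and $\mathcal{S}$ contains no $-\mathbb{I}$, so $\mathcal{S}$ maps injectively to an $(n-k)$-dimensional subspace. Second, we must check that $\mathcal{S}^\perp$ (commutant) corresponds exactly to the symplectic complement, of dimension $2n-(n-k)=n+k$. Third, we must verify that $\hat{Q}$ is Lagrangian, which uses full weight: then $\dim\hat{Q}=n$, and pairwise commutation follows from the single-qubit check. With these in place the dimension count above completes the proof.
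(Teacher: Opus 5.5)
Your proof is correct and is essentially the paper's argument: both rest on $\hat{Q}^\perp=\hat{Q}$ (full weight makes $\hat{Q}$ Lagrangian), the symplectic dimension formula $\dim W+\dim W^\perp=2n$, and the dimension formula for a sum of subspaces. The only difference is that the computation is run in dual form. You take the complement of $\hat{Q}\cap\mathcal{S}$ and use $\dim\mathcal{S}^\perp=n+k$, whereas the paper takes the complement of $\hat{Q}+\mathcal{S}$, which gives $\hat{Q}\cap\mathcal{S}^\perp$ directly, and uses $\dim\mathcal{S}=n-k$.
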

\begin{proof}
See Appendix \ref{section:proof_lemma}.

\end{proof}

\begin{lemma}\label{lemma:bouchet}
Let $\mathcal{C}$ be an $[[n, k, d]]$ stabilizer code with stabilizer group $\mathcal{S}$, and let $P$ be any (not necessarily full-weight) Pauli string on $n$ qubits. If $\dim(\hat{P}\cap \mathcal{S})=m$, then there exists a full-weight superstring $Q$ of $P$ such that $\dim(\hat{Q}\cap \mathcal{S})=m$.
\end{lemma}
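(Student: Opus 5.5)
The plan is to build $Q$ from $P$ by filling in the identity positions of $P$ one qubit at a time. At each step I will choose the new Pauli so that $\dim(\hat{P}\cap\mathcal{S})$ does not change. After at most $n$ steps the string is full weight, it is a superstring of $P$, and the intersection still has dimension $m$. So everything reduces to a single-step claim.

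\emph{Single-step claim:} if $P(i)=\mathbb{I}$, then for at least one (in fact at least two) $\sigma\in\{X,Y,Z\}$ the string $P'$, equal to $P$ except $P'(i)=\sigma$, satisfies $\dim(\hat{P'}\cap\mathcal{S})=m$.

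To prove it, first note that $\hat{P'}$ is the disjoint union $\hat{P}\cup \hat{P}\,\sigma_i$, where $\sigma_i$ denotes $\sigma$ acting on qubit $i$. Hence $\hat{P}\cap\mathcal{S}\subseteq \hat{P'}\cap\mathcal{S}$. The dimension increases (necessarily by exactly one, since $\hat{P}$ has index $2$ in $\hat{P'}$) if and only if there is a stabilizer $s_\sigma\in\mathcal{S}$ of the form $s_\sigma = A\,\sigma_i$ with $A\in\hat{P}$, up to phase.

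Now suppose, for contradiction, that this happens for two distinct choices $\sigma\neq\tau$, with stabilizers $s_\sigma = A\sigma_i$ and $s_\tau = B\tau_i$, where $A,B\in\hat{P}$. On every qubit $j\neq i$, the factors $A(j)$ and $B(j)$ both lie in $\{P(j),\mathbb{I}\}$, so they commute. On qubit $i$, the factors $\sigma$ and $\tau$ are distinct non-identity Paulis, so they anticommute. Therefore $s_\sigma$ and $s_\tau$ anticommute, which contradicts the fact that $\mathcal{S}$ is abelian. So at most one of the three choices raises the dimension, and any other choice keeps it at $m$.

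Iterating the single-step claim over all positions where $P$ is the identity yields the full-weight superstring $Q$ with $\dim(\hat{Q}\cap\mathcal{S})=m$. I expect no real obstacle here. The only point needing care is the bookkeeping that $\hat{P'}\cap\mathcal{S}$ exceeds $\hat{P}\cap\mathcal{S}$ exactly when some stabilizer has the form $A\sigma_i$; this is where working in $\overline{\mathcal{P}_n}$, i.e.\ ignoring phases, keeps the argument clean.
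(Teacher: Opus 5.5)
Your proof is correct and takes essentially the same route as the paper. Like the paper, you fill in the identity positions one at a time and use the decomposition $\hat{P'}=\hat{P}\sqcup\hat{P}\sigma_i$ to see that the dimension can grow by at most one; you then rule out two distinct choices both raising it, since the witnessing stabilizers would anticommute only at qubit $i$, contradicting that $\mathcal{S}$ is abelian.
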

\begin{proof}
See Appendix \ref{section:proof_lemma}.
\end{proof}

A non-trivial logical operator $P$ such that no proper substring is a non-trivial logical operator we call a \emph{minimal} logical operator. From proposition \ref{prop:failure_condition} and the above lemma, we can characterize the fusion distance of all $[[n, 1, d]]$ codes in terms of minimal logical operators.

\begin{theorem}\label{cor:opt_strategy}
    Let $\mathcal{C}$ be an $[[n, 1, d]]$ stabilizer code $\mathcal{C}$. The fusion distance of $\mathcal{C}$ equals the size of the largest minimal logical operator of $\mathcal{C}$.
\end{theorem}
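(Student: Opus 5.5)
The plan is to compute the failure distance of a fixed strategy $Q$ directly, and then bound it above and below. By Proposition~\ref{prop:failure_condition}, failures on $W$ cause a logical failure exactly when some nontrivial logical operator (an element of $\mathcal{S}^\perp\setminus\mathcal{S}$) lies in $\hat{Q}[W]$. Such an operator is a substring of $Q$ supported inside $W$. Conversely, any nontrivial logical $L\in\hat{Q}$ lies in $\hat{Q}[\mathrm{supp}(L)]$. Hence the failure distance of $Q$ is
\begin{align}
d_Q = \min\{|L| : L\in \hat{Q},\ L\in \mathcal{S}^\perp\setminus\mathcal{S}\}.
\end{align}
We assume throughout, as in the setup, that fusion strategies are full weight.

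\emph{Upper bound.} Fix a full-weight $Q$. Lemma~\ref{lemma:dimension} with $k=1$ gives $\dim(\hat{Q}\cap\mathcal{S}^\perp)=\dim(\hat{Q}\cap\mathcal{S})+1$. So $\hat{Q}$ contains at least one nontrivial logical operator, and $d_Q$ is finite. Let $L$ be such an operator of minimum weight. Every proper substring of $L$ is again in $\hat{Q}$ and has smaller weight, so by minimality of $|L|$ none of them is a nontrivial logical. Thus $L$ is a minimal logical operator, and $d_Q=|L|$ is at most the size of the largest minimal logical operator. Maximizing over $Q$ gives fusion distance $\le$ that size.

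\emph{Lower bound.} Let $P$ be a minimal logical operator of maximum size. First I show $\hat{P}\cap\mathcal{S}$ is trivial. Suppose $s\neq\mathbb{I}$ is a stabilizer with $s\in\hat{P}$. On the support of $s$, the entries of $s$ and $P$ agree, so $Ps$ is, up to phase, the substring of $P$ with $\mathrm{supp}(s)$ removed. This is a proper substring of $P$. It is still a nontrivial logical operator, since $s\in\mathcal{S}$. That contradicts minimality of $P$. Hence $\dim(\hat{P}\cap\mathcal{S})=0$.

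Lemma~\ref{lemma:bouchet} then provides a full-weight superstring $Q$ of $P$ with $\dim(\hat{Q}\cap\mathcal{S})=0$. By Lemma~\ref{lemma:dimension}, $\dim(\hat{Q}\cap\mathcal{S}^\perp)=1$, so $\hat{Q}\cap\mathcal{S}^\perp$ has exactly one non-identity element. Since $P\in\hat{Q}$ (because $Q$ is a superstring of $P$) and $P\in\mathcal{S}^\perp$, that element is $P$ (up to phase). Therefore $P$ is the only nontrivial logical in $\hat{Q}$, and $d_Q=|P|$. Combined with the upper bound, this proves the theorem.

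The step I expect needs the most care is the lower bound's reliance on Lemma~\ref{lemma:bouchet}: extending $P$ to full weight without creating new stabilizer substrings. Given that lemma, the only delicate point is the phase bookkeeping in the claim that $Ps$ is a proper substring of $P$, which holds in $\overline{\mathcal{P}_n}$ where phases are ignored.
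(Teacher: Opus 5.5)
Your proof is correct and follows the same route as the paper. For the upper bound, Lemma~\ref{lemma:dimension} guarantees a nontrivial logical in $\hat{Q}$, and its minimum-weight element is minimal. For the lower bound, Lemma~\ref{lemma:bouchet} extends a maximum-weight minimal logical $P$ to a full-weight $Q$ with $\hat{Q}\cap\mathcal{S}$ trivial, so $P$ is the unique nontrivial logical in $\hat{Q}$. Your argument that $\hat{P}\cap\mathcal{S}$ is trivial, via $Ps$ being a proper substring of $P$ that is still a nontrivial logical, fills in a step the paper only asserts.
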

\begin{proof}
For a fusion strategy $Q$, let $N(Q):=\left(\hat{Q}\cap \mathcal{S}^\perp\right)\setminus \left(\hat{Q}\cap \mathcal{S}\right)$ be the set of all non-trivial logical operators in $\hat{Q}$. From the characterization in Proposition \ref{prop:failure_condition}, the failure distance of $\mathcal{C}$ with respect to $Q$ is the smallest weight element in $N(Q)$. $N(Q)$ is always nonempty by \Cref{lemma:dimension}.  A minimum weight element of $N(Q)$ is a minimal logical operator so the fusion distance is upper bounded by the maximum possible  weight of a minimal logical operator.  

The goal is to find a $Q$ such that $N(Q)=\lbrace{P\rbrace}$, where $P$ is a maximum weight minimal logical operator. The failure distance with respect to such a $Q$ (if it exists) would equal the fusion distance of the code.

Note that $\dim(\hat{Q}\cap \mathcal{S})=0$, together with \Cref{lemma:dimension}, implies that $\dim(\hat{Q}\cap \mathcal{S}^\perp)=1$ (since $k=1$).  This means that any full-weight Pauli string $Q$ such that $\hat{Q}$ does not contain any non-trivial stabilizer satisfies $|N(Q)|=1$.

In other words, if $Q$ is a superstring of a maximum weight minimal logical operator $P$, and $\dim (\hat{Q}\cap \mathcal{S})=0$ (i.e.~$\hat{Q}$ does not contain any non-trivial stabilizers), then $N(Q)=\lbrace P\rbrace$, as desired. But now note that $\dim (\hat{P}\cap \mathcal{S})=0$, so that by Lemma \ref{lemma:bouchet} there exists such a full-weight $Q$.

\end{proof}

 Theorem \ref{cor:opt_strategy} gives an immediate characterization of the $k=1$ codes that admit perfect strategies.
\begin{corollary}\label{cor:perfect_strategy}
    Let $\mathcal{C}$ be an $[[n, 1, d]]$ stabilizer code $\mathcal{C}$.  Then $\mathcal{C}$ admits a perfect strategy iff $\mathcal{C}$ has a full-weight non-trivial logical operator not containing any non-trivial stabilizer as a substring.
\end{corollary}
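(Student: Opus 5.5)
The plan is to derive the corollary directly from Theorem \ref{cor:opt_strategy}, after translating ``perfect strategy'' into the language of failure distance. A perfect strategy is one in which the logical fusion succeeds whenever at least one physical fusion succeeds. Equivalently, every proper subset $W\subsetneq[n]$ of failed fusions is harmless. By Proposition \ref{prop:failure_condition}, this means $\hat{Q}[W]$ contains no non-trivial logical operator for every $|W|\le n-1$. So a strategy $Q$ is perfect iff its failure distance is $n$, and the code admits a perfect strategy iff its fusion distance equals $n$. By Theorem \ref{cor:opt_strategy}, this holds iff $\mathcal{C}$ has a minimal logical operator of weight $n$, i.e.\ a full-weight minimal logical operator.

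It then remains to show the following: for a full-weight non-trivial logical operator $P$, $P$ is minimal iff $\hat{P}$ contains no non-trivial stabilizer. This is equivalent to $P$ having no non-trivial stabilizer substring, since elements of $\hat{P}$ are exactly the substrings of $P$ up to phase.

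For the direction ``no stabilizer substring $\Rightarrow$ minimal'', I will use Lemma \ref{lemma:dimension} with $k=1$. Since $\dim(\hat{P}\cap\mathcal{S})=0$, we get $\dim(\hat{P}\cap\mathcal{S}^\perp)=1$. So $\hat{P}\cap\mathcal{S}^\perp=\{\mathbb{I},P\}$, and $P$ is the only non-trivial logical operator among its substrings; in particular no proper substring is logical.

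For the converse, suppose $S\ne\mathbb{I}$ is a stabilizer with $S\in\hat{P}$. Then $PS$ (up to phase) is again a non-trivial logical operator, since it lies in $\mathcal{S}^\perp\setminus\mathcal{S}$. It is a substring of $P$, because on each coordinate $S$ is either $\mathbb{I}$ or equals $P(i)$. It is proper, since it is $\mathbb{I}$ on the nonempty support of $S$. Hence $P$ is not minimal.

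I expect no real obstacle here. The one point needing care is the equivalence ``perfect $\iff$ failure distance $n$'': a failure set of size $n$ (all fusions fail) always causes a logical failure by Lemma \ref{lemma:dimension}, so ``all but one can fail'' is exactly the statement that the failure distance is $n$.
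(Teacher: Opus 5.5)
Your proof is correct and takes the paper's route: the paper reads the corollary off Theorem~\ref{cor:opt_strategy} (perfect means failure distance $n$, so you need a minimal logical operator of weight $n$). Your identification of full-weight minimal logical operators with full-weight logical operators having no stabilizer substring uses the same two facts as the theorem's proof, namely Lemma~\ref{lemma:dimension} giving $|N(Q)|=1$ when $\dim(\hat Q\cap\mathcal{S})=0$, and $\dim(\hat P\cap\mathcal{S})=0$ for minimal $P$; your $PS$ argument supplies the justification for the latter, which the paper only asserts.
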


\section{Fusion strategies for graph codes}\label{sec:graph_codes}

An $[[n, 1, d]]$ stabilizer code is defined by a stabilizer state $\ket{\phi}$ on $n+1$ qubits (the so-called progenitor graph \cite{bell2023optimizing}) which ``encodes'' both the stabilizers of the code as well as the logical operators~\cite{khesin2025universal}. The $n+1$ qubits are partitioned into $n$ qubits (corresponding to the physical qubits and labeled by $[n]$), and an \emph{encoding qubit} $e$\footnote{For technical reasons, we require furthermore that $e$ is incident to at least one edge~\cite{goodenough2024near}.}
The logical operators of the corresponding stabilizer code are described by the stabilizers $s$ of $\ket{\phi}$. In particular, every stabilizer of $\ket{\phi}$ can be written as a tensor product of two Pauli strings acting on qubit $e$ and $[n]$, 
i.e.~$s=s_{\textrm{log}}\otimes s_{\textrm{phys}}$ where $s_\textrm{log}$ acts on qubit $e$ and $s_{\textrm{phys}}$ acts on $[n]$. Under this identification, the logical operators are given by all operators of the form $s_{\textrm{phys}}$, which act as $s_{\textrm{log}}$ on the encoded qubits~\cite{khesin2025universal, goodenough2024bipartite}.

It is known that every stabilizer code is equivalent to a \emph{graph code}. That is, up to a relabeling of logical operators and local Clifford (LC) rotations, $\ket{\phi}$ can always be chosen to be a graph state $\ket{G'}$. In this case, the operator $\Gamma_{e}$ is of the form $X_{e} \otimes \prod_{j \in N(e)} Z_j=X_{e} \otimes \bar{X}$, i.e.~\emph{graph codes} have logical $X$ operators implemented by $Z$ operators, where the $Z$ operators correspond to the neighborhood of $e$.  It is further known that graph states are equivalent under local Clifford operations iff they are equivalent under a sequence of local complements \cite{van2004graphical}.  A local complement of a graph at a vertex $v$ is obtained by toggling all the edges in the neighborhood of $v$.  We will use LC to indicate both local Clifford and local complement.  This is well motivated since the set of graph states that are local Clifford (LC) equivalent to a given graph state is the same as the set of graphs are local-complementation (LC) equivalent to the corresponding graph.   

With this in mind, we define the following quantity.

\begin{definition}
    For a vertex $v\in V$, $\Delta_{LC}^{(v)}(G)$ is the maximum degree of vertex $v$ taken over all graphs that are $LC$-equivalent to $G$.
\end{definition}

Note that $\prod_{j\in N(e)}Z_j$ is a minimal logical operator for the code. Since local complementations correspond to single-qubit Cliffords~\cite{van2004graphical}, the possible neighborhoods of $e$ after any sequence of local complementations always form the supports of minimal logical operators. As such $\Delta_{LC}^{(e)}(G')$ is a lower bound on the fusion distance. The following lemma allows us to show that any minimal logical operator arises as the neighborhood of $e$, demonstrating that this lower bound is, in fact, tight.

\begin{lemma}\label{lemma:circuit_to_nhood_stabilizer}
A stabilizer $S$ of a stabilizer state is called a \emph{minimal stabilizer} if $S$ is not the identity, and no proper substring of $S$ is a non-identity stabilizer.
Let $\ket{\phi}$ be any stabilizer state, let $S$ be any of its minimal stabilizers, and fix an arbitrary $v\in \textrm{supp}(S)$. Then $\ket{\phi}$ is locally equivalent to a graph state $\ket{G}$ such that vertex $v$ has closed neighborhood equal to $\textrm{supp}(S)$.
\end{lemma}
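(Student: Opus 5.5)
Since local Clifford operations act qubit-wise and map stabilizers to stabilizers while preserving supports, the property of $S$ being a minimal stabilizer with a given support is LC-invariant. We may therefore apply single-qubit Cliffords so that $S$ becomes $X_v\prod_{j\in \textrm{supp}(S)\setminus\{v\}} Z_j$. The task is then to show that the resulting stabilizer state is LC-equivalent to a graph state in which this operator is exactly the canonical generator $\Gamma_v$. Equivalently, we want a stabilizer tableau in graph form, with $X$-part the identity and $Z$-part symmetric with zero diagonal, in which row $v$ equals $S$.

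First, the local rotation is built so that $S(v)=X$ and $S(j)=Z$ for every other $j$ in the support. This is always possible qubit by qubit. Second, we extend $\{S\}$ to a full generating set of the stabilizer group and run the standard Van den Nest et al.\ reduction to graph form. Write the stabilizer group as the row space of $(A\,|\,B)$ with $A$ the $X$-part. We need a set $T\ni v$ of qubits such that the columns of $A$ indexed by $T$, together with the $Z$-columns outside $T$, form an invertible matrix. After Hadamards on $[n]\setminus T$ and row reduction, the tableau becomes $(I\,|\,\Theta)$ with $\Theta$ symmetric, and a phase gate removes any diagonal entries. The key point is that in graph form, the row with $X$ on $v$ and $I$ on every other vertex of $T$ is the unique stabilizer with that $X$-pattern. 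So we must choose the Hadamard-free set $T$ such that $S$ has $X$-pattern exactly $\{v\}$ on $T$ and only $Z$/$I$ elsewhere. Then $S$ automatically equals $\Gamma_v$, and its support is the closed neighborhood of $v$.

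The main obstacle is choosing $T$, i.e.\ showing there is a "pivot set" containing $v$ that avoids $\textrm{supp}(S)\setminus\{v\}$. Concretely, we need that the $X$-columns on $T\subseteq \{v\}\cup([n]\setminus\textrm{supp}(S))$, together with $Z$-columns elsewhere, give a basis. This is where minimality is used. Suppose no such $T$ exists. Equivalently, the stabilizer group restricted to the other qubits would admit a nontrivial stabilizer that is identity on $v$ and whose components on $\textrm{supp}(S)\setminus\{v\}$ are $Z$ or $I$. We would try to derive a contradiction by combining it with $S$ to produce a proper substring stabilizer of $S$. I would first attempt this through a counting/rank argument in the style of Lemma~\ref{lemma:dimension}, applied to the product-basis set $\hat{S'}$ for a suitable full-weight superstring, with Lemma~\ref{lemma:bouchet} used to pick the superstring without creating extra stabilizers inside $\textrm{supp}(S)$. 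Failing that, the fallback is an inductive approach: start from any graph form and repeatedly apply local complementations and edge pivots at vertices touching $\textrm{supp}(S)$ until $\Gamma_v=S$. Here minimality is used to show that each step strictly decreases the number of generators needed to express $S$.

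Finally, once $S=\Gamma_v$ in some LC-equivalent graph $G$, the closed neighborhood of $v$ is $\{v\}\cup N_G(v)=\textrm{supp}(S)$, which completes the proof.
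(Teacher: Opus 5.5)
There is a genuine gap, and it sits in the step you call the main obstacle: your condition on the pivot set $T$ is reversed. After your initial rotation $S=X_v\prod_{j\in\mathrm{supp}(S)\setminus\{v\}}Z_j$, requiring $T\cap\mathrm{supp}(S)=\{v\}$ means the Hadamards on $[n]\setminus T$ act on every qubit where $S$ carries a $Z$, turning it into an $X$. In the resulting graph form $S$ then has $X$-pattern all of $\mathrm{supp}(S)$, so up to sign it is the product of the canonical generators over $\mathrm{supp}(S)$, not $\Gamma_v$. For $S$ to be row $v$ of $(I\,|\,\Theta)$, its new $X$-part $(x_T,z_{[n]\setminus T})$ must equal $e_v$, and this forces $\mathrm{supp}(S)\subseteq T$.

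Worse, the $T$ you ask for need not exist. Take the two-qubit graph state on an edge $vw$ with $S=X_vZ_w$. The only admissible choice is $T=\{v\}$, and the stabilizer $Z_vX_w$ vanishes on both the $X$-column of $v$ and the $Z$-column of $w$, so the matrix is singular.

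Your contradiction sketch also fails. First, it does not match your own condition: a failure of your $T$ produces a stabilizer that is $X$ or $\mathbb{I}$ on $\mathrm{supp}(S)\setminus\{v\}$, not $Z$ or $\mathbb{I}$. Second, the ``equivalently'' is unjustified. A stabilizer that is trivial at $v$ and $Z$/$\mathbb{I}$ on $\mathrm{supp}(S)\setminus\{v\}$, but nontrivial outside $\mathrm{supp}(S)$, is not a substring of $S$, so minimality says nothing about it.

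Your fallback (local complementations that decrease the number of canonical generators composing $S$) is the paper's strategy. However, you omit the argument that a decreasing move always exists, and that argument is the whole content. The paper's proof runs as follows:
\begin{itemize}
\item Complement at each vertex where $S$ has a $Y$, which turns it into a $Z$.
\item Turn a surplus $X$ into a $Y$, either by complementing at an endpoint of an edge inside the set of $X$-positions of $S$, or at a neighbour of that set lying outside $\mathrm{supp}(S)$.
\item If neither move is available, then $X_uZ_{N(u)}$ is a proper stabilizer substring of $S$, contradicting minimality.
\item Finally, an edge pivot moves the lone $X$ onto $v$.
\end{itemize}

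With the orientation corrected, your tableau route does go through and gives a genuinely different, non-constructive proof.
\begin{itemize}
\item Invertibility of the $T$-matrix is equivalent to $\dim(\hat\zeta\cap\mathcal S)=0$. Here $\mathcal S$ is the stabilizer group of the rotated state and $\zeta$ is the full-weight string equal to $Z$ on $T$ and $X$ off $T$.
\item Minimality gives $\dim(\hat P\cap\mathcal S)=0$ for $P=\prod_{j\in\mathrm{supp}(S)}Z_j$. A non-identity stabilizer in $\hat P$ commutes with $S$ only if it is trivial at $v$, and it is then a proper substring of $S$.
\item The proof of Lemma~\ref{lemma:bouchet} shows that on each remaining qubit at least two of the three extensions preserve this. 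Hence at least one of $X$, $Z$ does, so a valid $T\supseteq\mathrm{supp}(S)$ exists.
\item Then $S$ is the unique stabilizer with new $X$-part $e_v$, so it is row $v$ with $\Theta_{vv}=0$. The phase gates that remove the other diagonal entries fix $Z$, and hence fix $S$.
\end{itemize}
As written, though, your proposal argues for the wrong $T$ and leaves its existence open.
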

\begin{proof}
See Appendix \ref{section:proof_lemma}. A similar statement using different terminology can be found in~\cite{brijder2015isotropic}, see Theorem 5.
\end{proof}

\begin{theorem}\label{thm:maximumdegree}
    Let $\mathcal{C}$ be a graph code defined with progenitor graph $G'$ and encoding vertex $e$.  The fusion distance of $\mathcal{C}$ equals $\Delta_{LC}^{(e)}(G')$.
\end{theorem}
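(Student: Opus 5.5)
By Theorem~\ref{cor:opt_strategy}, the fusion distance of $\mathcal{C}$ equals the maximum weight of a minimal logical operator of $\mathcal{C}$. So the plan is to show that this maximum equals $\Delta_{LC}^{(e)}(G')$, by proving the two inequalities separately.

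\textbf{Lower bound.} Let $H$ be LC-equivalent to $G'$. Write $\ket{H}=U\ket{G'}$ with $U=U_e\otimes U_{\mathrm{phys}}$ a local Clifford. The canonical generator $\Gamma^H_e=X_e\otimes\prod_{j\in N_H(e)}Z_j$ is then a stabilizer of $\ket{H}$, so $U^\dagger \Gamma^H_e U$ is a stabilizer of $\ket{G'}$. Its factor on $e$ is a non-identity Pauli $U_e^\dagger X_e U_e$, and its physical part is a Pauli $L$ with support exactly $N_H(e)$.

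\begin{itemize}
\item \emph{$L$ is a non-trivial logical.} The $e$-factor is non-identity, so under the identification of Section~\ref{sec:graph_codes}, $L$ acts as a non-identity logical Pauli.
\item \emph{$L$ is minimal.} Suppose a proper substring $L'$ of $L$ were a non-trivial logical. Then some stabilizer $s'=\sigma\otimes L'$ of $\ket{G'}$ has $\sigma\ne\mathbb{I}$. Conjugating by $U$ gives a stabilizer of $\ket{H}$ supported inside $\{e\}\cup N_H(e)$ and strictly smaller on the physical side. Expand it in canonical generators of $H$: any stabilizer with support in $\{e\}\cup N_H(e)$ has $X$-part only on that set, and the vertices with an $X$ or $Y$ component determine the generator product. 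A support-inside-the-closed-neighborhood argument forces this product to be a power of $\Gamma^H_e$, hence to have full support $N_H(e)$, a contradiction.
\end{itemize}

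The cleaner route for minimality, which I would use in the write-up, is the remark already made in the text: $\prod_{j\in N(e)}Z_j$ is minimal in any graph code, and LC preserves minimality because local Cliffords preserve supports. Hence $\Delta^{(e)}_{LC}(G')\le$ fusion distance.

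\textbf{Upper bound.} Let $P$ be a minimal logical operator of maximum weight. There is a stabilizer $S=\sigma\otimes P$ of $\ket{G'}$ with $\sigma\ne\mathbb{I}$; the key step is to show that $S$ is a minimal stabilizer of the state in the sense of Lemma~\ref{lemma:circuit_to_nhood_stabilizer}. Suppose a proper substring $S'=\tau\otimes P'$ of $S$ is a non-identity stabilizer. There are three cases.

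\begin{itemize}
\item \emph{$\tau\ne\mathbb{I}$.} Then $P'$ is a non-trivial logical that is a substring of $P$. By minimality $P'=P$, so $S'=\sigma\otimes P$ and $S'=S$, contradicting properness.
\item \emph{$\tau=\mathbb{I}$ and $P'\ne P$.} Then $P'$ is a code stabilizer. The product $SS'=\sigma\otimes PP'$ is a stabilizer, and since $P'$ is a substring of $P$, $PP'$ agrees with $P$ off $\mathrm{supp}(P')$ and is the identity on $\mathrm{supp}(P')$ (up to phase). Thus $PP'$ is a proper substring of $P$ and still a non-trivial logical, contradicting minimality.
\item \emph{$\tau=\mathbb{I}$ and $P'=P$.} Then $P$ itself would be a stabilizer, so not a non-trivial logical.
\end{itemize}

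This requires $P'\ne\mathbb{I}$, which holds since $S'$ is non-identity. Hence $S$ is a minimal stabilizer with $e\in\mathrm{supp}(S)$.

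Now apply Lemma~\ref{lemma:circuit_to_nhood_stabilizer} with $v=e$. It gives a graph $H$, LC-equivalent to $G'$, in which $e$ has closed neighborhood $\mathrm{supp}(S)=\{e\}\cup\mathrm{supp}(P)$. Therefore $\deg_H(e)=|P|$, and so the fusion distance is at most $\Delta^{(e)}_{LC}(G')$.

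\textbf{Main obstacle.} The main difficulty is the bookkeeping in the upper bound. Minimality of $P$ as a logical must be translated into minimality of $S$ as a stabilizer of the progenitor state, carefully handling substrings that are trivial on $e$, as in the three cases above. One must also check that the LC equivalence produced by the lemma keeps $H$ a valid progenitor graph for an equivalent code with $e$ as encoding vertex. This holds because local Cliffords on $e$ only relabel logical operators, and $e$ has positive degree in $H$.
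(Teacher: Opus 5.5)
Your proof is correct and follows essentially the same route as the paper. The lower bound comes from $\prod_{j\in N(e)}Z_j$ being a minimal logical operator whose minimality is preserved by local Cliffords. The upper bound comes from Theorem~\ref{cor:opt_strategy} combined with Lemma~\ref{lemma:circuit_to_nhood_stabilizer} applied at $v=e$, and your three-case argument supplies the minimal-logical-to-minimal-stabilizer correspondence that the paper only asserts. Do use the ``cleaner route'' for the lower bound as you intend, because your first sketch is wrong as stated: a stabilizer of $\ket{H}$ supported in $\{e\}\cup N_H(e)$ need not be a power of $\Gamma^H_e$ (e.g.\ $\Gamma^H_a$ for a leaf $a$ adjacent to $e$), and what actually forces $\Gamma^H_e$ is that the physical part is a string of $Z$'s.
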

\begin{proof}
From the discussion at the beginning of the section, the (not necessarily non-trivial/minimal) logical operators of $\mathcal{C}$ are given by the stabilizers of $\ket{G'}$.
The minimal logical operators $P$ of $\mathcal{C}$ correspond then precisely to the minimal stabilizers of $\ket{G'}$ that have support on $e$. Since by \Cref{cor:opt_strategy} the fusion distance is given by a maximum over the weights of minimal logical operators, it suffices to find the maximum weight of a minimal stabilizer that has support on $e$. From Lemma \ref{lemma:circuit_to_nhood_stabilizer} every such minimal stabilizer is, up to single-qubit Cliffords, of the form $\Gamma_e$ of some graph state $\ket{G''}$ LC-equivalent to $\ket{G'}$.
\end{proof}

\subsection{Almost all graph codes have perfect fusion strategies}

In the previous sections we characterized (graph) codes admitting perfect fusion strategies, and showed how previous examples in the literature indeed satisfy this criterion. Since perfect fusion strategies are highly desirable, it is natural to wonder how generic perfect fusion strategies are. In this work we demonstrate that a uniformly random graph code has a perfect strategy with probability exponentially close to $1$. 

Uniform random graph codes correspond to uniform random progenitor graphs.  Recall that the graph code has a perfect strategy iff there is a full weight stabilizer of the progenitor graph state with no other stabilizer of the progenitor graph as a substring.  Therefore, to demonstrate our theorem we want to show the existence of such a stabilizer.  First, recall that a cannonical generator of a graph state is a stabilizer of the form $X_i \prod_{j\in N_G(i)} Z_j$.  The product of all the canonical generators of the progenitor graph, denoted as $\Phi$, is always a full weight stabilizer since it has either an $X$ or a $Y$ on every vertex.  

We can understand when $\Phi$ has a stabilizer substring by considering the binary Laplacian.  The binary Laplacian is simply defined as taking the standard Laplacian$\mod 2$ each entry: $L_{i,i}$ is the number of neighbors of vertex $i$ in the graph$\mod 2$ and $L_{i,j}$ for $i\neq j$ is $1$ exactly when $i$ and $j$ are connected.  Let us fix a set $S\neq \emptyset, [n+1]$ and define the vector $\vec{w}=L \mathds{1}_S$.  For $i$ a vertex, $\vec{w}_i$ gives the number of edges in $G'$ adjacent to $i$ which cross the cut naturally defined by the set $S$.  So, e.g., if $i\in S$ then $\vec{w}_i$ would be the number of neighbors of $i$ in the graph which are not in $S$ $\mod 2$.  It follows that $\mathds{1}_S$ is in the null space of $L$ precisely when every vertex in $S$ has an even number of neighbors not in $S$ and when each vertex not in $S$ has an even number of neighbors in $S$.  Note that $L\mathds{1}_{[n+1]}=0$ always, so $L\mathds{1}_S=0 \Leftrightarrow L\mathds{1}_{[n+1]\setminus S}=0$. 

Now let us once again consider $\Phi$.  This stabilizer contains a stabilizer substring exactly when there exists some stabilizer $\Gamma_S$ which matches $\Phi$ on $S$ but is the identity on $[n+1]\setminus S$.  For this to be the case, every vertex in $[n+1]\setminus S$ must have an even number of neighbors in $S$.  Similarly, since $\Gamma_S$ is a substring then $\Gamma_{[n+1]\setminus S}$ must also be a substring.  It follows that every vertex in $S$ must have an even number of neighbors in $[n+1] \setminus S$.  So, we see that $\Phi$ has a stabilizer substring exactly when $L\mathds{1}_S=0$.  

With this observation in hand, a naive approach to showing that a random code has a full weight substring without stabilizer substring is showing that the only null vector of the Laplacian is $\mathds{1}_{[n+1]}$, or equivalently that the rank is $n$, with high probability.  Unfortunately, the Laplacian has maximal rank $n$ with only probability $\approx 0.42$\footnote{The initial observation relating the cut condition to the binary graph Laplacian was suggested by ChatGPT. The subsequent extension to subgraph Laplacians and its application to quantum graph codes were developed by the authors.\label{ft:AI}}.  The key to the proof is considering Laplacians of subgraphs (see \Cref{sec:part_lap}).  

Let $\Gamma_S$ be a full weight stabilizer of $G'$, let $\ket{G[S]}$ be the graph state of the induced graph $G[S]$ and let $\Phi_S$ be the restriction of $\Gamma_S$ to the qubits in $S$.  If $\Gamma_S$ has a stabilizer substring then it is easy to see that $\Phi_S$ has a stabilizer substring for $\ket{G[S]}$.  Since $\Phi_S$ corresponds exactly to the product of canonical generators described above, this implies that the binary Laplacian of the induced subgraph has a null vector other than $\mathds{1}_S$.  Similarly, if there is a null vector of the induced Laplacian $\mathds{1}_R$ other than $\mathds{1}_S$ then since $\Gamma_S$ is known to be full weight on the overall graph $\Gamma_R$ can be extended to a stabilizer substring on all of $G'$ (see \Cref{lem:part_lap}). 

Subsequently, demonstrating that random progenitor graphs have full weight stabilizer with no stabilizer substring reduces to demonstrating that there is a subset of the vertices $S$ such that the induced Laplacian has rank $|S|-1$ and such that $\Gamma_S$ is full weight.  

\begin{theorem}\label{thm:random_graph_body}
    Let $G'$ be a uniform random graph on $n+1$ vertices.  There exists a subset $S$ such that $\Gamma_S$ is full weight and $L_S$ (the induced binary Laplacian) has rank $|S|-1$ with probability at least $1-O(2^{-\delta n})$ for some constant $\delta > 0$.
\end{theorem}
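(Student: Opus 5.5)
The plan is to construct $S$ explicitly as the complement of a small set of ``excluded'' vertices, and then to exploit independence between edge sets. First I unpack the two conditions for a fixed $S\subseteq[n+1]$. The stabilizer $\Gamma_S=\prod_{i\in S}\Gamma_i$ carries $X$ or $Y$ on every $i\in S$ and $Z^{|N(i)\cap S|}$ on every $i\notin S$. Hence $\Gamma_S$ is full weight iff every vertex outside $S$ has an odd number of neighbours in $S$. The rank condition $\operatorname{rank}L_S=|S|-1$ depends only on the edges inside $S$. The full-weight condition, with $S$ fixed, depends only on the edges between $S$ and its complement. For a fixed $S$ these two events are therefore independent. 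If $|[n+1]\setminus S|=t$, the full-weight event has probability exactly $2^{-t}$, since each outside vertex independently has odd degree into $S$ with probability $1/2$.

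\textbf{Step 1 (constant success for a single candidate).} I will show that for a uniform random graph on $m$ vertices, $\PP[\operatorname{rank}L=m-1]\ge c>0$ uniformly in $m$; numerically this is $\approx 0.42$. The binary Laplacian is a random symmetric matrix with zero row sums. After deleting the last row and column, the remaining $(m-1)\times(m-1)$ matrix has independent uniform off-diagonal entries, and its diagonal entries are the parities of the degrees. Conditioned on the off-diagonal part, each diagonal entry is shifted by an independent uniform bit, namely the edge to the deleted vertex. So this matrix is a uniformly random symmetric matrix over $\FF_2$, and its nonsingularity probability is the classical constant $\prod_{j\ge1}(1-2^{1-2j})\approx0.42$. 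Rank $m-1$ of $L$ is equivalent to this minor being nonsingular, because $\mathds{1}$ always lies in the kernel. This gives $c$.

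\textbf{Step 2 (many nearly independent candidates).} A single candidate only succeeds with constant probability, so I need exponentially many trials with controlled correlation. I will split the vertices into a ``core'' $A$ of size about $n/2$ and a reservoir $B$. I will take candidates of the form $S=A\cup B'$ with $B'\subseteq B$, so that each vertex of $B\setminus B'$ must have odd degree into $S$. I will instead organise the reservoir into disjoint blocks $B_1,\dots,B_k$ with $k=\Theta(n)$, each of bounded size. The candidate for block $j$ is $S_j=[n+1]\setminus B_j$, which excludes a bounded set, so its full-weight probability is a constant.

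The cleanest route I will try first is a sequential exposure. I reveal the edges in an order such that the event ``$S_j$ succeeds'' has conditional probability at least some constant $c'>0$ given the failures of $S_1,\dots,S_{j-1}$. Failure of all $k$ candidates then has probability at most $(1-c')^{k}=2^{-\delta n}$.

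To obtain the conditional bound, I will use the following fact. The corank of $L_{S}$ changes by at most a bounded amount when a bounded number of vertices is added or removed. Moreover, when fresh random edges from a vertex into the current set are revealed, they reduce the corank with probability at least $1/2$ whenever the corank exceeds $1$. This is a rank-one update argument over $\FF_2$: the new row is uniformly random outside the current column space with probability $\ge 1/2$.

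\textbf{Main obstacle.} The hard step is exactly this decorrelation. Distinct candidates $S_j$ share almost all their internal edges, so the rank events are heavily correlated. I expect to handle it as follows. First, with probability $1-2^{-\Omega(n)}$, the corank of $L$ restricted to the core $A$ is at most some constant $r_0$; this follows from the standard $2^{-\Omega(r^2)}$ tail for coranks of random symmetric $\FF_2$ matrices. Second, conditioned on this, the edges from each block $B_j$ to $A$ are fresh and independent across blocks, and each block has constant probability of ``repairing'' the kernel down to $\mathds{1}$ while also meeting the parity condition. If the sequential argument proves awkward, my fallback is a second-moment computation on the number of successful $S_j$. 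This would bound $\PP[S_i,S_j\text{ both succeed}]$ by $(1+o(1))\,\PP[S_i]\,\PP[S_j]$ plus an exponentially small error, which yields only polynomial concentration. I would then upgrade it to an exponential bound by splitting the reservoir into $\Theta(n)$ independent groups.
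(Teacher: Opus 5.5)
\textbf{There is a genuine gap: Step~2, the decorrelation across candidates, is the whole difficulty, and the tools you propose for it do not hold.}

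Your reduction and Step~1 are correct and match what the paper uses. $\Gamma_S$ is full weight iff every vertex outside $S$ has odd degree into $S$. For fixed $S$ this cut event is independent of the rank event. The Laplacian with one row and column deleted is a uniform symmetric matrix, so $\PP[\operatorname{rank}L_S=|S|-1]=q_{|S|-1}\approx 0.42$.

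The problems with Step~2 are these.
\begin{enumerate}
\item For $S_i=[n+1]\setminus B_i$, the rank event depends on every edge not touching $B_i$, so $S_1$ and $S_2$ together already involve every edge. No exposure order leaves $S_j$ with unrevealed randomness after $S_1,\dots,S_{j-1}$ are decided. The conditional bound $c'$ is simply asserted.
\item The edges of $B_j$ do not enter $L_{S_j}$ at all. They affect only the parity condition for $S_j$ and the rank events of all \emph{other} candidates. So ``block $B_j$ repairs the kernel'' does not produce independent trials.
\item The bounded-corank-change fact is false for induced-subgraph Laplacians. Adding a vertex $u$ with neighbourhood $x\subseteq S$ does not merely border $L_S$. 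It replaces the $S$-block by $L_S+\mathrm{diag}(\mathds{1}_x)$, a perturbation of rank $|x|$. For example, the empty graph on $m$ vertices has Laplacian corank $m$, while adding one vertex joined to all of them gives a star whose Laplacian has corank $1$. So the rank-one update argument does not apply.
\item The corank of a random symmetric $\FF_2$ matrix has $\PP[\corank\ge r]=2^{-\Theta(r^2)}$ uniformly in the size. A constant $r_0$ therefore leaves a constant failure probability; $2^{-\Omega(n)}$ requires $r_0=\Theta(\sqrt n)$.
\end{enumerate}

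\textbf{The fallback does not close the gap, and the missing idea is to use exponentially many candidates.} With only $\Theta(n)$ candidates, even perfect pairwise decorrelation gives failure probability $\Theta(1/n)$, as you note. The ``independent groups'' upgrade is unavailable because the full-weight condition couples $S$ to every vertex outside it. Candidates with constant full-weight probability must contain almost all vertices and so share almost all their randomness. Candidates with large private parts have full-weight probability only $2^{-(n+1-|S|)}$.

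The paper takes the second option at scale. It lets $\mathbf N$ count the good $S$ with $|S|=s=\lfloor n/2\rfloor$, so $\EE[\mathbf N]=\binom{n}{s}2^{-\lceil n/2\rceil}q_{s-1}$ is exponentially large. For a pair $S,R$ with typical overlap $|I|\approx n/4$, it conditions on the shared block $\mathbf C$ and the cross blocks $\mathbf B,\mathbf E$. By \Cref{lem:corank_small}, $\mathbf C$ has rank at least $|I|/2$ except with probability $2^{-\Omega(n^2)}$.
\begin{itemize}
\item After this conditioning, the private blocks $\mathbf A,\mathbf F$ are independent uniform symmetric matrices.
\item The Schur-complement bound of \Cref{lem:worst_case_rank} then gives $q_{s-\lceil 3|I|/2\rceil-1}^2$ for the two rank events.
\item The joint cut probability $2^{-(2n-2s)}$ is computed exactly via \Cref{lem:d_d_transpose}.
\item Atypical overlaps are controlled by a hypergeometric tail bound.
\end{itemize}
Since the relevant indices of $q$ are linear in $n$, the pair probability is $(1+O(2^{-\Omega(n)}))$ times the product of the single-set probabilities. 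Paley--Zygmund then gives $1-O(\sqrt n\,2^{-n/144})$. Your instinct to use a second-moment argument is right; it only works when the number of candidates is exponential rather than linear.
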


\begin{proof}
See \Cref{thm:gen_perfect_technical}.
\end{proof}

\begin{corollary}
    A randomly sampled graph code encoding a single qubit has a perfect strategy with probability at least  $1-O(2^{-\delta n})$ for some constant $\delta > 0$
\end{corollary}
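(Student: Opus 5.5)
The plan is to derive the corollary from \Cref{thm:random_graph_body} by way of \Cref{cor:perfect_strategy}, with the subgraph-Laplacian reduction described just before the theorem (\Cref{lem:part_lap}) as the bridge. First I fix the correspondence. A uniformly random graph code encoding one qubit is a uniformly random progenitor graph $G'$ on $n+1$ vertices, where one vertex is designated as $e$. The footnote requires $e$ to have at least one neighbour; conditioning on this event changes probabilities only by a factor $1+O(2^{-n})$, because $e$ is isolated with probability $2^{-n}$. So it suffices to prove the bound for unconditioned uniform $G'$.

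By the discussion opening \Cref{sec:graph_codes}, the stabilizers of $\ket{G'}$ that act nontrivially on $e$ restrict on $[n]$ to the non-trivial logical operators of $\mathcal{C}$. The stabilizers that act trivially on $e$ restrict to elements of $\mathcal{S}$. By \Cref{cor:perfect_strategy}, it therefore suffices to find a stabilizer $\Gamma$ of $\ket{G'}$ that is full weight on all $n+1$ vertices and has no stabilizer of $\ket{G'}$ as a proper substring.

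Given such a $\Gamma$, write $P=\Gamma|_{[n]}$. Then $P$ is a full-weight non-trivial logical operator, since $\Gamma$ is non-identity on $e$. Suppose some non-identity $s\in\mathcal{S}$ were a substring of $P$. Then $I_e\otimes s$ would be a stabilizer of $\ket{G'}$ and a proper substring of $\Gamma$, which is a contradiction. The corollary then follows from \Cref{cor:perfect_strategy}.

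To produce $\Gamma$, I apply \Cref{thm:random_graph_body}. With probability at least $1-O(2^{-\delta n})$ there is an $S$ such that $\Gamma_S$ is full weight and the induced Laplacian $L_S$ has rank $|S|-1$, so its only nonzero null vector is $\mathds{1}_S$. Suppose $\Gamma_S$ had a proper stabilizer substring $\Gamma_R$. Two facts follow from the substring condition:
\begin{itemize}
\item $R\subseteq S$, because $\Gamma_S$ carries $Z$ rather than $X$ or $Y$ on the vertices outside $S$;
\item $R\neq\emptyset$ and $R\neq S$, since $\Gamma_R$ is non-identity and a proper substring.
\end{itemize}
Restricting to $S$, $\Gamma_R|_S$ is a substring of $\Phi_S$ in $G'[S]$. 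By the cut argument given before the theorem, this forces $L_S\mathds{1}_R=0$, contradicting the rank condition. This is the direction the text calls "easy to see", and \Cref{lem:part_lap} packages it.

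Combining the two steps gives the claimed probability. The main point needing care is the restriction step: one has to check that $\Gamma_R|_S$ agreeing with $\Phi_S$ on $R$ and being the identity on $S\setminus R$ really yields the even-cut condition inside $G'[S]$. This holds because the $X$/$Z$ pattern of a product of canonical generators restricted to $S$ depends only on the edges within $S$, up to $Z$ factors contributed by $R$, which are the same as in $G'[S]$. The rest is bookkeeping.
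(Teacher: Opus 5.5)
Your proposal is correct and follows the paper's route: \Cref{thm:random_graph_body} supplies an $S$ with $\Gamma_S$ full weight and $\textrm{rank}(L_S)=|S|-1$, the forward direction of \Cref{lem:part_lap} rules out any proper stabilizer substring $\Gamma_R$, and \Cref{cor:perfect_strategy} turns this into a perfect strategy. Two details the paper leaves implicit are filled in by your argument: the passage from the $(n+1)$-qubit stabilizer to a full-weight logical operator with no stabilizer substring, and the conditioning on $e$ being non-isolated.
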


\section{Examples}\label{sec:examples}
Armed with the above characterization (\Cref{cor:opt_strategy}), it is easy to explain known examples of perfect strategies while also demonstrating new ones.  Indeed, \cite{lee2015nearly} considers the repetition code where $\ket{0}=\ket{+}^{\otimes n}$ and $\ket{1}=\ket{-}^{\otimes n}$ and notes that $Z^{\otimes n}$ is a perfect strategy.  This is simply a representative of $\bar{X}$ and it is easy to see that it contains no stabilizer substring since the repetition code contains only $X$ operators in the stabilizer.  Another previously known example is the five-qubit code \cite{laflamme1996perfect}. The authors of \cite{reiss2026optimal}  note that one can use either $X^{\otimes 5}$ or $Z^{\otimes 5}$ ($Y^{\otimes 5}$ also works) to obtain perfect strategies.  These are both logical operators for the code, and it can be verified that they have no stabilizer substring (any non-full weight $X$ or $Z$ string will anti-commute with one of the stabilizers) so our characterization immediately implies that these are perfect strategies.  The $surface(2, 2)$ code in \cite{schmidt2019efficiencies} also has a perfect strategy which fits our characterization.  Indeed the stabilizer group is 
$$
\mathcal{S} \;=\; \bigl\langle\,Z_1 Z_3 Z_4,\, Z_2 Z_3 Z_5,\,X_1 X_2 X_3,\, X_3 X_4 X_5 \,\bigr\rangle ,
$$
and the perfect strategy they found is 
$$
X_1 Z_2 Y_3 Z_4 X_5
\;=\; Y_1 Z_2 X_4 \cdot\, Z_1 Z_3 Z_4 \cdot  X_3 X_4 X_5=\bar{Y},
$$
where $Y_1 Z_2 X_4$ is a ``canonical representative'' of $\bar{Y}$.

For the quantum parity check (QPC) code, the authors of \cite{schmidt2019efficiencies} stated the existence of a perfect strategy as an open problem; we are able to provide an example of one for a QPC code of any size.  The simplest example is $QPC(2, 2)$ which is a quantum code on $4$ qubits with the stabilizer group equal to $\langle XXXX, ZZII, IIZZ \rangle $.  Representatives of logical operators for this code are $\bar{X}=XXII$ and $\bar{Z}=IZIZ$.  We can see that $XYXY$ is a full weight logical $\bar{Z}$ representative, not containing any stabilizer as a substring. As such, $QPC(2, 2)$ has a perfect fusion strategy with $Q = XYXY$.

In general, the code $QPC(n, m)$ is defined as follows: There are $nm$ qubits where the qubits are organized into $n$ rows of $m$ qubits each.  We can index the qubits by row and position number so that the operator $\sigma_{(i, j)}$ corresponds to operator $\sigma$ acting on the qubit at row $i$ and column $j$. The stabilizer group is generated by tensor products of $X$ operators acting on adjacent rows, $X_{(i, 1)} X_{(i, 2)}... X_{(i, m) } X_{(i+1, 1)} X_{(i+1, 2)}... X_{(i+1, m)} $, as well as pairs of $Z$ operators acting on adjacent qubits inside the same row, $Z_{(i, j)} Z_{(i, j+1)}$.  The number of algebraically independent generators for the code is $(m-1)n+n-1=nm-1$ so these codes encode exactly one logical qubit.  We can take representatives of $\bar{X}$ and $\bar{Z}$ to be $X_{(1, 1)} X_{(1, 2)}... X_{(1, m)}$ and $Z_{(1,1)} Z_{(2, 1)}... Z_{(n, 1)}$ respectively.  

An optimal fusion strategy here is similar to the $QPC(2, 2)$ case, it is the tensor product of operators where in each row every qubit receives an $X$ operator except the last one which receives a $Y$ operator: $X_{(1, 1)} X_{(1, 2)}... X_{(1, m-1)}Y_{(1, m)} X_{(2, 1)}... X_{(2, m-1)} Y_{(2, m)}... Y_{(n, m)}$.  Depending on whether $n$ is even or odd, the fusion strategy is a representative of $\bar{Z}$ or $\bar{Y}$ respectively.  We can easily see this operator contains no stabilizer substring by the following:  Any pure $X$ type stabilizer would have an entire row populated by $X$ operators and we can see there is no substring of this form.  Note also that any other stabilizer would require an even number of $Z$ operators in a row or an even number of $Y$ operators in a row.  We cannot ``read off'' a substring here with an even number of $Y$ or $Z$ operators in a row.

In our numerics we came across several codes without perfect fusion strategies.  The smallest possible example without a perfect strategy and with a connected progenitor graph (if the graph is disconnected it is easy to see there cannot be any perfect strategy) is a code on $5$ qubits and we can describe an explicit generating set as $\langle Z_1 X_2 Z_4, Z_1 X_3 Z_4, Z_2 Z_3 X_4 Z_5, Z_4 X_5 \rangle $.  While this code has no perfect non-adaptive strategy, we were able to confirm the existence of an adaptive strategy which requires only a single fusion success to obtain logical success.  This fits well with existing literature \cite{reiss2026optimal} which indicates that there are codes with adaptive strategies that are strictly better than all non-adaptive strategies in the lossless setting.

\section{Discussion}

A promising direction for future research is the $k>1$ case. In this case, strategies which succeed with only a single fusion success cannot be possible. Our preliminary numerics suggest that the $k=2$ case is richer, for example, the codes we tested do not have strategies where any two fusion successes are sufficient.  
We leave open the question of which values of $n$, $k$ and $t$ are achievable with random codes/graphs with high probability.  Another open direction is looking for error-resistant fusion strategies \emph{subject to the constraint} that the fusion strategy is perfect.  Our results indicate that most codes for $k=1$ have perfect strategies, and given a code it is easy to identify if a given fusion strategy is perfect using our criteria, so it is natural to use ``perfectness'' of the fusion strategy as a constraint in the optimization of logical fusions with the goal of minimizing the effects of some other common error source, i.e. loss.  

Early on in this project we formulated the conjecture that for $k=1$ a connected progenitor graph $G'$ indicated the existence of a perfect strategy.  This was false and the example of the code given in \Cref{sec:examples} is such an example.  However, this code does have a perfect adaptive strategy.  So, a natural conjecture and direction of future work is that all $[[n, 1, d]]$ codes with connected progenitor graphs have perfect adaptive strategies. 

The existence of a fusion strategy with a given failure distance is not something easy to check or determine {\it a priori}.  Indeed, related problems are generally NP-hard at least~\cite{kapshikar2023hardness}.  We leave open the complexity of finding an optimal fusion strategy given a code (or equivalently finding the maximum degree of a vertex in a graph taken over all LC-equivalent graphs).  It is encouraging to note that our results indicate that the maximum LC degree behaves very differently from the minimum LC degree (the distance of the underlying code) for random graphs, which could indicate a complexity difference between the two.

\section*{Acknowledgments}
KG acknowledges the support from the Alexander von Humboldt Foundation.

JL acknowledges the support from the Dutch Research Council (NWO) under the project Boosting the Search for New Quantum Algorithms with AI (BoostQA), file number NGF.1623.23.033, research programme Quantum Technologie 2023.

AL, AR and KT are supported by the Error-corrected Photonic Integrated Qubits (EPIQ) project.  Sandia National Laboratories is a multimission laboratory managed and operated by National Technology and Engineering Solutions of Sandia, LLC, a wholly owned subsidiary of Honeywell International Inc., for the U.S. Department of Energy’s National Nuclear Security Administration under contract DE-NA0003525.

This paper describes objective technical results and analysis. Any subjective views or opinions that might be expressed in the paper do not necessarily represent the views of the U.S. Department of Energy or the United States Government. 

\subsection*{AI Acknowledgment}

AI proposed proof strategies for \Cref{lem:part_lap} and \Cref{lem:worst_case_rank}. The overarching proof strategies and ideas present in this paper were human generated (with the exception of \Cref{ft:AI}).  AI tools were also used to find typos.

\bibliographystyle{plain}
\bibliography{citations.bib}

@article{bouchet1988graphic,
  title={Graphic presentations of isotropic systems},
  author={Bouchet, Andr{\'e}},
  journal={Journal of Combinatorial Theory, Series B},
  volume={45},
  number={1},
  pages={58--76},
  year={1988},
  publisher={Elsevier}
}

@article{grice2011arbitrarily,
  title={Arbitrarily complete Bell-state measurement using only linear optical elements},
  author={Grice, Warren P},
  journal={Physical Review A—Atomic, Molecular, and Optical Physics},
  volume={84},
  number={4},
  pages={042331},
  year={2011},
  publisher={APS}
}

@misc{silverman2004symmetric,
  author       = {Silverman, Jared S.},
  title        = {The Probability that a Random Symmetric Matrix 
                  over $\mathbb{F}_2$ is Nonsingular},
  howpublished = {REU report, Clemson University},
  year         = {2004},
  url          = {https://www.math.clemson.edu/~kevja/REU/2004/SymmetricRankRMatrices.pdf}
}

@article{ewert20143,
  title={3/4-efficient bell measurement with passive linear optics and unentangled ancillae},
  author={Ewert, Fabian and van Loock, Peter},
  journal={Physical review letters},
  volume={113},
  number={14},
  pages={140403},
  year={2014},
  publisher={APS}
}

@article{zaidi2013beating,
  title={Beating the one-half limit of ancilla-free linear optics Bell measurements},
  author={Zaidi, Hussain A and van Loock, Peter},
  journal={Physical review letters},
  volume={110},
  number={26},
  pages={260501},
  year={2013},
  publisher={APS}
}

@article{khesin2025universal,
  title={Universal graph representation of stabilizer codes},
  author={Khesin, Andrey Boris and Lu, Jonathan Z and Shor, Peter W},
  journal={PRX Quantum},
  volume={6},
  number={4},
  pages={040325},
  year={2025},
  publisher={APS}
}

@inproceedings{goodenough2024bipartite,
  title={Bipartite entanglement of noisy stabilizer states through the lens of stabilizer codes},
  author={Goodenough, Kenneth and Sajjad, Aqil and Kaur, Eneet and Guha, Saikat and Towsley, Don},
  booktitle={2024 IEEE International Symposium on Information Theory (ISIT)},
  pages={545--550},
  year={2024},
  organization={IEEE}
}

@article{bartolucci2023fusion,
  title={Fusion-based quantum computation},
  author={Bartolucci, Sara and Birchall, Patrick and Bombin, Hector and Cable, Hugo and Dawson, Chris and Gimeno-Segovia, Mercedes and Johnston, Eric and Kieling, Konrad and Nickerson, Naomi and Pant, Mihir and others},
  journal={Nature Communications},
  volume={14},
  number={1},
  pages={912},
  year={2023},
  publisher={Nature Publishing Group UK London}
}

@article{azuma2015all,
  title={All-photonic quantum repeaters},
  author={Azuma, Koji and Tamaki, Kiyoshi and Lo, Hoi-Kwong},
  journal={Nature communications},
  volume={6},
  number={1},
  pages={6787},
  year={2015},
  publisher={Nature Publishing Group UK London}
}

@article{li2015resource,
  title={Resource costs for fault-tolerant linear optical quantum computing},
  author={Li, Ying and Humphreys, Peter C and Mendoza, Gabriel J and Benjamin, Simon C},
  journal={Physical Review X},
  volume={5},
  number={4},
  pages={041007},
  year={2015},
  publisher={APS}
}

@article{lee2015nearly,
  title={Nearly deterministic Bell measurement for multiphoton qubits and its application to quantum information processing},
  author={Lee, Seung-Woo and Park, Kimin and Ralph, Timothy C and Jeong, Hyunseok},
  journal={Physical review letters},
  volume={114},
  number={11},
  pages={113603},
  year={2015},
  publisher={APS}
}

@article{reiss2026optimal,
  title={Optimal logical Bell measurements on stabilizer codes with linear optics},
  author={Rei{\ss}, Simon D and van Loock, Peter},
  journal={arXiv preprint arXiv:2601.08820},
  year={2026}
}

@article{schmidt2019efficiencies,
  title={Efficiencies of logical Bell measurements on Calderbank-Shor-Steane codes with static linear optics},
  author={Schmidt, Frank and van Loock, Peter},
  journal={Physical Review A},
  volume={99},
  number={6},
  pages={062308},
  year={2019},
  publisher={APS}
}

@article{laflamme1996perfect,
  title={Perfect quantum error correcting code},
  author={Laflamme, Raymond and Miquel, Cesar and Paz, Juan Pablo and Zurek, Wojciech Hubert},
  journal={Physical Review Letters},
  volume={77},
  number={1},
  pages={198},
  year={1996},
  publisher={APS}
}

@article{pettersson2025long,
  title={Long-distance quantum communication using concatenated ring graph codes},
  author={Pettersson, Love and S{\o}rensen, Anders S},
  journal={Physical Review Applied},
  volume={24},
  number={4},
  pages={044090},
  year={2025},
  publisher={APS}
}

@article{lee2019fundamental,
  title={Fundamental building block for all-optical scalable quantum networks},
  author={Lee, Seung-Woo and Ralph, Timothy C and Jeong, Hyunseok},
  journal={Physical Review A},
  volume={100},
  number={5},
  pages={052303},
  year={2019},
  publisher={APS}
}

@inproceedings{cross2008codeword,
  title={Codeword stabilized quantum codes},
  author={Cross, Andrew and Smith, Graeme and Smolin, John A and Zeng, Bei},
  booktitle={2008 IEEE International Symposium on Information Theory},
  pages={364--368},
  year={2008},
  organization={IEEE}
}

@article{shor1995scheme,
  title={Scheme for reducing decoherence in quantum computer memory},
  author={Shor, Peter W},
  journal={Physical review A},
  volume={52},
  number={4},
  pages={R2493},
  year={1995},
  publisher={APS}
}

@article{aasen2023measurement,
  title={Measurement quantum cellular automata and anomalies in Floquet codes},
  author={Aasen, David and Haah, Jeongwan and Li, Zhi and Mong, Roger SK},
  journal={arXiv preprint arXiv:2304.01277},
  year={2023}
}

@article{browne2005resource,
  title={Resource-efficient linear optical quantum computation},
  author={Browne, Daniel E and Rudolph, Terry},
  journal={Physical Review Letters},
  volume={95},
  number={1},
  pages={010501},
  year={2005},
  publisher={APS}
}

@article{ewert2016ultrafast,
  title={Ultrafast long-distance quantum communication with static linear optics},
  author={Ewert, Fabian and Bergmann, Marcel and Van Loock, Peter},
  journal={Physical review letters},
  volume={117},
  number={21},
  pages={210501},
  year={2016},
  publisher={APS}
}

@article{hilaire2023linear,
  title={Linear optical logical Bell state measurements with optimal loss-tolerance threshold},
  author={Hilaire, Paul and Castor, Yaron and Barnes, Edwin and Economou, Sophia E and Grosshans, Fr{\'e}d{\'e}ric},
  journal={PRX quantum},
  volume={4},
  number={4},
  pages={040322},
  year={2023},
  publisher={APS}
}

@article{bell2023optimizing,
  title={Optimizing graph codes for measurement-based loss tolerance},
  author={Bell, Thomas J and Pettersson, Love A and Paesani, Stefano},
  journal={PRX Quantum},
  volume={4},
  number={2},
  pages={020328},
  year={2023},
  publisher={APS}
}

@article{schlingemann2001quantum,
  title={Quantum error-correcting codes associated with graphs},
  author={Schlingemann, Dirk and Werner, Reinhard F},
  journal={Physical Review A},
  volume={65},
  number={1},
  pages={012308},
  year={2001},
  publisher={APS}
}

@book{nielsen2001quantum,
  title={Quantum computation and quantum information},
  author={Nielsen, Michael A and Chuang, Isaac L},
  volume={2},
  year={2001},
  publisher={Cambridge university press Cambridge}
}

@article{hein2006entanglement,
  title={Entanglement in graph states and its applications},
  author={Hein, Marc and D{\"u}r, Wolfgang and Eisert, Jens and Raussendorf, Robert and Nest, M and Briegel, H-J},
  journal={arXiv preprint quant-ph/0602096},
  year={2006}
}

@article{van2004graphical,
  title={Graphical description of the action of local Clifford transformations on graph states},
  author={Van den Nest, Maarten and Dehaene, Jeroen and De Moor, Bart},
  journal={Physical Review A},
  volume={69},
  number={2},
  pages={022316},
  year={2004},
  publisher={APS}
}

@article{kapshikar2023hardness,
  title={On the hardness of the minimum distance problem of quantum codes},
  author={Kapshikar, Upendra and Kundu, Srijita},
  journal={IEEE Transactions on Information Theory},
  volume={69},
  number={10},
  pages={6293--6302},
  year={2023},
  publisher={IEEE}
}

@inproceedings{javelle2012minimum,
  title={On the minimum degree up to local complementation: Bounds and complexity},
  author={Javelle, J{\'e}r{\^o}me and Mhalla, Mehdi and Perdrix, Simon},
  booktitle={International Workshop on Graph-Theoretic Concepts in Computer Science},
  pages={138--147},
  year={2012},
  organization={Springer}
}

@article{mhalla2012graph,
  title={Graph states, pivot minor, and universality of (X, Z)-measurements},
  author={Mhalla, Mehdi and Perdrix, Simon},
  journal={arXiv preprint arXiv:1202.6551},
  year={2012}
}

@inproceedings{hoyer2006resources,
  title={Resources required for preparing graph states},
  author={H{\o}yer, Peter and Mhalla, Mehdi and Perdrix, Simon},
  booktitle={International Symposium on Algorithms and Computation},
  pages={638--649},
  year={2006},
  organization={Springer}
}

@inproceedings{cattaneo2015minimum,
  title={Minimum degree up to local complementation: Bounds, parameterized complexity, and exact algorithms},
  author={Cattan{\'e}o, David and Perdrix, Simon},
  booktitle={International Symposium on Algorithms and Computation},
  pages={259--270},
  year={2015},
  organization={Springer}
}

@inproceedings{claudet2024covering,
  title={Covering a graph with minimal local sets},
  author={Claudet, Nathan and Perdrix, Simon},
  booktitle={International Workshop on Graph-Theoretic Concepts in Computer Science},
  pages={136--150},
  year={2024},
  organization={Springer}
}

@article{brijder2015isotropic,
  title={Isotropic matroids I: Multimatroids and neighborhoods},
  author={Brijder, Robert and Traldi, Lorenzo},
  journal={arXiv preprint arXiv:1503.04406},
  year={2015}
}

@article{oumazouz2025classification,
  title={On the classification of graphs induced by a sequence of local complementations of Paley graphs},
  author={Oumazouz, Zhour and others},
  journal={Journal of Combinatorial Mathematics and Combinatorial Computing},
  volume={126},
  pages={29--72},
  year={2025},
  publisher={Combinatorial Press}
}

@article{goodenough2024near,
  title={Near-term n to k distillation protocols using graph codes},
  author={Goodenough, Kenneth and De Bone, Sebastian and Addala, Vaishnavi and Krastanov, Stefan and Jansen, Sarah and Gijswijt, Dion and Elkouss, David},
  journal={IEEE Journal on Selected Areas in Communications},
  volume={42},
  number={7},
  pages={1830--1849},
  year={2024},
  publisher={IEEE}
}

@article{CHVATAL1979285,
title = {The tail of the hypergeometric distribution},
journal = {Discrete Mathematics},
volume = {25},
number = {3},
pages = {285-287},
year = {1979},
issn = {0012-365X},
doi = {https://doi.org/10.1016/0012-365X(79)90084-0},
url = {https://www.sciencedirect.com/science/article/pii/0012365X79900840},
author = {V. Chvátal}
}

\appendix

\section{Proofs of Lemma \ref{lemma:dimension} and \ref{lemma:bouchet}}\label{section:proof_lemma}

\begin{lemma}
Let $\mathcal{C}$ be an $[[n, k, d]]$ stabilizer code with stabilizer group $\mathcal{S}$, and let $\mathcal{S}^\perp$ be its dual. Let $Q$ be a full-weight Pauli string on $n$ qubits. Then 

\begin{align}
\dim\left(\hat{Q}\cap \mathcal{S}^\perp\right) - \dim\left(\hat{Q}\cap \mathcal{S}\right) = k ,
\end{align}
where here we interpret $\hat{Q}\cap \mathcal{S}^\perp$ and $\hat{Q}\cap \mathcal{S}$ as subspaces of $\hat{Q}\cong\mathbb{F}_2^n$.
\end{lemma}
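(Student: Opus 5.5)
We will work in the symplectic picture. Identify $\overline{\mathcal{P}_n}$ with $\mathbb{F}_2^{2n}$, equipped with the standard symplectic form $\omega$; two Paulis commute iff $\omega$ vanishes on them. Under this identification $\mathcal{S}$ is an isotropic subspace of dimension $n-k$, and $\mathcal{S}^\perp$ is its symplectic complement, of dimension $n+k$, with $\mathcal{S}\subseteq \mathcal{S}^\perp$. Since $Q$ is full weight, $\hat{Q}$ is an $n$-dimensional subspace: in each coordinate it is spanned by the single nonidentity Pauli $\sigma_i$. It is Lagrangian, i.e.\ isotropic of dimension $n$, because substrings of one Pauli string pairwise commute. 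The statement is thus the following fact about a Lagrangian $\Lambda=\hat{Q}$ and an isotropic $\mathcal{S}$ of dimension $n-k$:
\begin{align}
\dim(\Lambda\cap\mathcal{S}^\perp)-\dim(\Lambda\cap\mathcal{S}) = k .
\end{align}
Full weight is used only to guarantee $\Lambda^\perp=\Lambda$.

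The key step is to compute $\Lambda\cap\mathcal{S}^\perp$ via duality. Symplectic complements reverse inclusions and exchange sums and intersections. Applying this to $\Lambda$ and $\mathcal{S}$ gives
\begin{align}
\Lambda\cap\mathcal{S}^\perp = \Lambda^\perp\cap\mathcal{S}^\perp = (\Lambda+\mathcal{S})^\perp .
\end{align}
Using nondegeneracy of $\omega$ on $\mathbb{F}_2^{2n}$,
\begin{align}
\dim(\Lambda\cap\mathcal{S}^\perp) = 2n-\dim(\Lambda+\mathcal{S}) = 2n - n - (n-k) + \dim(\Lambda\cap\mathcal{S}) = k+\dim(\Lambda\cap\mathcal{S}),
\end{align}
where the middle equality is the Grassmann formula $\dim(\Lambda+\mathcal{S})=\dim\Lambda+\dim\mathcal{S}-\dim(\Lambda\cap\mathcal{S})$. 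Rearranging gives the claim.

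The only points needing care are the bookkeeping of phases and the identification of the intersections as the paper intends them. Phases are handled by working in $\overline{\mathcal{P}_n}$ and treating $\mathcal{S}$ as its image there; this is fine because $-\mathbb{I}\notin\mathcal{S}$, so the projection is injective on $\mathcal{S}$. We must then check that "$\hat{Q}\cap\mathcal{S}$ as a subspace of $\hat Q\cong\mathbb{F}_2^n$" agrees with the symplectic intersection. Here one subtlety arises: a substring of $Q$ might equal a stabilizer only up to sign. We take the convention that membership is tested modulo phase, consistent with the definition of $\hat{A}$ "up to a phase".

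The identity $(A+B)^\perp=A^\perp\cap B^\perp$ and the formula $\dim V^\perp = 2n-\dim V$ are standard linear algebra, so they are not an obstacle. I expect the main (mild) obstacle to be justifying that $\hat Q$ is exactly Lagrangian, i.e.\ that $\hat{Q}^\perp=\hat{Q}$ rather than merely $\hat{Q}\subseteq\hat{Q}^\perp$. I would verify this directly: $\hat{Q}^\perp$ consists of the strings that commute with each single-site $\sigma_i$, which forces each site to be $\mathbb{I}$ or $\sigma_i$ up to phase. This gives $\hat{Q}^\perp=\hat{Q}$, using full weight.
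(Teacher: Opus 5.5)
Your proof is correct and follows the paper's argument: rewrite $\hat{Q}\cap\mathcal{S}^\perp=\hat{Q}^\perp\cap\mathcal{S}^\perp=(\hat{Q}+\mathcal{S})^\perp$, then apply $\dim W^\perp=2n-\dim W$ and the dimension formula for a sum of subspaces. Your explicit check that $\hat{Q}^\perp=\hat{Q}$ (the paper only notes that $\hat{Q}$ is the stabilizer group of a product state) and your remarks on phases are sound additions but do not change the route.
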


\begin{proof}
Define $\hat{Q} + \mathcal{S}:=\textrm{span}
(\hat{Q} \cup \mathcal{S})$ to be the set of all elements of the form $a+b$, with $a\in \hat{Q}$ and $b\in \mathcal{S}$, where we used $+$ for the binary operation since we are modding out phases. Now note that

\begin{align}
\left(\hat{Q} + \mathcal{S}\right)^\perp =\hat{Q}^\perp \cap \mathcal{S}^\perp =  \hat{Q}\cap \mathcal{S}^\perp\ ,\label{eq:push_perp}
\end{align}
where in the first equality we used that an element commutes with all of $\hat{Q} + \mathcal{S}$ if and only if it commutes with all of $\hat{Q}$ and $\mathcal{S}$. In the second equality we used that $\hat{Q}^\perp = \hat{Q}$, since $\hat{Q}$ is the stabilizer group of a state. From this we find

\begin{align}
\dim(\hat{Q}\cap \mathcal{S}^\perp) = & \dim((\hat{Q}+\mathcal{S})^\perp)\\
=&\,2n - \dim(\hat{Q}+\mathcal{S})\\
=&\,2n-\left(\dim (\hat{Q})+\dim(\mathcal{S})-\dim(\hat{Q}\cap \mathcal{S})\right)\\
=&\,2n-(n+(n-k)-\dim(\hat{Q}\cap \mathcal{S}))\\
=&\,k+\dim(\hat{Q}\cap \mathcal{S}) .\label{eq:final_step_lemma}
\end{align}
In the first equality we used Eq.~\eqref{eq:push_perp}. In the second equality we used that for any subspace $W$ in a symplectic vector space $\mathcal{W}$ it holds that $\dim(\mathcal{W}) = \dim(W) + \dim(W^\perp)$ \cite{nielsen2001quantum}. The third equality is the dimension formula for the sum of vector spaces, and the last equality follows from basic rewriting. The lemma now follows from rewriting Eq.~\eqref{eq:final_step_lemma}.
\end{proof}

\begin{lemma}
Let $\mathcal{C}$ be an $[[n, k, d]]$ stabilizer code with stabilizer group $\mathcal{S}$, and let $P$ be any (not necessarily full-weight) Pauli string on $n$ qubits. If $\dim(\hat{P}\cap \mathcal{S})=m$, then there exists a full-weight superstring $Q$ of $P$ such that $\dim(\hat{Q}\cap \mathcal{S})=m$.
\end{lemma}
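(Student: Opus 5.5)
We extend $P$ one qubit at a time, each time filling a position where $P$ is the identity with a suitable non-identity Pauli, so that $\dim(\hat{P}\cap\mathcal{S})$ never increases. It also never decreases, because $\hat{P}\subseteq\hat{Q}$ for any superstring $Q$. Iterating over all identity positions then yields a full-weight superstring $Q$ with $\dim(\hat{Q}\cap\mathcal{S})=m$. It therefore suffices to prove the following one-step claim. Let $P$ have $P(i)=\mathbb{I}$ for some qubit $i$. Then at least one of the three strings $P_\sigma$ obtained by setting position $i$ to $\sigma\in\{X,Y,Z\}$ satisfies $\dim(\hat{P}_\sigma\cap\mathcal{S})=\dim(\hat{P}\cap\mathcal{S})$.

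For the one-step claim we work in the symplectic picture modulo phases. Each $\hat{P}_\sigma$ equals $\hat{P}\oplus\langle\sigma_i\rangle$, where $\sigma_i$ denotes $\sigma$ on qubit $i$. An element of $\hat{P}_\sigma\cap\mathcal{S}$ outside $\hat{P}$ therefore has the form $a\,\sigma_i$ with $a\in\hat{P}$. So $\dim$ increases (by exactly one) iff there is some $a\in\hat{P}$ with $a\,\sigma_i\in\mathcal{S}$.

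Suppose for contradiction that this happens for all three $\sigma$. Then there are $a_X,a_Z\in\hat{P}$ with $a_XX_i,\ a_ZZ_i\in\mathcal{S}$. Multiplying these two stabilizers gives $a_Xa_Z\,Y_i\in\mathcal{S}$ up to phase. The crucial point is commutation. $a_X$ and $a_Z$ both lie in $\hat{P}$, which is abelian since all its elements are substrings of a single Pauli string. Moreover, neither of $a_X,a_Z$ acts on qubit $i$. Hence
\[
[a_XX_i,\ a_ZZ_i]\neq 0,
\]
because on qubit $i$ the factors $X$ and $Z$ anticommute, while on every other qubit the factors commute. This contradicts $\mathcal{S}$ being abelian. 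So already for one of $\sigma\in\{X,Z\}$ the dimension does not increase, which proves the one-step claim and hence the lemma by induction on the number of identity positions of $P$.

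The main subtlety is just the bookkeeping. We must justify that $\hat{P}_\sigma=\hat{P}\oplus\langle\sigma_i\rangle$ as $\mathbb{F}_2$-spaces, which holds since qubit $i$ is outside the support of $P$. We must also check that the induction preserves $m$ exactly, and not merely as a bound: the inclusion $\hat{P}\subseteq\hat{P}_\sigma$ gives $\geq$, and the one-step claim gives $\leq$. No appeal to Lemma~\ref{lemma:dimension} is needed.
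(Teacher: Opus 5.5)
Your proposal is correct and follows essentially the same route as the paper: extend $P$ one identity position at a time, and rule out a dimension increase for two distinct choices $\sigma\neq\tau$, because the resulting stabilizers $a_\sigma\sigma_i$ and $a_\tau\tau_i$ (with $a_\sigma,a_\tau\in\hat{P}$) would anticommute exactly at qubit $i$. Your argument in fact shows that at most one of the three choices can raise the dimension, which is the paper's ``at least two'' claim. The remark about $a_Xa_Z Y_i\in\mathcal{S}$ is never used and can be dropped.
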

\begin{proof}
We may assume that $P$ is not full-weight, since otherwise the statement is clearly true. Since $P$ is not full-weight it has an identity entry at some index $i$. We now set $P_X = P\cdot X_i$, $P_Y=P\cdot Y_i$ and $P_Z=P\cdot Z_i$, and claim that there are always at least two choices of an element $P'$ from $\lbrace{P_X, P_Y, P_Z\rbrace}$ such that $\dim(\hat{P'}\cap \mathcal{S})=\dim(\hat{P}\cap \mathcal{S})=m$. Indeed, if this is true then it is always possible to continue extending $P$ until a full-weight $Q$ has been reached, such that $\dim(\hat{P}\cap \mathcal{S})=\dim(\hat{Q}\cap \mathcal{S})=m$.

\medskip

It thus remains to prove the claim that at least two of the $\lbrace{P_X, P_Y, P_Z\rbrace}$ satisfy $\dim(\hat{P'}\cap \mathcal{S})=\dim(\hat{P}\cap \mathcal{S})$. Assume, towards a proof by contradiction, that at least two distinct $P_A, P_B\in \lbrace{P_X, P_Y, P_Z\rbrace} $ satisfy $\dim(\hat{P_A}\cap \mathcal{S})\neq m$ and $\dim(\hat{P_B}\cap \mathcal{S})\neq m$. In particular, this means that $\dim(\hat{P_A}\cap \mathcal{S}) = \dim(\hat{P_B}\cap \mathcal{S}) = m+1$, by the following:  $\hat{P}_A$ is the disjoint union $\hat{P} \sqcup (\hat{P}A_i)$ where $\hat{P}A_i$ is the set of Pauli operators with $A$ at position $i$ and an element of $\hat{P}$ at the remaining positions.  It follows that $\dim(\hat{P}_A \cap \mathcal{S}) \leq m+1$ and similarly for $\dim(\hat{P}_B \cap\mathcal{S})$ since any two elements of $\hat{P}A_i\cap \mathcal{S}$ must be related by an element of $\hat{P}\cap \mathcal{S}$. So, starting with a basis of $\hat{P}\cap \mathcal{S}$ there can be at most one independent element of $\hat{P}A_i \cap \mathcal{S}$.

This means that both $(\hat{P_A}\cap \mathcal{S})\setminus \hat{P}$ and $(\hat{P_B}\cap \mathcal{S})\setminus \hat{P}$ are non-empty. As such, let $S_A$ and $S_B$ be two elements in $(\hat{P_A}\cap \mathcal{S})\setminus \hat{P}$ and $(\hat{P_B}\cap \mathcal{S})\setminus \hat{P}$, respectively. Note that $S_A$ and $S_B$ have distinct non-identity elements on the $i$'th entry, i.e.~$S_A$ and $S_B$ anti-commute at the $i$'th entry. But since $S_A$ and $S_B$ are stabilizers, they need to anti-commute an even number of times across all entries. However, $S_A$, $S_B$ commute at each entry $j$ in the support of $P$, since at each such entry $j$ we have that $P_A$ and $P_B$ are either identity, or are equal to $P(j)$. As such, $S_A$ and $S_B$ cannot commute, leading to a contradiction.
\end{proof}

\begin{lemma}[\cite{brijder2015isotropic}]
A stabilizer $S$ of a stabilizer state is called a \emph{minimal stabilizer} if $S$ is not the identity, and no proper substring of $S$ is a non-identity stabilizer. Let $\ket{\phi}$ be any stabilizer state, let $S$ be any of its minimal stabilizers, and fix an arbitrary $v\in \textrm{supp}(S)$. Then $\ket{\phi}$ is locally equivalent to a graph state $\ket{G}$ such that vertex $v$ has closed neighborhood equal to $\textrm{supp}(S)$.
\end{lemma}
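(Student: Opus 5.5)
Since $|\mathrm{supp}(S)|$ and the minimality of $S$ are invariant under local Cliffords, I would first bring $\ket{\phi}$ to some graph state $\ket{G_0}$. Any stabilizer state is LC-equivalent to a graph state, and local Cliffords permute single-qubit Paulis at each site, so $S$ is carried to a minimal stabilizer of $\ket{G_0}$ with the same support. Write $T=\mathrm{supp}(S)$. In the graph picture $S=\pm\Gamma_x Z_c$ for some $x\in\mathbb{F}_2^{n}$, where $c$ is the vector of odd-cut counts into $\mathrm{supp}(x)$. Equivalently, $S$ is described by the pair $(x,z)$ with $z = A x$, $A$ being the adjacency matrix. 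From now on the goal is purely graph-theoretic: find a graph $G$, LC-equivalent to $G_0$, whose closed neighbourhood of $v$ is exactly $T$. The stabilizer $\Gamma_v$ of $G$ then has support $N_G[v]=T$. One also needs the consistency fact that a minimal stabilizer is determined by its support: two stabilizers on the same minimal support whose product is supported on a proper subset would give a smaller nonidentity stabilizer. So $\Gamma_v$ is necessarily the image of $S$ under the LC map.

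The key step is an induction that moves $v$ into $\mathrm{supp}(x)$ and shrinks $\mathrm{supp}(x)$ to $\{v\}$. First suppose $x_v=0$, so $v$ carries a $Z$ in $S$. Then $v$ has an odd number of neighbours in $\mathrm{supp}(x)$, and I pick one such neighbour $u$. Local complementation at $u$, or a pivot along $uv$, exchanges the $X$/$Z$ roles at the relevant vertices: in the Pauli frame, local complementation at $u$ conjugates by $\sqrt{-iX_u}\prod_{w\in N(u)}\sqrt{iZ_w}$. I expect that a pivot on $uv$ swaps $X\leftrightarrow Z$ at $u$ and at $v$ (up to $Y$ relabelings), giving $x'_v=1$. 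Once $x_v=1$, I would reduce $|\mathrm{supp}(x)|$. If some $w\neq v$ has $x_w=1$, then minimality prevents $\Gamma_x Z_c$ from factoring. I would then argue that a local complementation at $v$ (when $w\in N(v)$), or a pivot on an edge between $w$ and the rest, lets me rewrite $S$ in the new graph with a strictly smaller $x$-support that still contains $v$. For the measure of progress I would use $|\mathrm{supp}(x)|$ in the current graph, possibly combined lexicographically with $|\mathrm{supp}(x)\triangle N[v]|$. At the end $x=e_v$, so $S=\pm\Gamma_v$ and $N_G[v]=T$.

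The main obstacle is this shrinking step. I need to show that, whenever $\mathrm{supp}(x)\supsetneq\{v\}$ and $S$ is minimal, some LC or pivot strictly decreases the $x$-support while keeping $v$ in it. If a direct local argument gets stuck, I would switch to linear algebra over the Lagrangian subspace. The stabilizer group of $\ket{G_0}$ is the row space of $(I\mid A)$, and LC-equivalent graphs correspond to changing, at each qubit, which of $X,Y,Z$ is regarded as the $X$ coordinate, subject to the resulting matrix having the form $(I\mid A')$. Minimality of $S$ means that $\mathcal S$ restricted to $T$ contains no nonzero element supported strictly inside $T$. I would choose the local frame so that $S(w)$ becomes $Z$ for each $w\in T\setminus\{v\}$, $S(v)$ becomes $X$, and the coordinates outside $T$ are chosen so that the "$X$-part" projection of the Lagrangian is invertible. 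I expect this to be possible by the same two-out-of-three argument as in Lemma~\ref{lemma:bouchet}, applied to the non-$T$ coordinates. This is the approach of Brijder–Traldi (Theorem 5 of \cite{brijder2015isotropic}), and proving the invertibility step via minimality of $S$ is where I expect most of the work to go.
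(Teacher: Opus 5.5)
Your proposal has a genuine gap: the shrinking step, which is the whole content of the lemma, is only announced, and you never say where minimality is used. Your reduction to a graph state and your first step are fine. A pivot along $uv$ acts as a Hadamard on $u$ and $v$ and as Pauli $Z$'s elsewhere, so it does turn the $Z$ at $v$ into an $X$. The moves you name for shrinking do not work by themselves, though. With $S(v)=X$, complementing at $v$ leaves $|\mathrm{supp}(x)|$ unchanged, since it only swaps $X\leftrightarrow Y$ on neighbours. A pivot along $wz$ with $S(z)=Z$ merely trades $w$ for $z$.

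Write $A_\sigma$ for the sites where $S$ carries $\sigma$. The missing argument, which is the paper's, runs as follows. Complementing at a site of $A_Y$ turns it into $Z$ and only permutes $X/Y$ on its neighbours, so you can empty $A_Y$ while strictly enlarging $A_Z$. If then $|A_X|\ge 2$, either $G[A_X]$ has an edge, or some vertex outside $\mathrm{supp}(S)$ is adjacent to $A_X$. Complementing at an endpoint of that edge, or at that outside vertex, creates a new $Y$ without changing the support. If neither holds, every $u\in A_X$ has $N(u)\subseteq A_Z$, so $X_uZ_{N(u)}$ is a proper stabilizer substring of $S$. That contradicts minimality, and it is exactly where minimality enters.

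Insisting that $v$ stay in $\mathrm{supp}(x)$ throughout only creates trouble: if $v$ is the sole $Y$-site, complementing there ejects it. The paper ignores $v$ until $S=X_wZ_{N(w)}$, and only then pivots along the edge $vw$, which yields $X_vZ_{T\setminus\{v\}}=\Gamma_v$.

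Two of your auxiliary claims are also false.
\begin{itemize}
\item A minimal stabilizer is not determined by its support. For the Bell pair, $XX$, $ZZ$ and $-YY$ are all minimal with support $\{1,2\}$. Your argument only excludes a product with smaller support, not one with full support. The statement only concerns $N_G[v]$, so you do not need this claim.
\item Minimality does not mean that $\mathcal S$ has no nonidentity element supported strictly inside $T$. For the path $1$--$2$--$3$, $Z_1X_2Z_3$ is minimal, yet $X_1Z_2$ is a stabilizer. Minimality only constrains substrings of $S$.
\end{itemize}

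With the correct formulation, your fallback does go through, and it is a genuinely shorter route than the paper's. Let $P$ agree with $S$ on $T\setminus\{v\}$, be the identity off $T$, and carry some $P(v)\notin\{\mathbb{I},S(v)\}$. Take a nonidentity $R\in\hat P\cap\mathcal S$. If $R(v)=\mathbb{I}$, it is a proper substring of $S$. If $R(v)=P(v)$, it anticommutes with $S$ exactly at $v$. Either way there is a contradiction, so $\hat P\cap\mathcal S$ is trivial.

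Lemma~\ref{lemma:bouchet} then extends $P$ to a full-weight $Q$ with $\hat Q\cap\mathcal S$ trivial. Apply the local Clifford mapping $Q(i)\mapsto Z$ at every site and $S(v)\mapsto X$ at $v$. After phase gates on sites whose $X$-type generator carries a $Y$, this gives a graph state in which $S$ becomes $X_vZ_{T\setminus\{v\}}$, i.e.\ $N_G[v]=T$. As submitted, though, neither of your routes supplies this minimality step.
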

\begin{proof}

It is well known that quantum stabilizer states can be converted to quantum graph states via local Clifford operations \cite{hein2006entanglement}.  Conjugating a minimal stabilizer by a local Clifford operation will yield another minimal stabilizer of the transformed state.  So, the lemma reduces to showing that for any minimal stabilizer of a graph state, we can find a local Clifford operation which maps the graph state to another graph state, and at the same time conjugates the minimal stabilizer to a canonical generator of the graph state (i.e.,~a stabilizer of the form $X_i \prod_{j\in N(i)}Z_j$).  We will give a sequence of local complements and corresponding local Clifford operations which accomplishes this.  

A local complement of the graph at a vertex $w\in [n]$ is equivalent to the Clifford operation $U_w=\sqrt{-iX_w}\prod_{u\in N(w)}\sqrt{iZ_u}$.  Let us note the following behavior with respect to conjugation: 
\begin{align}\label{eq:conj_relations}
\sqrt{-iX} X \sqrt{-iX}^\dagger&=X,\\
\nonumber \sqrt{-iX} Y \sqrt{-iX}^\dagger&=Z,\\
\nonumber \sqrt{-iX} Z \sqrt{-iX}^\dagger&=-Y,\\
\nonumber \sqrt{iZ} X \sqrt{iZ}^\dagger&=-Y,\\
\nonumber \sqrt{iZ} Y \sqrt{iZ}^\dagger&=X,\\
\nonumber \sqrt{iZ} Z \sqrt{iZ}^\dagger&=Z.
\end{align}
We will exhibit a sequence of minimal stabilizers $\{S^{(0)}, S^{(1)}, ...S^{(k)}\}$ such that $S^{(i+1)}$ is obtained from $S^{(i)}$ by conjugation with some Clifford operation composed of unitaries of the form $U_w$.  Let us define $A_\sigma^{(i)} = \{j\in [n]: S^{(i)}(j)=\sigma\}\subseteq [n]$.  The sequence of minimal stabilizers will have the property that $S^{(k)}$ has $A_X^{(k)}=\{w\}$ and $A_Y^{(k)}=\emptyset$ so that $A_Z^{(k)}=\textrm{supp}(S^{(k)})\setminus \{w\}$.  Then an edge pivot \cite{mhalla2012graph} along the edge $(v, w)$ (which corresponds to $U_v U_w U_v$) converts $S^{(k)}$ to a canonical graph state generator as defined in the lemma, while ensuring that the final state is a graph state.

First, while $A_Y^{(i)}$ is nonempty we will choose an element $w \in A_Y^{(i)}$, and then map $S^{(i)}$ to $S^{(i+1)}:=U_w S^{(i)}U_w^\dagger$.  By \Cref{eq:conj_relations}, (1) $A_Z^{(i+1)}=A_Z^{(i)} \cup \{w\}$ and (2) $\left(A_X^{(i)} \cup A_Y^{(i)}\right)\setminus \{w\}=A_X^{(i+1)} \cup A_Y^{(i+1)}$.  So this has the effect of moving $w$ into $A_Z^{(i+1)}$ while shuffling elements of $A_X^{(i)}$ and $A_Y^{(i)}$ among each other. Proceeding in this way we can strictly increase the size of $A_Z^{(i)}$.

After complementing on available $Y$ operators we will arrive at some $S^{(j)}$ with $A_Y^{(j)}$ empty, but possibly with $|A_X^{(j)}| \geq 1$ (we cannot arrive at a minimal stabilizer with $A_X^{(j)}=\emptyset=A_Y^{(j)}$, since every non-identity stabilizer of a graph state contains at least one $X$ or $Y$ operator). Our goal will be to turn all but one of these $X$ operators into $Y$ operators, such that they in turn can be transformed into $Z$ operators. If $|A_X^{(j)}| =1 $ we are left with a canonical generator, so assume $|A_X^{(j)}| > 1$. Consider first the case where there are edges inside $G[A_X^{(j)}]$ (where $G$ is the graph after the local complementations). In this case, we can locally complement on a vertex incident to such an edge to produce a $Y$ operator, while keeping the $X$ operator on the vertex we complement on the same. Second, consider the case where $G[{A_X}^{(j)}]$ does not contain any edges, and there is some vertex $w \in A_\mathbb{I}^{(j)}$ which is adjacent to a vertex $u$ in $A_{X}^{(j)}$. In this case, implementing $U_w$ maps the $X$ operator on $u$ to $Y$.  

Finally, if $G[A_X^{(j)}]$ is edgeless and there is no $w \in A_\mathbb{I}^{(j)}$ that is adjacent to a $u \in A_X^{(j)}$, then all neighbors of any vertex $u\in A_X^{(j)}$ are contained in $A_{Z}^{(j)}$. This follows since $A_Y^{(j)}=\emptyset$, there are no edges in $G[A_X^{(j)}]$, and by assumption there are no neighbors in $A_\mathbb{I}^{(j)}$. However, this cannot occur since $X_uZ_{N(u)}$ is always a stabilizer of the corresponding graph state, so that $S^{(j)}$ would be a proper superstring of $X_uZ_{N(u)}$. The fact that it is a proper superstring follows from the assumption that $\left|A_X^{(j)}\right|>1$, while $X_uZ_{N(u)}$ has only one $X$ operator. This contradicts the assumption that $S^{(j)}$ is a minimal stabilizer.

As such, whenever $\left|A_{X}^{(j)}\right|>1$ it is always possible to strictly increase the size of the set $A_Z^{(j)}$, so that eventually we arrive at some $S^{(k)}$ satisfying $A_X^{(k)}=\{w\}$ and $A_Y^{(k)}=\emptyset$. We then have that $A_Z^{(k)}=\textrm{supp}(S^{(k)})\setminus \{w\}$, so that an edge pivot $U_v U_wU_v$ moves the support of the $X$ operator to qubit $v$, finishing the lemma.  

\end{proof}

\section{Most graph codes admit perfect strategies}\label{sec:gen_proof}

\subsection{Reduction to Partial Laplacian}\label{sec:part_lap}
Let $G$ be a random graph and let $\ket{G}$ be the corresponding graph state. Our goal will be to show that any of the $[[n, 1,d]]$ CWS codes associated to $\ket{G}$ will---with high probability---admit a perfect fusion strategy. 

From the discussion in Section \ref{sec:graph_codes}, this is equivalent to showing that for $G$ there exists w.h.p.~some $S$ such that $\Gamma_S:=\prod_{s\in S}\Gamma_s$ is full weight and that $\Gamma_S$ does not contain any proper non-trivial stabilizers. Note that any substring of $\Gamma_S$ that is also a stabilizer is necessarily of the form $\Gamma_R$ for some $R\subseteq S$. This can be seen as follows. First, every stabilizer is of the form $\Gamma_R$ for some arbitrary $R$. Second, $v\in R$ iff $\Gamma_R[v]\in \lbrace{ X, Y\rbrace}$, which means that if there exists a $v\in R$ that is not in $S$, then $\Gamma_R[v]\neq\Gamma_S[v]$ yet neither $\Gamma_R[v]$ nor $\Gamma_S[v]$ equals $I$.

Let us consider the subgraph induced by the vertices in $S$ and define $L_S\in \mathbb{F}_2^{|S| \times |S|}$ to be the Laplacian of the induced subgraph.  In this context a Laplacian has off-diagonal entries matching the adjacency matrix and the diagonal entry at vertex $v$ equals $\left|N(v)\right| \textrm{ mod } 2$.

\begin{lemma}\label{lem:part_lap}
    Let $\Gamma_S$ be full weight.  Then, $\Gamma_S$ has a proper non-trivial $\Gamma_R$ as a substring iff $\text{rank}(L_S) < |S|-1$.  
\end{lemma}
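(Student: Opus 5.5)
We work entirely in the symplectic (binary) picture. We write $\Gamma_R=\prod_{r\in R}\Gamma_r$ and compute its entry at each vertex $u$: the $X$-part is $\mathbf{1}\{u\in R\}$, and the $Z$-part is $|N(u)\cap R| \bmod 2$. As noted just before the lemma, any stabilizer substring of $\Gamma_S$ has the form $\Gamma_R$ with $R\subseteq S$. So the task is to characterise exactly when $\Gamma_R$ with $\emptyset\neq R\subsetneq S$ is a substring of $\Gamma_S$, and to show this is equivalent to a nontrivial linear condition on $\mathds{1}_R$.

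\textbf{Substring condition.} Fix $\emptyset\neq R\subsetneq S$ and compare $\Gamma_R$ and $\Gamma_S$ vertex by vertex.
\begin{itemize}
\item For $u\notin S$: $\Gamma_R(u)$ must be $I$ or equal $\Gamma_S(u)$. Both are $Z$-type or $I$, and $\Gamma_S(u)=Z$ by full weight, so there is no constraint.
\item For $u\in R$: $\Gamma_R(u)\in\{X,Y\}$ must equal $\Gamma_S(u)$. This means $|N(u)\cap R|\equiv |N(u)\cap S| \pmod 2$, i.e. $|N(u)\cap (S\setminus R)|$ is even.
\item For $u\in S\setminus R$: $\Gamma_S(u)\in\{X,Y\}$, while $\Gamma_R(u)\in\{I,Z\}$, so we need $\Gamma_R(u)=I$. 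This means $|N(u)\cap R|$ is even.
\end{itemize}
Hence $\Gamma_R$ is a substring of $\Gamma_S$ iff, inside the induced graph $G[S]$, every vertex of $R$ has an even number of neighbours in $S\setminus R$ and every vertex of $S\setminus R$ has an even number of neighbours in $R$. Only edges inside $G[S]$ enter, so the condition depends on $G[S]$ alone.

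\textbf{Translation to $L_S$.} By the cut computation already given in Section~\ref{sec:graph_codes}, this condition is exactly $L_S\mathds{1}_R=0$. For $u\in S$, the entry $(L_S\mathds{1}_R)_u$ equals the number of edges of $G[S]$ at $u$ crossing the cut $(R,S\setminus R)$, mod 2. To check this, we split $(L_S\mathds{1}_R)_u=\deg_{G[S]}(u)\mathbf{1}\{u\in R\}+|N(u)\cap R\setminus\{u\}|$ in the two cases $u\in R$ and $u\notin R$. Consequently a proper nontrivial stabilizer substring exists iff $\ker L_S$ contains a vector other than $0$ and $\mathds{1}_S$. Here we use that binary vectors are exactly indicator vectors $\mathds{1}_R$, and that $R=S$ gives $\mathds{1}_S$.

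\textbf{Rank count.} The vector $\mathds{1}_S$ always lies in $\ker L_S$, since each row sums to $2\deg \equiv 0$; in particular $\mathrm{rank}(L_S)\le |S|-1$. Thus $\dim\ker L_S\ge 1$, with equality iff $\ker L_S=\{0,\mathds{1}_S\}$. So a vector outside $\{0,\mathds{1}_S\}$ exists iff $\dim\ker L_S\ge 2$, i.e. iff $\mathrm{rank}(L_S)<|S|-1$.

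The main obstacle is the first step. We must handle the vertices outside $S$ correctly: this is precisely where full weight of $\Gamma_S$ is needed, to guarantee $\Gamma_S(u)=Z$ there. We must also handle the vertices in $R$ correctly: there the requirement is equality of the $Z$-parts rather than vanishing. Everything after that is bookkeeping.
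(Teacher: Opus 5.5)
Your proof is correct and follows essentially the same route as the paper. Both reduce the substring condition to the cut-parity condition $L_S\mathds{1}_R=0$ on $G[S]$, use full weight of $\Gamma_S$ only to absorb the $\{I,Z\}$ entries outside $S$, and read off the rank condition from $\mathds{1}_S\in\ker L_S$. The differences are cosmetic: the paper gets the condition on vertices of $R$ by noting that $\Gamma_{S\setminus R}$ must also be a substring, whereas you compare the $Z$-parts of $\Gamma_R$ and $\Gamma_S$ directly, and you also spell out the kernel-dimension count that the paper leaves implicit.
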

\begin{proof}
    First note that $L_S \mathds{1}_{|S|}=0$ so $L_S$ is never full rank. For any vector $v\in \mathbb{F}_2^{|S|}$, $(L_S  v)_i$ is the number of edges adjacent to $i$ in the induced subgraph which are in the cut naturally defined by $v$: If $v_i=1$ then $(L_S  v)_i=|\{j\in N(i): v_j=1\}|+|N(v)|=|\{j \in N(i): v_j=0\}|$ whereas if $v_i=0$ then $(L_S  v)_i=|\{j\in N(i): v_j=1\}|$.

    Now suppose that $\Gamma_S$ has a sub-stabilizer $\Gamma_R$ for $R\subset S$.  This implies that $\Gamma_{S\setminus R}$ is also a substring and disjoint from $\Gamma_R$.  This implies that the image of $L_S \mathds{1}_R$ restricted to $S\setminus R$ is $0$ and that the image of $L_S \mathds{1}_{S\setminus R}$ restricted to $R$ is $0$.  These two together imply that when we consider the cut in $G[S]$ defined by $R$ each vertex in $S$ has an even number of neighbors crossing the cut.  This in turn implies that $L_S\mathds{1}_R=0$.

    Now suppose that $L_S$ has some null vector defined by $\mathds{1}_R$ for $R\subset S$. Then $\mathds{1}_{S\setminus R}$ is also a null vector.  This implies that the action of $\Gamma_R$ on the vertices in $S\setminus R$ is $\mathbb{I}$ and that the action of $\Gamma_R$ matches $\Gamma_S$ on $R$.  The action of $\Gamma_R$ on vertices in $V\setminus S$ is either $\mathbb{I}$ or $Z$.  Since $\Gamma_S$ is full weight, $\Gamma_R$ is a substring.  
    
\end{proof}

\begin{lemma}\label{lem:sub_laplace}
    Let $G$ be a graph and $L$ be the corresponding Laplacian.  Let $L'$ be the submatrix obtained by deleting the first row and column from $L$.  Then $\text{rank}(L)=n-1$ iff $\text{rank}(L')=n-1$. 
\end{lemma}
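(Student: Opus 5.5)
The plan is to use the fact that $L$ is a symmetric matrix over $\mathbb{F}_2$ whose rows and columns all sum to zero, because $L\mathds{1}=0$ and $L$ is symmetric. The statement then reduces to a linear-algebra fact about such matrices: deleting one index does not change whether the corank is exactly one. Here $n$ is the number of vertices, so $L\in\mathbb{F}_2^{n\times n}$ and $L'\in\mathbb{F}_2^{(n-1)\times(n-1)}$, and the claim is that $\ker L=\langle \mathds{1}\rangle$ iff $L'$ is invertible.

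First, the forward direction, arguing by contraposition. Suppose $L'$ is singular, with $0\neq u\in\mathbb{F}_2^{n-1}$ and $L'u=0$. Extend $u$ to $\tilde u=(0,u)\in\mathbb{F}_2^n$. Then rows $2,\dots,n$ of $L\tilde u$ coincide with $L'u=0$. The first entry of $L\tilde u$ is $\sum_{i\ge2}L_{1i}u_i$. Since every column of $L$ sums to zero, we have $L_{1i}=\sum_{j\ge 2}L_{ji}$, so
\[
\sum_{i\ge 2}L_{1i}u_i=\sum_{j\ge2}(L'u)_j=0 .
\]
Hence $\tilde u\in\ker L$. But $\tilde u$ has first coordinate $0$ and is nonzero, so it is neither $0$ nor $\mathds{1}$. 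Together with $\mathds{1}$ it gives $\dim\ker L\ge 2$, so $\operatorname{rank} L\le n-2$.

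Second, the converse. Suppose $L'$ is invertible. Then the $n-1$ rows of $L$ indexed by $2,\dots,n$, restricted to columns $2,\dots,n$, are linearly independent, so $\operatorname{rank} L\ge n-1$. Since $L\mathds{1}=0$ always, $\operatorname{rank} L\le n-1$, and so equality holds.

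The step needing care is the first-coordinate computation in the forward direction, which relies on the zero-column-sum property of the binary Laplacian. I would verify this property from the definition: the diagonal entry $|N(v)| \bmod 2$ equals the sum of the off-diagonal entries in that column. That is the one place where the specific diagonal convention matters, and otherwise the argument is routine. The lemma as stated assumes the graph has $n$ vertices; in the paper's application this becomes $|S|$ for an induced subgraph, and the same argument applies verbatim with any deleted index in place of the first.
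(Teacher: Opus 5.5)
Your proof is correct and essentially the same as the paper's. The direction $\text{rank}(L)=n-1 \Rightarrow L'$ invertible matches the paper: you pad a null vector of $L'$ with a leading zero and use the vanishing column sums, which the paper phrases as $\mathds{1}^T L=0$ forcing $w\cdot v=0$. For the converse you use the rank lower bound from the invertible $(n-1)\times(n-1)$ submatrix, whereas the paper shifts a second null vector of $L$ by $\mathds{1}$ so that its first entry vanishes and restricts it to a null vector of $L'$; this is only a minor difference.
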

\begin{proof}
    As noted before, since $L$ is a Laplacian $\mathds{1}$ is always a null vector.  Hence if $L$ has a null vector $v$ which is distinct from $\mathds{1}$ then we can always find a null vector distinct from $\mathds{1}$ with a $0$ in the first entry.  This vector is automatically a null vector for $L'$.  This gives the $\Leftarrow$ direction.  

    Now let us assume that the matrix $L'$ has a null vector $v$.  The matrix $L$ is of the form $\begin{bmatrix}|w| & w^T\\w &  L' \end{bmatrix}$ for some vector $w$.  If $v\cdot{} w=0$ then $[0, v]$ is a null vector for $L$.  If $v\cdot{} w=1$ then by the symmetry of $L$: $0=\mathds{1}^T L \begin{bmatrix} 0 \\ v\end{bmatrix}=\mathds{1}^T \begin{bmatrix} w\cdot{} v \\0\end{bmatrix}=1$ so it cannot be true that $v\cdot{} w=1$.
    
\end{proof}

\subsection{Analysis of Random variables}

\begin{lemma}\cite[Theorem 0.1]{silverman2004symmetric}\label{lem:rand_sym}
    Let $A$ be a uniform random symmetric matrix with entries in $\mathbb{F}_2$.  Then the probability that $A$ is full rank is:
    \begin{equation}
        q_n=\prod_{j=1}^{\lceil n/2 \rceil} (1-2^{-(2 j-1)}) .
    \end{equation}
\end{lemma}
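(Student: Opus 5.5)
This is Silverman's result, which the paper cites rather than reproves; the plan below is a self-contained recursive count. Let $N_{n,r}$ be the number of symmetric $n\times n$ matrices over $\mathbb{F}_2$ of rank $r$. I will derive a recursion in $n$ by adding one row and column, and then extract the full-rank count.

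Write
\[
A=\begin{bmatrix} B & b\\ b^T & c\end{bmatrix},
\]
where $B$ is a uniform $(n-1)\times(n-1)$ symmetric matrix, $b\in\mathbb{F}_2^{n-1}$ and $c\in\mathbb{F}_2$ are uniform, and all three are independent. Condition on $\text{rank}(B)=r$. By a congruence $P^TBP$, which preserves uniformity of $b$ and leaves $c$ untouched, I may assume $B=\mathrm{diag}(B_0,0)$ with $B_0$ invertible of size $r$.

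The rank of $A$ then depends only on the split $b=(b_1,b_2)$:
\begin{itemize}
\item If $b_2\neq 0$, which happens with probability $1-2^{-(n-1-r)}$, then $\text{rank}(A)=r+2$. Indeed, the rows and columns through the nonzero entry of $b_2$ and the new coordinate form a hyperbolic $2\times 2$ block that can be split off.
\item If $b_2=0$, then $b_1=B_0x$ for a unique $x$, and clearing gives Schur complement $c+x^TB_0x$. This is uniform because $c$ is, so $\text{rank}(A)=r+1$ or $r$, each with probability $2^{-(n-1-r)}/2$.
\end{itemize}

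Let $p_n(r)=\PP[\text{rank}=r]$. The transitions above give
\[
p_n(r)=p_{n-1}(r-2)\bigl(1-2^{-(n-r+1)}\bigr)+p_{n-1}(r-1)\,2^{-(n-r+1)}+p_{n-1}(r)\,2^{-(n-r)}.
\]
Specializing to $r=n$, only the $r-1=n-1$ and $r-2=n-2$ terms survive:
\[
q_n=\tfrac12 q_{n-1}+\tfrac12\,p_{n-1}(n-2).
\]

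The main obstacle is that this recursion for $q_n$ involves the corank-one probability, so it does not close on its own. I would handle it by tracking $s_n=p_n(n-1)$ jointly with $q_n$. From the same recursion,
\[
s_n=\tfrac34 p_{n-1}(n-3)+\tfrac14 s_{n-1}+\tfrac12 q_{n-1}.
\]
This introduces corank two, so the cleaner route is to prove a closed form for every corank $j$ by induction and then read off $j=0$. The natural guess, modeled on counting nondegenerate forms, is
\[
p_n(n-j)=\frac{\prod_{i}\bigl(1-2^{-(2i-1)}\bigr)\cdot(\text{Gaussian binomial-type factor})}{\cdots}.
\]
I would determine the exact form from small cases: $q_1=1/2$, $q_2=1/2$, $q_3=\tfrac12\cdot\tfrac78$, which gives the pattern $q_{2m}=q_{2m-1}$. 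In particular, the recursion should force $p_{n-1}(n-2)=q_{n-1}$ when $n$ is even. I would then verify the guessed closed form against the recursion for each parity of $n$.

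As a fallback, I would avoid the recursion altogether and count directly. I would count invertible symmetric matrices as the number of nondegenerate symmetric bilinear forms on $\mathbb{F}_2^n$, using $|\GL_n(\mathbb{F}_2)|$ divided by the isometry-group orders. These are the known orders of symplectic and orthogonal groups in characteristic two, separating the alternating and non-alternating classes. Dividing the total by $2^{n(n+1)/2}$ should produce exactly $\prod_{j=1}^{\lceil n/2\rceil}(1-2^{-(2j-1)})$.
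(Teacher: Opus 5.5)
Your one-step analysis is correct. After the congruence $B\mapsto P^TBP$, the three transitions hold: rank $r+2$, $r+1$, $r$ with probabilities $1-2^{-(n-1-r)}$, $2^{-(n-r)}$, $2^{-(n-r)}$. So does the resulting recursion $q_n=\tfrac12 q_{n-1}+\tfrac12 p_{n-1}(n-2)$. The paper itself gives no proof of this lemma; it imports it from Silverman.

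The gap is that your argument stops exactly where the content is: you never determine $p_{n-1}(n-2)$. The corank-$j$ closed form is left as a placeholder. The identity $p_{n-1}(n-2)=q_{n-1}$ for even $n$ and the pattern $q_{2m}=q_{2m-1}$ are read off from $n\le 3$, not proved. The fallback only asserts that an orbit count will come out right, and it is less routine than you suggest. The isometry group of the non-alternating form $x^Ty$ on $\mathbb{F}_2^n$ is not the orthogonal group of a quadratic form. For $n=4$ it has order $48$, which is neither $|O_4^+(2)|=72$ nor $|O_4^-(2)|=120$, so its order would need its own derivation. As written, nothing in the proposal forces the product formula.

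The missing idea is to count corank one directly. This closes your recursion without ever tracking higher coranks.
\begin{itemize}
\item A symmetric $n\times n$ matrix of corank one has $\ker A=\langle v\rangle$ for one of $2^n-1$ nonzero vectors $v$.
\item The congruence $A\mapsto QAQ^T$ satisfies $\ker(QAQ^T)=Q^{-T}\ker A$, exactly as in the proof of Lemma~\ref{lem:corank_small}. It therefore bijects the matrices with kernel exactly $\langle v\rangle$ onto those with kernel exactly $\langle e_n\rangle$.
\item The latter are precisely $\operatorname{diag}(B_0,0)$ with $B_0$ an invertible symmetric $(n-1)\times(n-1)$ matrix.
\end{itemize}
Hence $p_n(n-1)=(2^n-1)\,2^{n(n-1)/2}q_{n-1}/2^{n(n+1)/2}=(1-2^{-n})q_{n-1}$, and your recursion becomes
\[
q_n=\tfrac12 q_{n-1}+\tfrac12\bigl(1-2^{-(n-1)}\bigr)q_{n-2},\qquad q_0=1,\quad q_1=\tfrac12 .
\]
Induction on parity now finishes the proof. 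For even $n$, the hypothesis $q_{n-1}=(1-2^{-(n-1)})q_{n-2}$ gives $q_n=q_{n-1}$. For odd $n$, the hypothesis $q_{n-1}=q_{n-2}$ gives $q_n=(1-2^{-n})q_{n-1}$. This is precisely the new factor $j=(n+1)/2$ in $\prod_{j=1}^{\lceil n/2\rceil}(1-2^{-(2j-1)})$.
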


We will use the convention $q_t=1$ for $t \leq 0$.

\begin{lemma}\label{lem:q_convergence}
    Define 
    $$
    q_{\infty}=\prod_{j=1}^{\infty} (1-2^{-(2 j-1)})\approx 0.42.
    $$ 
    Then, for $c \in (0, 1)$, 
    $$q_{\lfloor cn \rfloor} = q_{\infty}\bc{1+\Theta(2^{-cn})}.$$
\end{lemma}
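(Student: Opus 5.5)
We need to show $q_{\lfloor cn\rfloor} = q_\infty(1+\Theta(2^{-cn}))$. Write $m=\lfloor cn\rfloor$ and $M=\lceil m/2\rceil$. By \Cref{lem:rand_sym},
$$
\frac{q_m}{q_\infty}=\prod_{j=M+1}^{\infty}\bc{1-2^{-(2j-1)}}^{-1}.
$$
This is a tail of a convergent infinite product, so the whole proof reduces to bounding that tail above and below.

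For the upper bound, take logarithms. For $x\in(0,1/2]$ we have $x\le -\ln(1-x)\le 2x$, and every factor in the tail has $x_j=2^{-(2j-1)}\le 1/2$. Hence
$$
\sum_{j>M}2^{-(2j-1)}\;\le\;\ln\frac{q_m}{q_\infty}\;\le\;2\sum_{j>M}2^{-(2j-1)}.
$$
The geometric sum equals $\frac{2}{3}\,4^{-M}$. Since $2M\in\{m,m+1\}$, we get $4^{-M}=\Theta(2^{-m})$. Also $m\ge cn-1$, so $2^{-m}=\Theta(2^{-cn})$ with constants independent of $n$. Therefore $\ln(q_m/q_\infty)=\Theta(2^{-cn})$.

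Exponentiating finishes the argument. For $0\le y\le 1$ we have $y\le e^y-1\le 2y$, so $q_m/q_\infty = 1+\Theta(2^{-cn})$. The conventions for $q_t$ with $t\le 0$ only affect finitely many $n$, which the $\Theta$ absorbs.

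The only point needing care is the bookkeeping of the floor and the ceiling, which contributes at most a factor of $4$. The lower bound is just as important as the upper bound, since $\Theta$ asserts both. I do not expect any real obstacle here. Numerically, $q_\infty\approx 0.419$, consistent with the stated $0.42$.
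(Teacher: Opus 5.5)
Your proof is correct and follows the same route as the paper: write $q_{\lfloor cn\rfloor}/q_\infty$ as the tail product $\prod_{j>\lceil \lfloor cn\rfloor/2\rceil}(1-2^{-(2j-1)})^{-1}$, take logarithms, sum the geometric series, and exponentiate. The only difference is that you use explicit two-sided bounds ($x\le -\ln(1-x)\le 2x$ and $y\le e^y-1\le 2y$) where the paper uses Taylor expansions with $O(\cdot)$ error terms, so your version makes both directions of the $\Theta$ explicit.
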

\begin{proof}
    Consider the ratio 
    \begin{align*}
    \frac{q_{\lfloor cn\rfloor}}{q_{\infty}} &= \frac{\prod_{j=1}^{\lceil \lfloor cn\rfloor/2 \rceil} (1-2^{-(2 j-1)})}{\prod_{j=1}^{\infty} (1-2^{-(2 j-1)})}\\
    &=\prod_{j=\lceil \lfloor cn\rfloor/2 \rceil+1} ^{\infty}\bc{1-2^{-(2 j-1)}}^{-1}.
    \end{align*}
    Taking the log and using Taylor expansion of log, we get
    \begin{align*}
    \log\frac{q_{\lfloor cn\rfloor}}{q_{\infty}} &= \sum_{j=\lceil \lfloor cn\rfloor/2 \rceil+1}^\infty - \log\bc{1-2^{-(2j-1)}} \\
    &= \sum_{j=\lceil \lfloor cn\rfloor/2 \rceil+1}^\infty \bc{2^{-(2j-1)} + O(2^{-2(2j-1)})} \\
    &= \Theta(2^{-cn}) + O(2^{-2cn}) = \Theta(2^{-cn}).\end{align*}
    Finally, 
    \[\frac{q_{\lfloor cn\rfloor}}{q_{\infty}} = e^{\Theta(2^{-cn})} = 1+\Theta(2^{-cn}).\]
\end{proof}

\begin{lemma}\label{lem:worst_case_rank}
    Let $M=\begin{bmatrix} A & B\\ B^T & C\end{bmatrix}$ be a symmetric matrix where $B\in \mathbb{F}_2^{m\times p}$ and $C\in\mathbb{F}_2^{p\times p}$ are fixed but $A\in \mathbb{F}_2^{m\times m}$ is uniformly sampled over all symmetric matrices.  Then, $\mathbb{P}[M \text{ is full rank}] \leq q_{m-\text{corank}(C)}$ .
\end{lemma}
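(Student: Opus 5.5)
Let $r=\text{rank}(C)$ and $t=\text{corank}(C)=p-r$. The idea is to eliminate the fixed block $C$ on its nonsingular part, which reduces $M$ to a random symmetric matrix plus a fixed correction on $\mathbb{F}_2^m$. Then we show that the $t$ null directions of $C$ force a rank deficiency unless $B$ compensates, and when it does the full-rank event lives inside a random symmetric matrix of size $m-t$.

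\textbf{Step 1 (normalising $C$).} Choose an invertible $P\in\GL_p(\mathbb{F}_2)$ with
\[
P^T C P=\begin{bmatrix} C_1 & 0\\ 0 & 0\end{bmatrix},
\]
where $C_1$ is $r\times r$ and invertible. This is possible because the radical of the symmetric form $C$ splits off, so no alternating/nonalternating subtlety arises. Conjugating $M$ by $\mathrm{diag}(I_m,P)$ preserves full rank and the distribution of $A$. So we may assume $C=\mathrm{diag}(C_1,0)$, and we write $B=[B_1\ B_2]$ with $B_2\in\mathbb{F}_2^{m\times t}$.

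Next, perform the congruence that clears $B_1$ using $C_1$ (a Schur complement). This replaces $A$ by $A'=A+B_1C_1^{-1}B_1^T$. Since $B_1C_1^{-1}B_1^T$ is a fixed symmetric matrix, $A'$ is again uniform over symmetric matrices. Then $M$ is full rank iff
\[
M'=\begin{bmatrix} A' & B_2\\ B_2^T & 0\end{bmatrix}
\]
is full rank.

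\textbf{Step 2 (the bordered matrix).} If $B_2$ has rank less than $t$, then some nonzero $y$ satisfies $B_2y=0$. In that case $(0,y)$ is a null vector of $M'$, the probability is $0$, and the bound is trivial. Otherwise $B_2$ has full column rank $t\le m$. Choose $R\in\GL_m$ with $R^TB_2=\begin{bmatrix} I_t\\ 0\end{bmatrix}$ and conjugate by $\mathrm{diag}(R,I_t)$; the matrix $R^TA'R$ is still uniform symmetric. Now write the top-left block as $\begin{bmatrix} D & E\\ E^T & F\end{bmatrix}$, with $D$ of size $t\times t$ and $F$ of size $(m-t)\times(m-t)$.

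The matrix has the block form with rows $(D,E,I)$, $(E^T,F,0)$, $(I,0,0)$. Using the identity blocks, row and column operations eliminate $D$ and $E$. Hence it is full rank iff $F$ is full rank. Since $F$ is a principal submatrix of a uniform symmetric matrix, it is uniform symmetric of size $m-t$. Lemma~\ref{lem:rand_sym} then gives probability exactly $q_{m-t}$, with the convention $q_{\le 0}=1$ covering $t\ge m$. This yields $\mathbb{P}\le q_{m-\corank(C)}$.

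\textbf{Main obstacle.} The delicate point is Step 1: justifying that a symmetric matrix over $\mathbb{F}_2$ is congruent to $C_1\oplus 0$ with $C_1$ invertible, and checking that every change of basis keeps the $A$-block uniformly distributed. For the first, take a complement of $\ker C$; the restriction of $C$ to that complement is nondegenerate. For the second, each operation either adds a fixed symmetric matrix to $A$ or conjugates $A$ by a fixed invertible matrix, and both are bijections on symmetric matrices.
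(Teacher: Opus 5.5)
Your proof is correct and follows essentially the same route as the paper. Both arguments split $C$ by congruence into a nonsingular block plus a zero block, take a Schur complement on the nonsingular part (absorbing the fixed correction into the uniform $A$), note that the probability is zero unless the remaining border block has full column rank, and then normalize that block to $\begin{bmatrix} I \\ 0 \end{bmatrix}$, which leaves a uniform symmetric block of size $m-\corank(C)$. Your explicit row and column elimination simply spells out the paper's terse final ``Schur complement'' step.
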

\begin{proof}
    Let $r=\text{rank}(C)$.  $M$ is full rank iff $\begin{bmatrix} \mathbb{I} & 0 \\ 0 & Q\end{bmatrix} M \begin{bmatrix} \mathbb{I} & 0 \\ 0 & Q^T\end{bmatrix}$ is full rank for any full rank $Q$.  There exists a matrix $Q$ such that $Q C Q^T=\begin{bmatrix} 0 & 0\\ 0 & C' \end{bmatrix}$ where $C'$ is a full rank $r\times r$ matrix.  So instead of $M$ we can consider $\begin{bmatrix} A & B_1 & B_2 \\B_1^T & 0 & 0\\ B_2^T  & 0 & C' \end{bmatrix}$.  $C'$ is full rank, so we can take the Schur complement on that block to infer that $M$ is full rank iff $M'=\begin{bmatrix} A +B_2 (C')^{-1} B_2^T & B_1 \\ B_1^T & 0\end{bmatrix}$ is full rank.  $B_2 (C')^{-1} B_2^T$ is some fixed symmetric matrix and $A$ is a random uniform symmetric matrix so the probability that $M'$ is full rank is the same as the probability that $M''=\begin{bmatrix}A & B_1 \\ B_1^T & 0 \end{bmatrix}$ is full rank for a uniform random symmetric $A$.  If $B_1$ does not have full column rank then the probability that $M''$ is full rank is zero and the bound holds trivially.  Otherwise, by the standard reduced row-echelon form we can find invertible matrices $W_1$ and $W_2$ such that $W_1 B_1 W_2^T=\begin{bmatrix}\mathbb{I} \\0 \end{bmatrix}$.  The matrix $M''$ is full rank iff the matrix $M'''=\begin{bmatrix} W_1 & 0 \\ 0 & W_2 \end{bmatrix} M''\begin{bmatrix} W_1^T & 0 \\ 0 & W_2^T \end{bmatrix}=\begin{bmatrix} W_1 A W_1^T & \begin{matrix} \mathbb{I} \\ 0\end{matrix}\\\begin{matrix} \mathbb{I} & 0\end{matrix} & 0\end{bmatrix}$ is full rank.  Again the distribution of $W_1 A W_1^T$ is the same distribution as $A$ so $M'''$ is full rank iff $\begin{bmatrix}  A  & \begin{matrix} \mathbb{I} \\ 0\end{matrix}\\\begin{matrix} \mathbb{I} & 0\end{matrix} & 0\end{bmatrix}$ is full rank.  We can then use a Schur complement to finish the proof.  
\end{proof}

\begin{lemma}\label{lem:d_d_transpose}
    Let $D\in \mathbb{F}_2^{m \times p}$ be uniformly random.  Let $v\in \mathbb{F}_2^m$ and $w\in \mathbb{F}_2^p$ be fixed.  Then,
    \begin{align}
        \mathbb{P}[D\mathds{1}=v, w=D^T \mathds{1}]=\begin{cases}
            2^{-(m+p-1)} \text{ if } |w|\equiv|v| \pmod{2},\\
            0 \text{  o.w.  }
        \end{cases}
    \end{align}
\end{lemma}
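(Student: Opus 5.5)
The plan is to view the map $D\mapsto (D\mathds{1}, D^T\mathds{1})$ as an $\mathbb{F}_2$-linear map $\phi:\mathbb{F}_2^{m\times p}\to \mathbb{F}_2^m\times\mathbb{F}_2^p$ and compute its image. Because $D$ is uniform, the pushforward of the uniform distribution under a linear map is uniform on the image: every fiber of $\phi$ over a point of the image is a coset of $\ker\phi$. Hence $\mathbb{P}[\phi(D)=(v,w)]$ equals $1/|\mathrm{im}\,\phi|$ if $(v,w)\in\mathrm{im}\,\phi$ and $0$ otherwise. It therefore suffices to show
\[
\mathrm{im}\,\phi=\{(v,w): |v|\equiv|w| \pmod 2\},
\]
which is a subspace of size $2^{m+p-1}$.

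\textbf{Step 1: necessity of the parity condition.} Compute $\mathds{1}^T D\mathds{1}$ in two ways. It equals $\mathds{1}^T(D\mathds{1})=|v| \bmod 2$, and also $(D^T\mathds{1})^T\mathds{1}=|w| \bmod 2$. So every point in the image satisfies $|v|\equiv|w|$.

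\textbf{Step 2: sufficiency.} Given $(v,w)$ with equal parity, construct a $D$ explicitly. Take $D=v e_1^T+e_1 w^T+c\,e_1e_1^T$, where $e_1$ denotes the first standard basis vector in the appropriate dimension, and choose $c\in\mathbb{F}_2$ to fix the $(1,1)$ entry. Checking row sums: row $i\neq 1$ has sum $v_i$. Row $1$ has sum $v_1+|w|+c$, so we need $c=|w|$.

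Checking column sums: column $j\neq 1$ has sum $w_j$. Column $1$ has sum $|v|+w_1+c$, which equals $w_1$ iff $c=|v|$.

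The two requirements on $c$ agree exactly when $|v|\equiv|w|$. This shows the image is the full parity subspace.

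\textbf{Step 3: conclude.} The parity subspace is the kernel of the nonzero functional $(v,w)\mapsto |v|+|w|$, so it has size $2^{m+p-1}$. The probability formula follows from the coset argument above.

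No step is a real obstacle. The only care needed is in the explicit construction of Step 2, where the corner entry must be chosen consistently. If $m$ or $p$ is $0$ the statement degenerates, and I would assume $m,p\ge 1$.
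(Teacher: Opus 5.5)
Your proof is correct and is essentially the paper's argument: both treat $D\mapsto(D\mathds{1},D^T\mathds{1})$ as a linear system over $\mathbb{F}_2$, obtain the parity obstruction from $\sum_i r_i(D)=\sum_j c_j(D)$, and conclude that the probability is $2^{-\mathrm{rank}}=2^{-(m+p-1)}$ on consistent targets and $0$ otherwise. The paper computes the rank dually, by showing that the only linear dependency among the $m+p$ row- and column-sum functionals is the all-ones combination (from $a_i+b_j=0$ for all $i,j$), whereas you identify the image directly via the explicit preimage $v e_1^T+e_1w^T+c\,e_1e_1^T$. Your route has the minor advantage of exhibiting consistency explicitly rather than through the left-kernel criterion, and your caveat $m,p\geq 1$ is equally needed in the paper's step that forces all $a_i,b_j$ to be equal.
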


\begin{proof}
    Let $r_i(D) = \sum_j D_{ij}$ and $c_j(D) = \sum_i D_{ij}$ denote the $i$th row sum and the $j$th column sum modulo 2, respectively.  The event is the system of $m+p$ affine equations $r_i(D)=v_i, c_j(D)=w_j$ in the $mp$ entries of $D$.  
    Since over $\mathbb{F}_2$, $\sum_i r_i(D)=\sum_j c_j(D)$ identically, the system is inconsistent unless $|v|\equiv|w| \pmod 2$.  If this congruence fails, the probability is 0.
    Thus we now assume $|v|\equiv|w|$.  To compute the rank of these $m+p$ equations, we ask if there is a nontrivial linear combination of the row-sum equations and column-sum equations that is identically zero.  Thus, consider
    \[\sum_i a_ir_i + \sum_j b_jc_j = 0, \quad a_i,b_j\in\mathbb{F}_2.\]  This equation means that for any $D$, this linear combination is zero.  We expand the equation with the definitions of $r_i(D)$ and $c_j(D)$:
    \begin{align*}
    \sum_ia_ir_i(D) + \sum_jb_jc_j(D) &= \sum_ia_i\sum_jD_{ij} + \sum_jb_j\sum_iD_{ij}\\
    &= \sum_{i,j}a_iD_{ij} + \sum_{i,j}b_jD_{ij} = \sum_{i,j}(a_i+b_j)D_{ij}.
    \end{align*}  Since this identity holds for every $D$, every coefficient is 0, so $a_i+b_j=0$.  This implies $a_i=b_j$ for every $i,j$.  Thus all $a_i$ and all $b_j$ are equal to a common element of $\mathbb{F}_2$.  The space of linear dependencies among the $m+p$ equations is one-dimensional and therefore the rank is $m+p-1$.
    Since the parity condition holds, the affine system is consistent.
    Therefore its solution set is an affine subspace of
    $\mathbb F_2^{m\times p}$ of dimension
    \[
        mp-(m+p-1).
    \]
    Since $D$ is uniform over $2^{mp}$ matrices, the probability of the event is
    \[
        \frac{2^{mp-(m+p-1)}}{2^{mp}}
        =
        2^{-(m+p-1)}.
    \]
\end{proof}
\begin{lemma}\label{lem:corank_small}
    Let $A\in \mathbb{F}_2^{m\times m}$ be a uniform random symmetric matrix.  For any $\delta \in (0, 1)$ there exists some constant $c$ independent of $m$ such that $\mathbb{P} [corank(A) \geq \delta m ] \leq 2^{-c m^2}$.
\end{lemma}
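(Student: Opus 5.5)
We will bound the probability that the kernel of $A$ is large by a union bound over candidate kernels. If $\corank(A)\geq k$ with $k:=\lceil \delta m\rceil$, then $\ker A$ contains a $k$-dimensional subspace $U\subseteq \mathbb{F}_2^m$. It therefore suffices to bound, for each fixed $k$-dimensional $U$, the probability $\mathbb{P}[AU=0]$, and then multiply by the number of such subspaces.

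\textbf{Step 1 (fixed subspace).} Fix a $k$-dimensional $U$ and choose an invertible $W$ whose first $k$ columns span $U$. Since $A\mapsto W^TAW$ is a bijection on symmetric matrices, $W^TAW$ is again uniform symmetric. The condition $AU=0$ becomes the condition that the first $k$ columns of $W^TAW$ vanish. Those columns consist of the $k\times k$ top-left symmetric block, with $k(k+1)/2$ free entries, together with the $(m-k)\times k$ off-diagonal block, with $k(m-k)$ free entries. These entries are independent uniform bits, so
\[
\mathbb{P}[AU=0]=2^{-k(k+1)/2-k(m-k)}\leq 2^{-km+k^2/2}.
\]

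\textbf{Step 2 (union bound).} The number of $k$-dimensional subspaces of $\mathbb{F}_2^m$ is the Gaussian binomial $\binom{m}{k}_2\leq 4\cdot 2^{k(m-k)}$ (or crudely $\leq 2^{km}$ divided by the number of bases, which is at least $2^{k^2}/4$). Combining with Step 1,
\[
\mathbb{P}[\corank(A)\geq k]\leq 4\cdot 2^{k(m-k)}\cdot 2^{-km+k^2/2}=4\cdot 2^{-k^2/2}.
\]
Since $k\geq \delta m$, this is $4\cdot 2^{-\delta^2 m^2/2}$. Absorbing the constant $4$ gives the bound $2^{-cm^2}$ with, e.g., $c=\delta^2/4$, valid for $m$ larger than a threshold depending on $\delta$. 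Finitely many small $m$ are handled by shrinking $c$: for $m\geq 1$ the event has probability strictly less than $1$ (the identity matrix has corank $0$), so a sufficiently small $c>0$ covers them.

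\textbf{Main obstacle.} The only delicate point is the exponent bookkeeping. The symmetric constraint yields only $k(k+1)/2$ bits in the diagonal block rather than $k^2$. The union bound must therefore use the Gaussian binomial count $2^{k(m-k)+O(1)}$ rather than the naive $2^{km}$, since the latter would cancel the gain entirely. With the correct count, a net $2^{-k^2/2}$ survives, which is quadratic in $m$ as required.
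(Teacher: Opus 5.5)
Your proof is correct and follows essentially the same route as the paper: a union bound over $k$-dimensional subspaces, congruence invariance to reduce to the first $k$ columns vanishing (which fixes $km-\binom{k}{2}$ free bits), and the Gaussian-binomial bound $\binom{m}{k}_2\le 4\cdot 2^{k(m-k)}$, giving $4\cdot 2^{-\Theta(k^2)}$. Your explicit handling of the finitely many small $m$ closes a small gap the paper leaves open, since it only asserts the bound once $m$ is large. You use that the event has probability below $1$ because the identity has corank $0$, and then shrink $c$.
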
  
\begin{proof}
If $\corank(A)\geq k$ then $\ker A$ contains some $k$-dimensional subspace. Thus,
\[
    \cbc{\corank(A)\geq k} \subseteq
    \bigcup_{\substack{V\subseteq \FF_2^m \\ \dim V=k}} \cbc{V\subseteq \ker A},
\]
and a union bound yields
\[
    \PP\brk{\corank(A)\geq k}
    \leq \sum_{\substack{V\subseteq \FF_2^m \\ \dim V=k}} \PP\brk{V\subseteq \ker A}.
\]
We now show that the summand depends only on $\dim V$, so that it suffices to evaluate it for a single convenient $V$. This follows from the invariance of the distribution of $A$ under congruence. 
For invertible $Q\in\FF_2^{m\times m}$ set
$\tau_Q(A)=QAQ^{T}$. Then $\tau_Q$ preserves symmetry, since
$(QAQ^T)^T=QA^TQ^T=QAQ^T$, and $\tau_{Q_1}\circ\tau_{Q_2}=\tau_{Q_1Q_2}$, so
$Q\mapsto\tau_Q$ is a homomorphism on $\GL_m(\FF_2)$; in particular
$\tau_Q\circ\tau_{Q^{-1}}=\tau_I=\id$, so each $\tau_Q$ is a bijection of the set of
symmetric $m\times m$ matrices over $\FF_2$. A bijection of a finite set carries the
uniform distribution to itself, so $\tau_Q(A)$ has the same law as $A$.
The action interacts with kernels in the expected way: $\ker \tau_Q(A)=Q^{-T}\ker A$,
where $Q^{-T}:=(Q^T)^{-1}$. Indeed, for $x\in\FF_2^m$,
\[
    QAQ^{T}x=0
    \iff A\bc{Q^{T}x}=0
    \iff Q^{T}x\in\ker A
    \iff x\in Q^{-T}\ker A,
\]
where the first equivalence is given by invertibility of $Q$.
Let $V,W$ be any two $k$-dimensional subspaces of $\FF_2^m$. Choose an invertible
$R$ with $RV=W$ (extend bases of $V$ and $W$ to bases of $\FF_2^m$) and put $Q:=R^{-T}$,
so that $Q^{-T}=R$. Since $R$ is injective,
\[
    V\subseteq\ker A
    \iff RV\subseteq R\ker A
    \iff W\subseteq Q^{-T}\ker A
    \iff W\subseteq \ker \tau_Q(A),
\]
and therefore
\[
    \PP\brk{V\subseteq\ker A}
    =\PP\brk{W\subseteq \ker\tau_Q(A)}
    =\PP\brk{W\subseteq\ker A},
\]
where the last equality follows by the distributional invariance just established. Note that only
the subspace is being normalized here. $A$ itself remains uniform, so no canonical
form for symmetric matrices over $\FF_2$ is required.
We may thus take $V=\textrm{span}(e_1,\dots,e_k)$, for which $V\subseteq\ker A$ says exactly
that the first $k$ columns of $A$ vanish. A symmetric matrix is determined by its
free entries $\cbc{A_{ij}: i\leq j}$, which are i.i.d.\ uniform bits, and those forced
to zero are the entries with $i\leq j$ and $i\leq k$, of which there are
$\sum_{i=1}^{k}(m-i+1)=km-\binom{k}{2}$. Hence
\[
    \PP\brk{V\subseteq\ker A}=2^{-\bc{km-\binom{k}{2}}}.
\]
Finally, the number of $k$-dimensional subspaces of $\FF_2^m$ is the Gaussian
binomial coefficient
\[
    \binom{m}{k}_{2}=\prod_{i=0}^{k-1}\frac{2^m-2^i}{2^k-2^i}
    \leq C\cdot 2^{k(m-k)},
    \qquad C=\prod_{i\geq 1}\bc{1-2^{-i}}^{-1}<4 ,
\]
so that
\[
    \PP\brk{\corank(A)\geq k}\leq 4\cdot 2^{k(m-k)}\cdot 2^{-km+\binom{k}{2}}
    = 4\cdot 2^{-k^2+\frac{k(k-1)}{2}}
    = 4\cdot 2^{-k(k+1)/2}.
\]
Taking $k=\lceil \delta m\rceil\geq \delta m$ gives $k(k+1)/2\geq \delta^2m^2/2$, so
$\PP\brk{\corank(A)\geq\delta m}\leq 4\cdot 2^{-\delta^2m^2/2}$, which is at most
$2^{-cm^2}$ for
any $c<\delta^2/2$ 
once $m$ is large.
\end{proof}

\begin{theorem}\label{thm:gen_perfect_technical}
    For each $S\subseteq V$ let $\mathbf{L}_S\in \mathbb{F}_2^{|S|\times |S|}$ be the Laplacian of the subgraph induced by $S$ and let $\mathbf{C}_S \in \mathbb{F}_2^{(n-|S|) \times |S|}$ be the cut matrix associated to the subset $S$.  Let $s=\lfloor n/2 \rfloor$, let $G$ be an $n$-vertex graph sampled uniformly and define 
    \begin{equation}
        \mathbf{N}=\sum_{S:|S|=s} \mathbf{1}\left\{\textrm{rank}(\mathbf{L}_S)=s-1, \mathds{1}_{V\setminus S}=\mathbf{C}_S \mathds{1}_S\right\}.
    \end{equation}
    Then,
    \begin{align}
        \mathbb{P}[\mathbf{N} \geq 1] \geq 1-O(\sqrt{n}2^{- n/144}).
    \end{align}
\end{theorem}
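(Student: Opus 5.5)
The plan is a second moment argument on $\mathbf{N}$, finished by Chebyshev: $\mathbb{P}[\mathbf{N}=0]\le \mathrm{Var}(\mathbf{N})/\mathbb{E}[\mathbf{N}]^2$. For the event inside the indicator to be the right one, note that $\mathds{1}_{V\setminus S}=\mathbf{C}_S\mathds{1}_S$ says every vertex outside $S$ has an odd number of neighbours in $S$. This is exactly the statement that $\Gamma_S$ is full weight, so by \Cref{lem:part_lap} each counted $S$ yields the desired stabilizer.

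\textbf{First moment.} Fix $S$ with $|S|=s$.
\begin{itemize}
\item The cut condition involves only the edges between $S$ and $V\setminus S$. These are $n-s$ independent row-parity constraints, so it holds with probability $2^{-(n-s)}$.
\item For the rank condition, use \Cref{lem:sub_laplace}: $\mathrm{rank}(\mathbf{L}_S)=s-1$ iff the principal submatrix $\mathbf{L}'_S$, obtained by deleting a fixed vertex $u\in S$, is nonsingular.
\item The off-diagonal entries of $\mathbf{L}'_S$ are the internal edges of $S\setminus\{u\}$. Each diagonal entry is a degree parity inside $G[S]$, which includes the independent uniform edge to $u$. Hence, even conditioned on all cut edges, $\mathbf{L}'_S$ is a uniform symmetric $(s-1)\times(s-1)$ matrix.
\end{itemize}
The two events are therefore independent. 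By \Cref{lem:rand_sym} and \Cref{lem:q_convergence},
$$\mathbb{E}[\mathbf{N}]=\binom{n}{s}2^{-(n-s)}q_{s-1}=\Theta\bigl(2^{n/2}/\sqrt n\bigr).$$

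\textbf{Second moment.} Write $\mathbb{E}[\mathbf{N}^2]=\sum_{S,T}\mathbb{P}[E_S\cap E_T]$ and group the pairs by $k=|S\cap T|$. For uniformly random $T$, $k$ is hypergeometric with mean about $n/4$. By Chvátal's tail bound, pairs with $|k-n/4|>\epsilon n$ form an exponentially small fraction of all pairs. For those pairs I would use only the cut conditions, together with crude bounds on $\mathbb{P}[E_S\cap E_T]$ coming from \Cref{lem:d_d_transpose}. The choice of $\epsilon$ (for example $\epsilon=1/12$, so that $2\epsilon^2 n=n/72$, halved somewhere) is what should produce the exponent $n/144$.

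For typical pairs the goal is
$$\mathbb{P}[E_S\cap E_T]\le \mathbb{P}[E_S]\,\mathbb{P}[E_T]\,(1+O(2^{-\delta n}))+2^{-\Omega(n^2)}.$$
I would establish this in two steps.
\begin{enumerate}
\item \textbf{Joint cut conditions.} Partition $V$ into $S\cap T$, $S\setminus T$, $T\setminus S$ and the rest $R$. The constraints on vertices of $R$ involve edges from $R$ to $S$ and from $R$ to $T$, and these factor exactly. The cross block $D$ between $S\setminus T$ and $T\setminus S$ must satisfy prescribed row and column parities, after shifting by the independent edges to $S\cap T$. By \Cref{lem:d_d_transpose} the probability is $2^{-(m+p-1)}$ under a parity compatibility that holds with probability $1/2$ over the other edges. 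The product therefore matches $2^{-(n-s)}\cdot 2^{-(n-s)}$ up to a factor $1+o(1)$, and I expect this factor to be exactly $1$ after averaging.
\item \textbf{Joint rank conditions.} Condition on every edge except the internal edges of $S\setminus T$ and of $T\setminus S$. These two families are disjoint, independent, and untouched by the cut events apart from diagonal shifts. Apply \Cref{lem:worst_case_rank} to $\mathbf{L}'_S$, with $A$ the block on $S\setminus T$ and $C$ the now-fixed block on $S\cap T$ (its entries depend only on conditioned edges). This gives conditional probability at most $q_{m-\corank(C)}$.
\end{enumerate}
By \Cref{lem:corank_small}, $\corank(C)\le \delta' n$ except with probability $2^{-\Omega(n^2)}$. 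Then \Cref{lem:q_convergence} gives $q_{m-\delta' n}=q_\infty(1+O(2^{-cn}))$, and the same holds for $T$, so the conditional probability of both rank events is at most $q_\infty^2(1+O(2^{-cn}))$. Comparing with $q_{s-1}^2$ then yields $\mathrm{Var}(\mathbf{N})/\mathbb{E}[\mathbf{N}]^2=O(\sqrt n\,2^{-n/144})$. The diagonal terms $S=T$ contribute $1/\mathbb{E}[\mathbf{N}]$, which is negligible.

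\textbf{Main obstacle.} The delicate step is the typical-overlap correlation. The diagonal entries of each Laplacian depend on the cut edges, the cross block $D$, and the edges into the intersection, so the rank events and cut events are not literally independent. The conditioning order must keep $A$ uniform (up to a fixed shift, which \Cref{lem:worst_case_rank} absorbs via its Schur-complement argument) while every dependency is pushed into the fixed blocks $B$ and $C$. I would first try conditioning on everything except the strictly internal off-diagonal edges of $S\setminus T$ and $T\setminus S$, and check that the resulting diagonal shifts are deterministic.
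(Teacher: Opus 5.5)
Your plan is correct and is essentially the paper's argument. It has the same first moment $\binom{n}{s}2^{-(n-s)}q_{s-1}$ and a second moment split by overlap, with Chv\'atal's hypergeometric tail for atypical pairs (the paper's half-width $n/24$ gives $4\delta^2 n=n/144$, and your wider window only improves the exponent). It uses \Cref{lem:d_d_transpose} to get $\mathbb{P}[\textrm{Cut}_S\cap\textrm{Cut}_T]=2^{-2(n-s)}$ exactly, and \Cref{lem:worst_case_rank} with \Cref{lem:corank_small} on the intersection block for typical pairs.

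The one detail to pin down is that the vertex deleted via \Cref{lem:sub_laplace} must be chosen in $S\setminus T$; the paper deletes $1\in S\setminus I$ and $n-|I|\in R\setminus I$. The random edges to that vertex are what keep the $S\setminus T$ block exactly uniform once everything else is fixed. Also, $\mathbf{L}_S$ never involves $D$ or the cut edges of $S$. The only couplings are through edges inside $S\cap T$ and edges between $S\setminus T$ and $S\cap T$, and your conditioning freezes both.
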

\begin{proof}
    For each $S$, $\mathbf{L}_S$ and $\mathbf{C}_S$ are independent.  By \Cref{lem:sub_laplace}, $rank(\mathbf{L}_S)=s-1$ iff the submatrix obtained by deleting the first row and column is full rank.  This submatrix is a uniformly random symmetric matrix so it is full rank with probability $q_{s-1}$ by \Cref{lem:rand_sym}.  $\mathds{1}_{V\setminus S}=\mathbf{C}_S \mathds{1}_S$ with probability $2^{-(n-s)}$, so 
    $$
\mathbb{P}[\textrm{rank}(\mathbf{L}_S)=|S|-1, \mathds{1}_{V\setminus S}=\mathbf{C}_S \mathds{1}_S]=2^{-(n-s) }q_{s-1}.  
    $$
    
    It follows that 
    $$
\mathbb{E}[\mathbf{N}]=\binom{n}{s}2^{-(n-s) }q_{s-1}=\binom{n}{s}2^{-\lceil n/2\rceil }q_{s-1}.
    $$

    Now we will demonstrate an estimate on $\mathbb{E}[\mathbf{N}^2]$ with the goal of using the second moment method to complete the proof.  Let us fix $S$ and $R$ and let their intersection be $I=S\cap R$.  We will first focus on the ``typical'' case when $|I|\in[n/4-\delta n, n/4+\delta n]$ for $\delta=1/24$. We will need to define a number of events.  Let,
    \begin{align}
        \textrm{Cut}_S&=\{\mathds{1}_{V\setminus S}=\mathbf{C}_S \mathds{1}_S\},\\
        \textrm{Cut}_R&=\{\mathds{1}_{V\setminus R}=\mathbf{C}_R \mathds{1}_R\},\\
        \textrm{Lap}_S&=\{\textrm{rank}(\mathbf{L}_S)=|S|-1\},\\
        \textrm{Lap}_R&=\{\textrm{rank}(\mathbf{L}_R)=|R|-1\},\\
        \textrm{Good}&=\textrm{Cut}_S\cap \textrm{Cut}_R \cap \textrm{Lap}_S \cap \textrm{Lap}_R.
    \end{align}

    Our goal is to find an upper bound on $\mathbb{P}[\textrm{Good}]$.  By symmetry, we can assume $S=\{1, 2, ..., s\}$ and $R=\{s-|I|+1, ..., n-|I|\}$.  Let $\mathbf{Adj}$ be the adjacency matrix of the graph $G$.  We need to define the following submatrices:

    \begin{align}
        \mathbf{A}&=\mathbf{L}_S[S\setminus\{1, I\}],\\
        \mathbf{B}&=\mathbf{L}_S[S\setminus\{1, I\}, I],\\
        \mathbf{C}&=\mathbf{L}_S[I],\\
        \mathbf{C}'&=\mathbf{L}_R[I],\\
        \mathbf{E}&=\mathbf{L}_R[I, R\setminus\{I, n-|I|\}],\\
        \mathbf{F}&=\mathbf{L}_R[R\setminus\{I, n-|I|\}],\\
        \mathbf{D}&=\mathbf{Adj}[S\setminus I, R\setminus I],\\
        \mathbf{H}_1&=\mathbf{Adj}[[n]\setminus (S\cup R), S\setminus I],\\
        \mathbf{H}_2&=\mathbf{Adj}[[n]\setminus (S\cup R), I],\\
        \mathbf{H}_3&=\mathbf{Adj}[[n]\setminus (S\cup R), R\setminus I].
    \end{align}

\begin{figure}
\centerfloat
\includegraphics[width=4in]{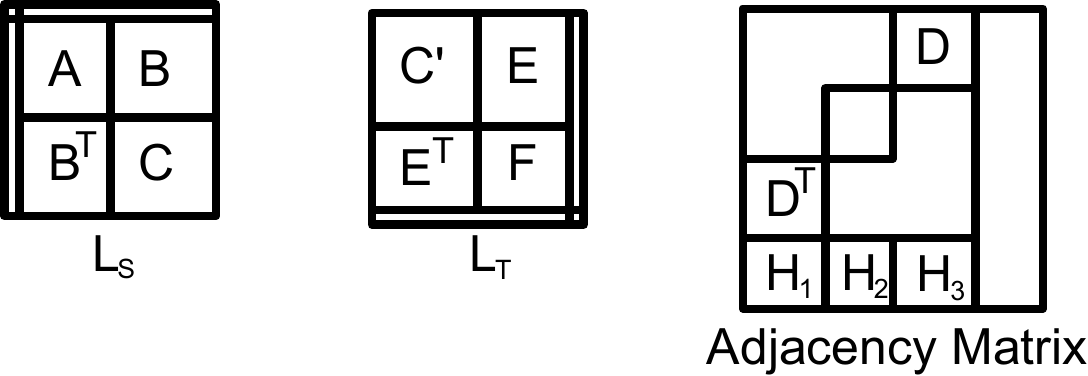}
\caption{Relevant submatrices for the proof of \Cref{thm:gen_perfect_technical}.}\label{fig:submatrices}
\end{figure}

 We can illustrate these submatrices with a picture (\Cref{fig:submatrices}). The matrices $\mathbf{C}$ and $\mathbf{C}'$ are the submatrices of $\mathbf{L}_S, \mathbf{L}_R$ corresponding to the intersection $I$.  They have the same off-diagonal elements but potentially distinct diagonal elements.  $G$ is random so the marginal distributions of $\begin{bmatrix} \mathbf{A} & \mathbf{B} \\ \mathbf{B}^T & \mathbf{C}\end{bmatrix}$ and $\begin{bmatrix} \mathbf{C}' & \mathbf{E} \\ \mathbf{E}^T & \mathbf{F} \end{bmatrix}$ are both uniform over the set of symmetric matrices since we have excluded the first row and column from each Laplacian.  

We want an upper bound on $\mathbb{E}[\mathbf{N}^2]$, which will require an upper bound on $\mathbb{P}[\textrm{Good}]$.  By the union bound, $\mathbb{P}[(\text{rank}(C) <|I|/2) \cup (\text{rank}(C') <|I|/2)] \leq 2 \mathbb{P}[\text{rank}(C) <|I|/2]$.  Since we are assuming that $|I|$ is $\Omega(n)$ for now, \Cref{lem:corank_small} implies this event is bounded as $O(2^{-k n^2})$ for some constant $k$.  Let $U$ be the event $\left(\text{rank}(C) \geq |I|/2 \right) \cap \left(\text{rank}(C') \geq |I|/2\right)$.  Then, for large enough $n$,

\begin{align}
    \mathbb{P}\left[\textrm{Good} \right] \leq \mathbb{P}\left[\textrm{Good} |U\right]+O(2^{-kn^2}).
\end{align}

Now we will condition on the values of $\mathbf{B}$, $\mathbf{C}$, $\mathbf{C}'$ and $\mathbf{E}$.  
\begin{align}
    \mathbb{P}\left[\textrm{Good} |U\right] = \sum_{C, C', B, E} \mathbb{P}\left[\textrm{Good} \left|\begin{matrix} U, \\ \mathbf{C}=C, \\ \mathbf{C'}=C', \\ \mathbf{B}=B,\\ \mathbf{E}=E\end{matrix}\right.\right] \mathbb{P}\left[\begin{matrix}\mathbf{C}=C, \\ \mathbf{C'}=C', \\ \mathbf{B}=B,\\ \mathbf{E}=E\end{matrix}\text{\textscale{100}{|}} U\right].
\end{align}
Note that $\textrm{Lap}_S$, $\textrm{Lap}_R$ and $\textrm{Cut}_S \cap \textrm{Cut}_R$ are conditionally independent given $\mathbf{B}$, $\mathbf{C}$, $\mathbf{C}'$ and $\mathbf{E}$:  This holds because the ranks of $\mathbf{L}_S$ and $\mathbf{L}_R$ are functions of $\mathbf{A}$ and $\mathbf{F}$ respectively and $\mathbf{A}$ and $\mathbf{F}$ are independent.  Similarly, the values achieved by $\mathbf{C}_S \mathds{1}_S$ and $\mathbf{C}_R \mathds{1}_R$ (conditioned on $\mathbf{B}$ and $\mathbf{E}$) are functions of $\mathbf{D}$, $\mathbf{H}_1$, $\mathbf{H}_2$ and $\mathbf{H}_3$ which are all independent.  Hence, 

\begin{align}
    \mathbb{P}\left[\textrm{Good}\left|\begin{matrix} U, \\ \mathbf{C}=C, \\ \mathbf{C'}=C', \\ \mathbf{B}=B,\\ \mathbf{E}=E\end{matrix}\right.\right]= \mathbb{P}\left[\textrm{Lap}_S\left|\begin{matrix} U, \\ \mathbf{C}=C, \\ \mathbf{C'}=C', \\ \mathbf{B}=B,\\ \mathbf{E}=E\end{matrix}\right.\right]\cdot \mathbb{P}\left[\textrm{Lap}_R\left|\begin{matrix} U, \\ \mathbf{C}=C, \\ \mathbf{C'}=C', \\ \mathbf{B}=B,\\ \mathbf{E}=E\end{matrix}\right.\right]
    \cdot{}\mathbb{P}\left[\textrm{Cut}_S \cap \textrm{Cut}_R\left|\begin{matrix} U, \\ \mathbf{C}=C, \\ \mathbf{C'}=C', \\ \mathbf{B}=B,\\ \mathbf{E}=E\end{matrix}\right.\right].
\end{align}
By \Cref{lem:sub_laplace}, $\textrm{Lap}_S$ is equivalent to the matrix $\begin{bmatrix}\mathbf{A} & \mathbf{B} \\ \mathbf{B}^T & \mathbf{C} \end{bmatrix}$ being full rank.  Hence, we can apply \Cref{lem:worst_case_rank} to conclude that
$$
\mathbb{P}\left[\textrm{Lap}_R\left|\begin{matrix} U, \\ \mathbf{C}=C, \\ \mathbf{C'}=C', \\ \mathbf{B}=B,\\ \mathbf{E}=E\end{matrix}\right.\right],\,\,\,\,\mathbb{P}\left[\textrm{Lap}_S\left|\begin{matrix} U, \\ \mathbf{C}=C, \\ \mathbf{C'}=C', \\ \mathbf{B}=B,\\ \mathbf{E}=E\end{matrix}\right.\right] \leq q_{s-\lceil 3 |I|/2\rceil-1},
$$  
where we used the fact that $q_i \geq q_{i+1}$.  The bound so far is then:
\begin{align}
    \mathbb{P}\left[\textrm{Good} |U\right] \leq q_{s-\lceil 3 |I|/2\rceil -1}^2\sum_{C, C', B, E} \mathbb{P}\left[\textrm{Cut}_S \cap \textrm{Cut}_R\left|\begin{matrix} U, \\ \mathbf{C}=C, \\ \mathbf{C'}=C', \\ \mathbf{B}=B,\\ \mathbf{E}=E\end{matrix}\right.\right] 
    \mathbb{P}\left[\begin{matrix}\mathbf{C}=C, \\ \mathbf{C'}=C', \\ \mathbf{B}=B,\\ \mathbf{E}=E\end{matrix}\text{\textscale{100}{|}} U\right]\\
    =q_{s-\lceil 3 |I|/2\rceil-1}^2\mathbb{P}[\textrm{Cut}_S \cap \textrm{Cut}_R|U].
\end{align}

For any event $W$, if $\mathbb{P}[\neg U]$ is small then $\mathbb{P}[W|U] \leq \mathbb{P}[W] +2 \mathbb{P} [\neg U]$.  So we can bound
\begin{align}
    \mathbb{P}[\textrm{Cut}_S \cap \textrm{Cut}_R|U] = \mathbb{P}[\textrm{Cut}_S \cap \textrm{Cut}_R] +O(2^{-k n^2}).
\end{align}
Hence we need to evaluate $\mathbb{P}[\textrm{Cut}_S \cap \textrm{Cut}_R]$.  Let $\mathbf{B}'=\mathbf{Adj}[S\setminus I, I]=\mathbf{L}_S[S\setminus I, I]$ and let $\mathbf{E}'=\mathbf{Adj}[I, R\setminus I]$ (these are just $\mathbf{B}$ and $\mathbf{E}$ with the ``missing'' elements included).  $\mathbf{B}'$, $\mathbf{E}'$ and $\mathbf{D}$ are all uniform random rectangular matrices, and are all independent since we have removed any kind of conditioning.  We can write:
\begin{align}
    \mathbb{P}[\textrm{Cut}_S \cap \textrm{Cut}_R] =\sum_{v, w, x} \mathbb{P}\left[\textrm{Cut}_S \cap \textrm{Cut}_R \left| \begin{matrix} \mathbf{B}'\mathds{1}=v, \\ (\mathbf{E}')^T\mathds{1}=w, \\ \mathbf{H}_2 \mathds{1} =x\end{matrix}\right.\right] \mathbb{P}\left[\begin{matrix}\mathbf{B}'\mathds{1}=v, \\(\mathbf{E}')^T\mathds{1}=w, \\ \mathbf{H}_2 \mathds{1} =x\end{matrix}\right].
\end{align}
Conditioned on the values above, we satisfy the cut conditions exactly when $\mathbf{D}\mathds{1}=\mathds{1}+v$, $\mathbf{D}^T \mathds{1}=\mathds{1}+w, \mathbf{H}_1 \mathds{1}=\mathds{1}+x$ and $\mathbf{H}_3 \mathds{1}=\mathds{1}+x$.  $\mathbf{H}_1$, $\mathbf{H}_3$ and $\mathbf{D}$ are mutually independent so \Cref{lem:d_d_transpose} implies: 
\begin{align}
    \mathbb{P}\left[\textrm{Cut}_S \cap \textrm{Cut}_R\left| \begin{matrix} \mathbf{B}'\mathds{1}=v, \\ (\mathbf{E}')^T\mathds{1}=w, \\ \mathbf{H}_2 \mathds{1} =x\end{matrix}\right.\right] = \begin{cases}
        2^{-(2n-2s-1)} \text{ if |w|=|v|} \mod 2\\
        0 \text{  o.w.  }
    \end{cases}.
\end{align}
The values of $|v|$ and $|w|$ are uniform and independent so we can compute $\mathbb{P}[\textrm{Cut}_S \cap \textrm{Cut}_R] = 2^{-(2n-2s)} $.  
Now we can bring everything together.  
\begin{align*}
    \mathbb{P}[\textrm{Good}] \leq \mathbb{P}\left[\textrm{Good} |U\right]+O(2^{-kn^2})= O(2^{-k n^2})+\sum_{C, C', B, E} \mathbb{P}\left[\textrm{Good} \left|\begin{matrix} U, \\ \mathbf{C}=C, \\ \mathbf{C'}=C', \\ \mathbf{B}=B,\\ \mathbf{E}=E\end{matrix}\right.\right] \mathbb{P}\left[\begin{matrix}\mathbf{C}=C, \\ \mathbf{C'}=C', \\ \mathbf{B}=B,\\ \mathbf{E}=E\end{matrix}\text{\textscale{100}{|}} U\right]\\
    \leq O(2^{-k n^2})+ q_{s-\lceil 3 |I|/2\rceil-1}^2 \sum_{C, C', B, E} \mathbb{P}\left[\textrm{Cut}_S \cap \textrm{Cut}_R\left|\begin{matrix} U, \\ \mathbf{C}=C, \\ \mathbf{C'}=C', \\ \mathbf{B}=B,\\ \mathbf{E}=E\end{matrix}\right.\right]\mathbb{P}\left[\begin{matrix}\mathbf{C}=C, \\ \mathbf{C'}=C', \\ \mathbf{B}=B,\\ \mathbf{E}=E\end{matrix}\text{\textscale{100}{|}} U\right]\\
    = O(2^{-k n^2})+q_{s-\lceil 3 |I|/2\rceil-1 }^2\mathbb{P}\left[\textrm{Cut}_S \cap \textrm{Cut}_R|U\right]\\
    \leq O(2^{-k n^2})+q_{s-\lceil 3 |I|/2\rceil -1}^2\mathbb{P}\left[\textrm{Cut}_S \cap \textrm{Cut}_R\right]\\
    =O(2^{-k n^2})+q_{s-\lceil 3 |I|/2\rceil-1}^2 2^{-(2n-2s)}.
\end{align*}

Now we will establish bounds for the ``non-typical'' cases.  If $S=R$ or if $S\cap R =\emptyset$ then $\mathbb{P}[\textrm{Good}] \leq \mathbb{P}[\textrm{Cut}_S]=2^{-(n-s)} $ since $\mathbf{C}_S$ is a uniform random $(n-s) \times s$ matrix in this case.  If $0 < |I|< n/2$, but $I$ is outside of the ``typical'' interval then the analysis for $\mathbb{P}[\textrm{Cut}_S \cap \textrm{Cut}_R]$ above still holds so we can derive $\mathbb{P}[\textrm{Good}] \leq \mathbb{P}[\textrm{Cut}_S \cap \textrm{Cut}_R] = 2^{-(2n-2s)}$.  The general bound we derive is:
\begin{align}
    \mathbb{P}[\textrm{Good}]\leq \begin{cases}
    O(2^{-k n^2})+q_{s-\lceil 3 |I|/2\rceil-1 }^2 2^{-(2n-2s)} \text{ if $|I|\in [n/4-\delta n, n/4+\delta n]$}\\
        2^{-(n-s)}\leq 2^{-n/2} \text{ if $|I|=0$ or $|I|=s$}\\
        2^{-(2n-2s)}\leq 2^{-n} \text{ o.w. }
    \end{cases}
\end{align}

Now we will turn to an upper bound for $\mathbb{E}[\mathbf{N}^2]$.  Let us define $f(\Delta)$ as the probability of $\textrm{Good}$ when $S$ and $R$ have overlap size $\Delta$.  
\begin{align}\label{eq:bound_on_N2}
    \nonumber \mathbb{E}[\mathbf{N}^2]=\binom{n}{s}\sum_{\Delta=0}^{s} f(\Delta)\binom{s}{\Delta} \binom{n-s}{s-\Delta} \\
    \nonumber \leq \binom{n}{s} \bigg(O(n2^{-n/2}) + \sum_{\Delta \in[n/4-\delta n, n/4+\delta n]\cap \mathbb{Z}} (O(2^{-k n^2})+q_{s-\lceil 3 |I|/2\rceil-1}^2 2^{-(2n-2s)})\binom{s}{\Delta} \binom{n-s}{s-\Delta}   \\
    +2^{-n}\sum_{\Delta \notin \{0, n/2,[n/4-\delta n, n/4+\delta n]\cap \mathbb{Z}\}}\binom{s}{\Delta} \binom{n-s}{s-\Delta} \bigg).
\end{align}
Well-known bounds \cite{CHVATAL1979285} establish
\begin{align}
    \sum_{\Delta\notin [n/4-\delta n, n/4 + \delta n]\cap \mathbb{Z}}  \binom{s}{\Delta} \binom{n-s}{s-\Delta} = O(2^{n(1-4\delta^2)}).
\end{align}
Note that $q_i \geq q_{i+1}$ so in the typical region we can uniformly upper bound $q_{s-\lceil 3 |I|/2\rceil-1  } \leq q_{\lfloor n(1/8-3 \delta /2)-3 \rfloor }=q_{\lfloor n/16 -3\rfloor}\leq q_{\lfloor n/17\rfloor}$ for large enough $n$.  Using Vandermonde's identity and plugging these facts into \Cref{eq:bound_on_N2} we obtain:

\begin{align}
    \nonumber \mathbb{E}[\mathbf{N}^2] \leq \binom{n}{s}\left( O(n2^{-n/2})+q_{\lfloor n/17\rfloor}^2 2^{-(2n-2s)} \binom{n}{s}+O(2^{-4 \delta^2 n})\right)\\
    =\binom{n}{s} \left( O(2^{-4 \delta^2 n})+q_{\lfloor n/17\rfloor}^2 2^{-(2n-2s)} \binom{n}{s}\right).
\end{align}

Now we can compute
\begin{align}
    \frac{\mathbb{E}[\mathbf{N}]^2}{\mathbb{E}[\mathbf{N}^2]} \geq \frac{\left(\binom{n}{s}2^{-\lceil n/2\rceil }q_{s-1}\right)^2}{\binom{n}{s} \left( O(2^{-4 \delta^2 n})+q_{\lfloor n/17\rfloor}^2 2^{-(2n-2s)} \binom{n}{s}\right)} \geq \frac{q_{s-1}^2}{q_{\lfloor n/17\rfloor }^2}-O(\sqrt{n}2^{-4 \delta^2 n})
\end{align}

By \Cref{lem:q_convergence}, both $s-1$ and $\lfloor n(1/8-3\delta/2)-3\rfloor$ are linear in $n$, so $q_{s-1}=q_\infty\bc{1+\Theta(2^{-n/2})}$ and $q_{\lfloor n/17\rfloor}=q_\infty\bc{1+\Theta(2^{-n/17})}$.  Thus 
\[
    \frac{q_{s-1}^2}{q_{\lfloor n/17\rfloor}^{2}}=\frac{q_\infty^2\bc{1+\Theta(2^{-n/2})}}{q_\infty^2\bc{1+\Theta(2^{-n/17})}}=1-O(2^{-n/17}).
\]
Paley-Zygmund gives
\[
    \PP\brk{\mathbf N\geq 1} \geq \frac{\EE\brk{\mathbf N}^2}{\EE\brk{\mathbf N^2}}=1-O(\sqrt{n}2^{-n/144}).
\]
\end{proof}

\end{document}